\setlist[enumerate]{leftmargin=.5in}
\setlist[itemize]{leftmargin=.5in}
\crefname{hypothesis}{Hypothesis}{Hypotheses}
\title{Growing stripes, with and without wrinkles\thanks{
\funding{This work was partially supported by NSF grant DMS--1612441.}}}
\author{M. Avery\thanks{University of Minnesota, Minneapolis, MN 
  (\email{avery142@umn.edu})}
\and R. Goh\thanks{Boston University, Boston, MA 
  (\email{rgoh@bu.edu})}
\and O. Goodloe\thanks{Arizona State University, Tempe, AZ
  (\email{ogoodloe@asu.edu})}
\and A. Milewski\thanks{University of Bristol, Bristol, UK 
  (\email{am16053@my.bristol.ac.uk})}
\and A. Scheel\thanks{University of Minnesota, Minneapolis, MN 
  (\email{scheel@umn.edu},\url{www.umn.edu/\~scheel})}}
\newlist{todolist}{itemize}{2}
\setlist[todolist]{label=$\square$}
 \newcommand{\R}{\mathbb{R}}
\newcommand{\Z}{\mathbb{Z}}
\newcommand{\rmd}{\mathrm{d}}
\newcommand{\rme}{\mathrm{e}}
\newcommand{\rmi}{\mathrm{i}}
\newcommand{\rmO}{\mathrm{O}}
\newcommand{\sign}{\mathrm{sign}}
\newcommand{\eps}{\varepsilon}
\newcommand{\rr}{\rho}
\renewcommand{\Re}{\mathrm{Re}\,}
\begin{document}

\maketitle

\begin{abstract}
 We present results on stripe formation in the Swift-Hohenberg equation with a directional quenching term. Stripes are ``grown'' in the wake of a moving parameter step line, and we analyze how the orientation of stripes changes depending on the speed of the quenching line and on a lateral aspect ratio. We observe stripes perpendicular to the quenching line, but also stripes created at oblique angles, as well as periodic wrinkles created in an otherwise oblique stripe pattern. Technically, we study stripe formation as traveling-wave solutions in the Swift-Hohenberg equation and in reduced Cahn-Hilliard and Newell-Whitehead-Segel  models, analytically, through numerical continuation, and in direct simulations. \end{abstract}

\begin{keywords}
  Swift-Hohenberg, Cahn-Hilliard, stripe selection, zigzag instabilities, growing domains
\end{keywords}

\begin{AMS}
  35B36, 37C29, 35B32
\end{AMS}

\section{Introduction}

Striped phases appear in a plethora of contexts, from sand \cite{sand} and icicle ripples \cite{icicle}, to convection roll  \cite{bodenschatz} and precipitation patterns \cite{thomas}, to bacterial colony growth \cite{eshel} or the formation of presomites in early development \cite{digit}, and to ion-beam milling \cite{bradley}, dip-coating \cite{dipstripe}, lamellar crystal growth \cite{double,zigzageutectic}, and water jet cutting \cite{friedrich}. Simple understanding of such patterns is often based on the weak instability of a trivial, spatially constant state against perturbations that are periodic in space with wavenumbers close to a critical wavenumber $k_\mathrm{c}>0$. While the selection of this specific wavenumber may be due to quite different and complex physical mechanisms, the resulting phenomena, at least for weak instabilities, often bear a striking resemblance. In many cases, phenomena are well captured by simple, universal models such as the Swift-Hohenberg equation \cite{sh},
\begin{equation}
u_t = -(1+\Delta)^2u + \rr u - u^3 \label{e:sh}
\end{equation}
posed on $(x,y)\in \R^2$, with $0<\rr\ll 1$. The linear part will enhance Fourier modes $\rme^{\rmi( k_x x + k_y y)}$ with $k=\sqrt{k_x^2+k_y^2}\sim k_\mathrm{c}= 1$, and the nonlinear part leads to (local) saturation and competition of these modes, such that the locally dominant observed pattern is of the form $\rme^{\rmi( k_x x + k_y y)}+c.c.$, $k\sim 1$, for some orientation of the wave vector $(k_x,k_y)$ depending on space $(x,y)$. Indeed, starting the system with random initial conditions in a large domain leads to quite complex, incoherent structures; see Figure \ref{f:1}. One mostly observes such sinusoidal stripe patterns with some locally chosen orientation, but these local stripe domains are bordered by a plethora of defects, including grain boundaries, dislocations, and disclinations \cite{pismen}. 

\begin{figure}\centering
\includegraphics[width=.9\textwidth]{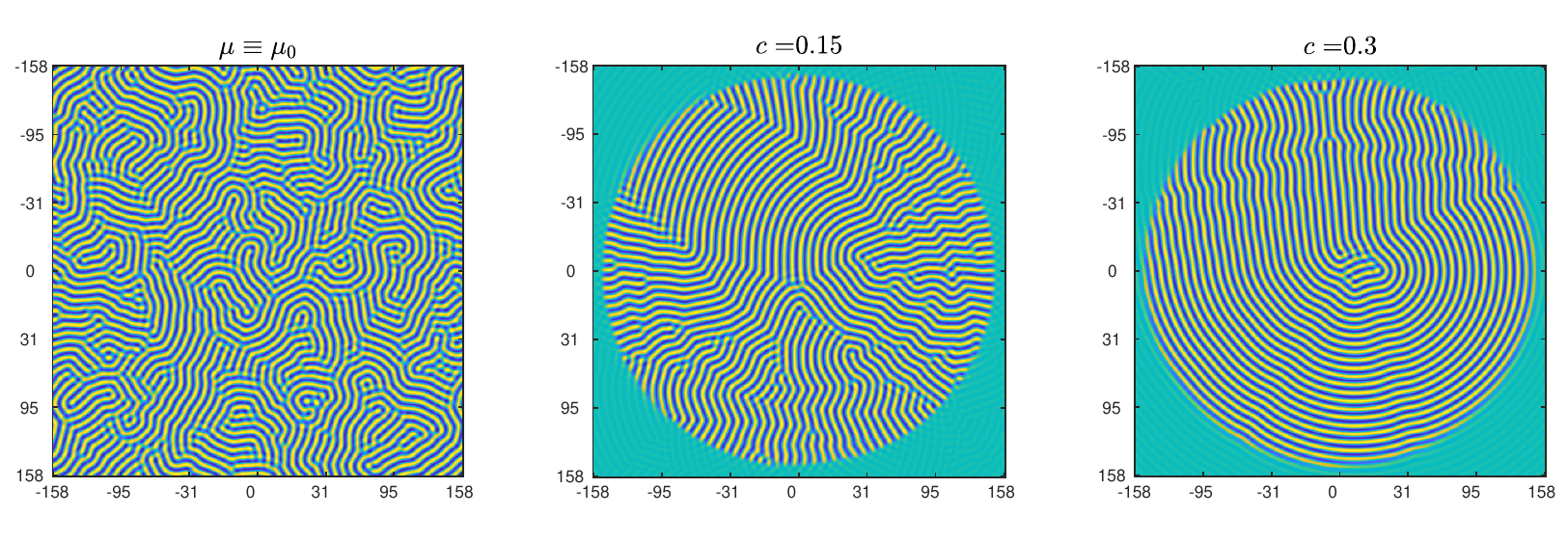}
\caption{Patterns in the Swift-Hohenberg equation \eqref{e:sh} from random initial conditions with constant $\rr\equiv 0.25$ (left), compared to patterns resulting from directional quenching $\rr=\mu \sign(c t-|(x,y)|)$, $\mu=0.25$ (center, left). Notice the predominant parallel orientation  of stripes relative to the quenching boundary for large speeds and the wrinkly structures created for smaller speeds.}\label{f:1}
\end{figure}

\begin{figure}[h!]\centering
	\subfigure{
		\includegraphics[width=.48\textwidth]{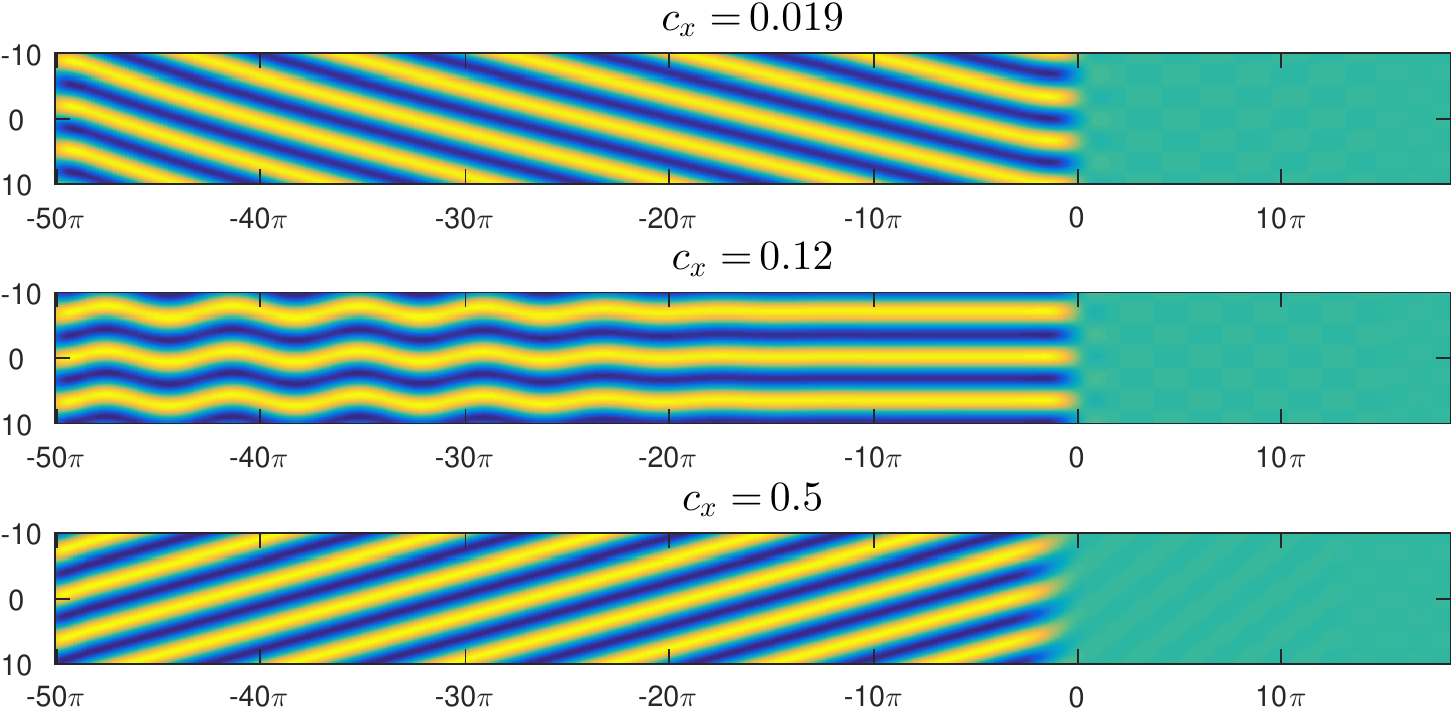}}
	\hfill
	\subfigure{
		\includegraphics[width=.48\textwidth]{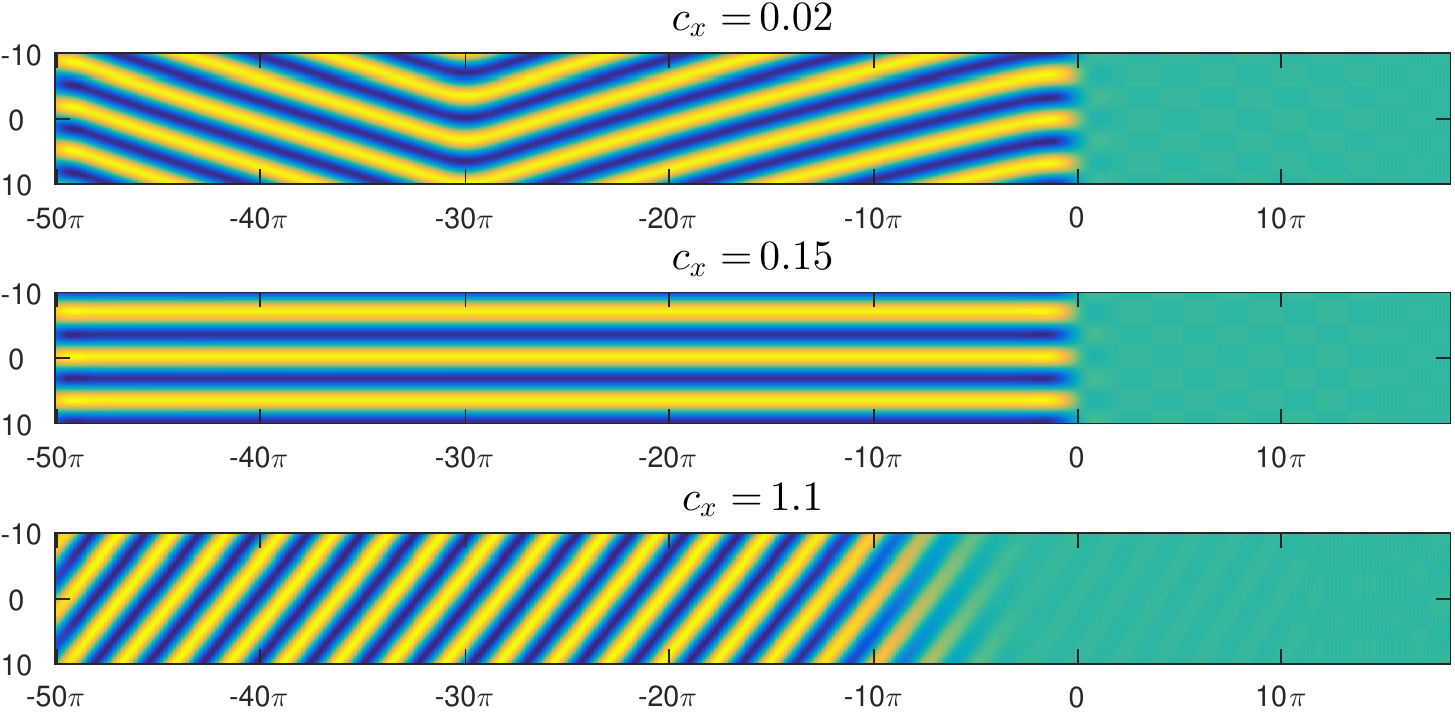}}
	\hfill
	\caption{Transitions in the formation of striped patterns in the Swift-Hohenberg equation as analyzed and predicted, here, for $\rr=0.25$, $k_y=0.95$. Simulations carried out in a co-moving frame with quenching line at $x=0$. Transitions observed (top to bottom, left to right), are from oblique to zigzag to straight, back to oblique, until stripe formation detaches.}\label{f:2}
\end{figure}

\paragraph{Directional quenching} In many of the physical contexts listed above, patterns do not arise through the type of quenching that is captured by the scenario of small random initial perturbations of an unstable state, but one rather observes that the domain in which patterns form grows in time. There has been quite some interest in the interplay between  growth processes and pattern formation mechanisms, in particular since the phenomenology of both the growth process and the pattern formation mechanism can quite dramatically influence each other. We shall focus here on the effect of the growth mechanism on the pattern formation, and neglect the reverse effect, as a first approximation. We therefore assume that the parameter $\rr$ in \eqref{e:sh} is time- and space-dependent $\rr=\rr(t,x,y)$ with $\rr(t,x,y)\equiv\mu>0$ in a time-dependent region $(x,y)\in\Omega_t$ and  $\rr(t,x,y)\equiv-\mu<0$ in the complement $(x,y)\not\in\Omega_t$. We illustrate the striking ``regularity''
of the resulting patterns in Figure \ref{f:1} where $\Omega_t=\{|(x,y)|<c_x t\}$ for some $c_x>0$. 

Clearly, one can envision exploiting the geometry of $\partial\Omega_t$ in order to ``manufacture'' particular crystalline configurations. The first ingredient for a systematic prediction of the resulting patterns is an understanding of patterns arising from approximately planar interfaces, where $\Omega_t=\{x<c_x t\}$. A second simplification is the assumption of transverse periodicity $y\in \R/L_y\Z$ exploited in the simulations of Figure \ref{f:2}. 

The focus of this paper is the parameter regime where stripes are perpendicular, or almost perpendicular to what we shall refer to as the quenching line $\partial\Omega_t$. In this regime, $k_x\sim 0$ and therefore $k_y\sim 1$. We think of $k_y=2\pi/L_y$ and $c_x$ as control parameters, and of $k_x$ as a parameter that is selected by the process, thus dependent on $k_y$ and $c_x$. 

Direct simulations confirm a somewhat delicate behavior of the selection process when $k_y\sim 1$ and $c_x$ is gradually increased. Figure \ref{f:2} shows how, for $k_y\lesssim 1$, one observes transitions from oblique to zigzagging to perpendicular stripes, that will eventually become oblique again before they detach from the quenching line for larger speeds.

\begin{figure}[h]\centering
\includegraphics[width=1\textwidth]{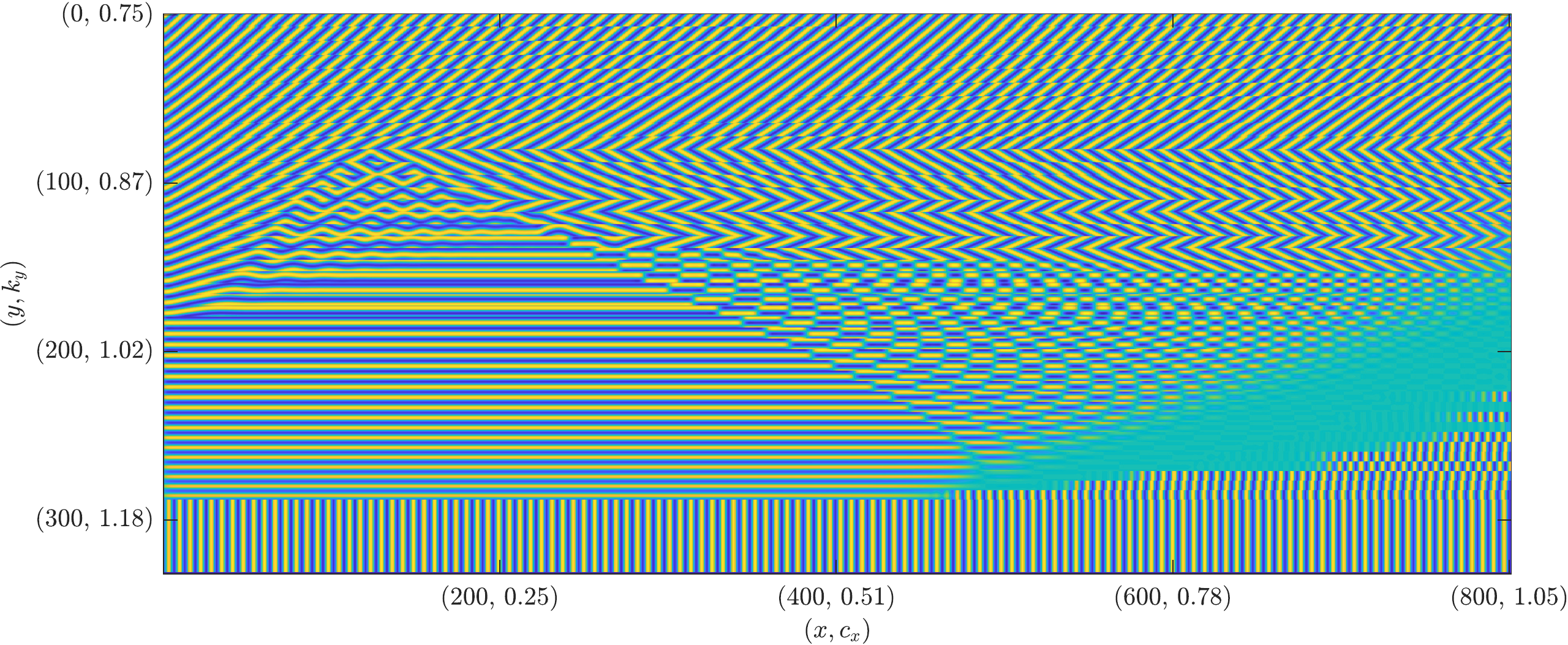}\\
\includegraphics[width=1\textwidth]{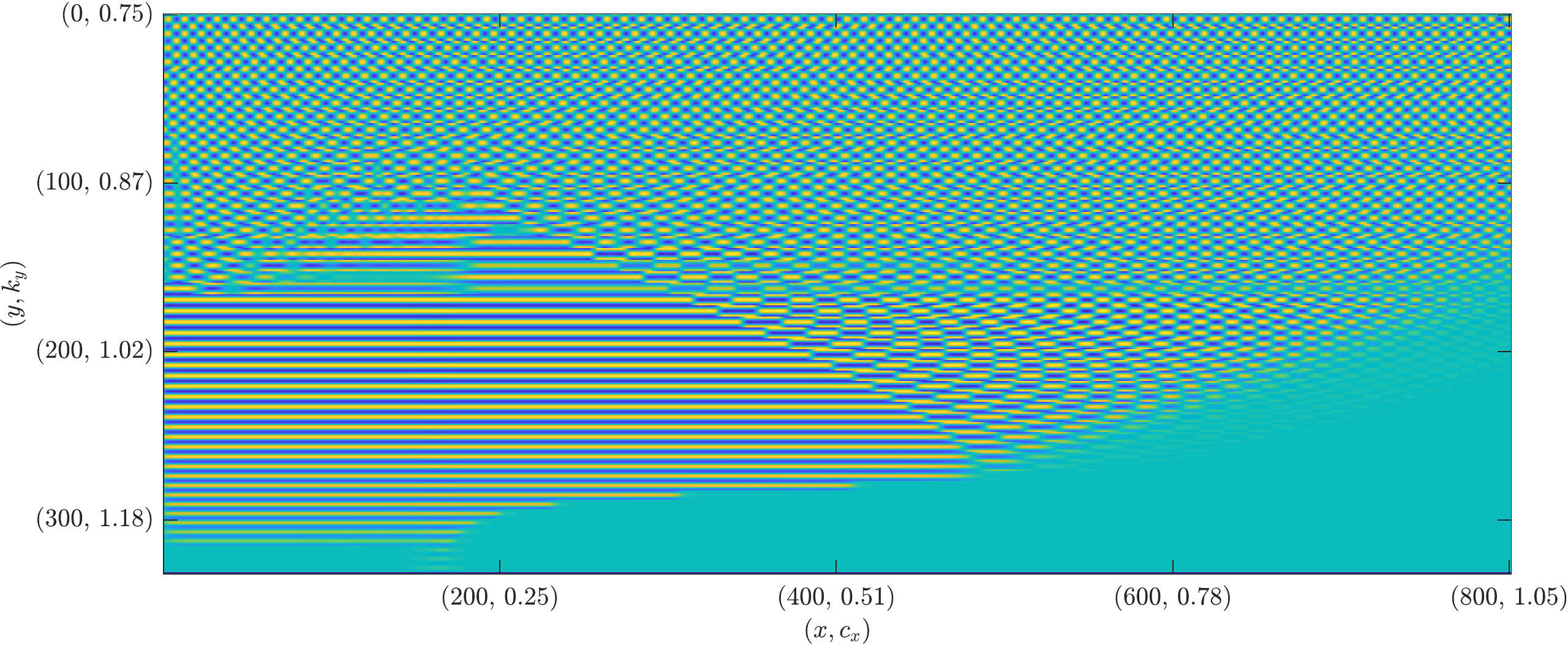}
\caption{Parameter landscape of stripe formation: Simulations with $k_y$ varying from  $0.75$ to $1.24$ in increments of $0.01$ in a strip of length $300\pi$ (only part shown), with quenching step $\rho=-\mu\sign(x-\xi(t))$ and exponential speed  $\xi(t)=9\cdot 10^{-5}(\rme^{0.13 t}-1)$. Dynamics are frozen at $x<\xi(t)-10$, such that the pattern observed at position $x$ encodes the possibly transient pattern formed at the corresponding speed $c_x\sim \xi'(t)$. Top picture for generic initial conditions, bottom pictures for solutions with even-odd symmetry $u(x,y)=u(x,-y)=-u(x,\frac{\pi}{k_y}-y)$; compare Figure \ref{f:end} for theoretical predictions. }\label{f:2a}
\end{figure}

\paragraph{Summary of phenomena} Figure \ref{f:2a} presents much of the phenomenology in a sequence of simulation results for $\mu = 1/4$. We characterize patterns created at the quenching interface depending on quenching rate $c_x$ and the lateral wavenumber $k_y$. The Figure contains 50 simulations with $k_y\in [0.75,1.24]$, increasing in increments of $0.01$, stacked vertically. In each simulation, $c_x$ increased exponentially with very small rate and the dynamics are frozen at a distance 10 behind the quenching line such that the pattern gives a good illustration of the dynamics in the $(k_y,c_x)$-parameter plane. We observe striped patterns with different orientations relative to the quenching interface,
\begin{itemize}
 \item \emph{perpendicular stripes} (horizontal in the picture),
 \item \emph{oblique stripes},
 \item \emph{parallel stripes} (vertical in the picture),
\end{itemize}
and striped patterns with defects, oblique and perpendicular, respectively,
\begin{itemize}
 \item \emph{zigzagging stripes},
 \item \emph{spotted stripes}.
\end{itemize}

Perpendicular and oblique stripes dominate in this parameter regime; parallel stripes take over for large wavenumbers and large speeds. One notices roughly two qualitatively different transitions,
\begin{itemize}
 \item \emph{oblique detachment}: oblique stripes form for small speeds $c_x<c_x^\mathrm{osn}(k_y)$, $0.85\lesssim c_x\lesssim 1$, until zigzagging and, for $k_y\gtrsim 0.9$, perpendicular stripes take over;
 \item \emph{perpendicular detachment}: for $0.85\lesssim k_y\lesssim 1.17$, perpendicular stripes detach at $c_x^\mathrm{psn}(k_y)$ and give rise to  either oblique or spotted stripes. 
\end{itemize}
Details of these transitions are in fact more subtle and the parameter regions visible in this figure only qualitatively reflect the more accurate analysis presented below. Figure \ref{f:2a} also contains the equivalent effective bifurcation diagram when dynamics are restricted to the subspace of even solutions that are also odd with respect to reflection at a line $y=\pi/(2k_y)$. This \emph{even-odd} subspace also corresponds to solutions satisfying Neumann boundary conditions $u_{yyy}=u_y=0$  at $y=0$ and Dirichlet boundary conditions $u_{yy}=u=0$ at $y=\pi/(2k_y)$. It eliminates both zigzagging stripes and parallel stripes and therefore presents a somewhat clearer picture of the perpendicular stripe existence region. 

Our approach here is, as a consequence, threefold. We first analyze
oblique detachment, then perpendicular detachment, and finally present a more comprehensive view of all dynamics. We analyze the detachment bifurcation in universal modulation equations, expecting that those results are more widely applicable.

\paragraph{Oblique detachment and Cahn-Hilliard on $x<0$}
We show that phenomena for small $c_x\gtrsim 0$ and $1-k_y=\eps\sim 0$ can be captured by a Cahn-Hilliard equation
\begin{equation}\label{e:ch}
\psi_t=-(\psi_{xx}+\eps\psi-\psi^3)_{xx}+c_x \psi_x,\ x<0,\qquad \left.\psi=\psi_{xx}=0\right|_{x=0},
\end{equation}
after suitable scalings, using either amplitude equation formalism as in \cite[\S8.3]{hoyle} or spatial dynamics \cite{hs}. We construct heteroclinic and homoclinic  orbits for $c_x=0$ that correspond to slanted stripes, compatible with the parameter jump (boundary condition) at $x=0$ and analyze the singular perturbation that yields slanted stripes for $c_x\gtrsim 0$. Heteroclinic and homoclinic orbits can be understood as parts of grain boundaries, constructed in \cite{hs}, and correspond in this sense quite literally to kinks or wrinkles in stripes. Our analysis predicts $\frac{\rmd k_x}{\rmd c_x}$ and agrees well with numerical results that we present later.
We continue the heteroclinic profiles numerically until they disappear in a saddle-node bifurcation, the \emph{oblique detachment}. The unstable branch corresponds to an oblique stripe that contains a kink. The saddle-node occurs when this kink detaches from the boundary, which leads to periodic kink shedding, the creation of zigzag patterns in the wake of the quenching line. The saddle-node bifurcation is of independent theoretical interest due to the presence of essential spectrum, and we discuss some interesting technical questions and phenomena in this context. We also exhibit a transition where the nature of the bifurcation changes to a hyperbolic homoclinic orbit, that causes  changed asymptotics and coexistence between oblique stripes and oscillating stripe angles in large finite domains.
We finally study detachment of the kink-shedding process in the Cahn-Hilliard equation, which corresponds to detachment of zigzag oscillations in Swift-Hohenberg,  near a critical speed resulting in the creation of perpendicular stripes at the quenching line. 

\paragraph{Perpendicular detachment and Newell-Whitehead-Segel}
We study the dynamics of stripes for moderate speeds in amplitude equations. One observes  yet another saddle-node bifurcation corresponding to the \emph{perpendicular detachment}, and an accompanying birth of a limit cycle. The bifurcation is accompanied by pitchfork bifurcations and several transitions from convective to absolute instabilities. For yet larger speeds, oblique stripes and eventually stripes of all orientations detach. 

\paragraph{The moduli space}
We present computational results in the Swift-Hohenberg equation that capture oblique and perpendicular stripes, using a Newton method and far-field core decomposition, in \S\ref{s:6}. This allows us to systematically track patterns through the saddle-node bifurcations and detect other instabilities. The results can be summarized in a surface, the \emph{moduli space}, in the $(k_x,k_y,c_x)$-space. The surface is surprisingly complex. Many of the phenomena discussed here are reflected in the geometry of this surface; see Figure \ref{f:end}.
 
\paragraph{Universality and similar phenomena in the literature}
Directional quenching in Turing-type systems was studied qualitatively in the context of the CIMA reaction in \cite{cima}, with qualitatively similar observations of transitions between parallel, oblique and perpendicular orientations. In the context of the Cahn-Hilliard equation as a model for phase separation, similar transitions have been studied in the literature. The most striking similarity can be found in a bifurcation study of a Langmuir-Blodgett transfer model \cite{koepf1}. Without our emphasis on a problem posed in an infinite domain, the authors observe a primary branch for small $c_x$ ($V$ in their notation), which destabilizes in a saddle-node bifurcation and then continues a snaking curve, different from our situation. Similar to our context, the authors do see a branch of periodic orbits limiting on the primary branch in a global homoclinic bifurcation (as in our situation, not always at the saddle-node but sometimes on a homoclinic to a hyperbolic equilibrium), which disappears in a steep Hopf bifurcation, that in the limit of large domains is caused by a detachment of kink-formation. A different scenario occurs in simple parameter triggers for Cahn-Hilliard \cite{krekhov}, where mass conservation forces the appearance of periodic orbits for arbitrarily small speeds. A more comprehensive numerical study based on direct simulations can be found in \cite{foard}. Also, the kink-shedding process, organized by bifurcations in Cahn-Hilliard and Newell-Whitehead-Segel equations, appears to be an organizing feature behind a number of phenomena also in reaction-diffusion processes; see for instance \cite{baer,gaffney}. 

In a more narrow sense, we expect that the first part of our discussion is a universal description of growth in systems with zigzag-instabilities, since those can universally be reduced to Cahn-Hilliard type phase-diffusion problems. The second part of our analysis relies on amplitude equations and should hold quite generally near instabilities in isotropic systems that select a finite wavenumber, and in the absence of quadratic interaction terms that would favor formation of spots over stripes. 

\paragraph{Analytical contributions}
 
Our main analytical contribution is the analysis of a heteroclinic bifurcation at $c_x=0$ which allows us to establish existence of solutions growing slanted stripes, and predict leading-order asymptotics for the angle. We also make extensively use of predictions for wavenumbers based on pinched double roots and absolute spectra \cite{holz,ssabs,rss,gs1}.

\paragraph{Outline} 
We derive the Cahn-Hilliard approximation in \S\ref{s:2} and present a heteroclinic bifurcation analysis for the case $c_x\gtrsim 0$ in \S\ref{s:3}. Numerical continuation of the heteroclinic orbit, up to and including the saddle-node bifurcation on a limit cycle where oblique stripes detach are presented in \S\ref{s:4}. Perpendicular detachment and instabilities of perpendicular stripes are discussed within amplitude equations in \S\ref{s:5}. Finally, \S\ref{s:6} shows computational results for the Swift-Hohenberg equation based on continuation and farfield-core decompositions. We conclude with a discussion of related work and some open problems.

\section{Zigzag instabilities  and the Cahn-Hilliard approximation}\label{s:2}

We first give some background on zigzag  instabilities of striped patterns in \S\ref{s:2.1}, which cause both bending and wrinkling of stripes in Figure \ref{f:2}. We briefly mention amplitude formalism and the derivation of the Cahn-Hilliard equation in \S\ref{s:2.2}, and discuss a more rigorous reduction procedure that allows one to find kinks in striped phases in \S\ref{s:2.3}. We use this formalism to motivate boundary conditions for the Cahn-Hilliard equation on a half-line in the case of directional quenching in \S\ref{s:2.4}.

\subsection{Stripes and zigzag instabilities}\label{s:2.1}
Striped solutions in the Swift-Hohenberg equation \eqref{e:sh} can be found as even solutions to the boundary-value problem 
\[
-\left(k^2\partial_{\xi\xi}+1\right)^2 u_\mathrm{p}+\mu u_\mathrm{p} - u_\mathrm{p}^3=0, \qquad u_\mathrm{p}(\xi+2\pi)=u_\mathrm{p}(\xi),
\]
such that, for fixed $\mu>0$, $u_\mathrm{p}(k x;k)$ solves \eqref{e:sh}. In fact, one can show that for small $\mu>0$ there exists a family of such solutions, parameterized by $k\sim 1$. Since \eqref{e:sh} is isotropic, we also find the associated rotated solutions 
\[
u_\mathrm{p}(k_x x + k_y y;k),\qquad k=\sqrt{k_x^2+k_y^2}.
\]
Beyond existence, one would next ask for stability of these solutions, studying the linearized operator
\[
\mathcal{L}(k_x)u=-\left(k_x^2\partial_{\xi\xi}+\partial_{yy}+1\right)^2 u+\mu u - 3 u_\mathrm{p}^2(\xi;k_x) u,\qquad y,\xi\in\R.
\]
Floquet-Bloch theory conjugates this operator to the family of operators 
\begin{equation}\label{e:fb}
\hat{\mathcal{L}}(k_x;\sigma_x,\sigma_y)u=-\left(k_x^2(\partial_{\xi}+\rmi\sigma_x)^2-\sigma_y^2+1\right)^2 u+\mu u - 3 u_\mathrm{p}^2(\xi;k_x)  u,\quad u(\xi)=u(\xi+2\pi),
\end{equation}
such that the spectrum of $\mathcal{L}$ is the union of the spectra of $\hat{\mathcal{L}}(k_x;\sigma_x,\sigma_y)$, $0\leq \sigma_x<1$, $\sigma_y\in\R$. Since the spectrum of  $\hat{\mathcal{L}}(k_x;\sigma_x,\sigma_y)$ consists of isolated, real eigenvalues of finite multiplicity, one can use regular perturbation theory to calculate expansions of eigenvalues near $\mu=\sigma_x=\sigma_y=0$. One finds that the spectrum is stable with the possible exception of a branch of eigenvalues
\[
\lambda(\sigma_x,\sigma_y;k_x)=-d_{\parallel}(k_x)\sigma_x^2-d_\perp(k_y) \sigma_y^2+\rmO(4).
\]
In particular, the spectrum is stable, $\Re\lambda\leq 0$, when effective diffusivities are positive $d_\parallel,d_\perp>0$, a region in $(k_x,\mu)$-space often referred to as the Busse balloon. The boundaries of this region are, for small $\mu$, given by the Eckhaus boundary $k_\mathrm{eck}(\mu)$ and the zigzag boundary $k_\mathrm{zz}(\mu)$, where $d_\parallel$ is negative for $k_x>k_\mathrm{eck}$ and $d_\perp$ is negative for $k_x<k_\mathrm{zz}$. It turns out that stripes with $k=k_\mathrm{zz}$ possess minimal energy density and are therefore preferred in many situations. 

Our focus here will be on systems $(x,y)\in\R\times (\R/(L_y \Z))$  where the lateral period $L_y$ is close to the critical zigzag period $L_y\sim 2\pi/k_\mathrm{zz}$. Ignoring the parameter jump at $x=0$, we see that stripes with $k_x=0$, $k_y=2\pi/L_y$ are stationary solutions in such a strip, stable only when $k_y>k_\mathrm{zz}$. In fact, choosing the lateral period such that $k_y<k_\mathrm{zz}$, we can find rotated stripes $u_\mathrm{p}(\kappa_x x + \kappa_y y;\kappa)$ with $\kappa_y=k_y$ and $\kappa_x=\sqrt{k_\mathrm{zz}^2-k_y^2}$, such that the wavelength of this rotated pattern is precisely $k_\mathrm{zz}$, thus minimizing the energy. This instability mechanism, often referred to as the zigzag instability is at the heart of much of the phenomenology in this paper. We refer to \cite{hoyle} for background on the discussion of instabilities and \cite{mielke} for a more technical discussion of the Floquet-Bloch analysis mentioned here. 

\subsection{Amplitude and phase diffusion equations}\label{s:2.2}
Striped patterns near a given orientation can be described by amplitude equations. We scale $y=k_y\tilde{y}$, and find, dropping tildes
\begin{equation}\label{e:shc}
u_t=-(\partial_{xx}+k_y^2\partial_{yy} +1)^2u+\mu u-u^3. 
\end{equation}
Substituting an Ansatz $u(t,x,y)=\rme^{\rmi y} A( t, x)+\mathrm{c.c.}$, 
assuming that $\mu$ is small and $t,x$ are slowly varying, and collecting leading orders in $\mu$, gives the Newell-Whitehead-Segel amplitude equation 
\begin{equation}\label{e:NWS}
A_t=-(\partial_{xx}+2\eps-\eps^2)^2A+\mu A -3A|A|^2,\qquad \mbox{where }\eps = 1-k_y. 
\end{equation}
Writing $A=R\rme^{\rmi\Phi}$, separating equations for $R$ and $\Phi$, and relaxing to $R=\sqrt{\mu/3}$ at leading order in $\eps$, we find after a short calculation the Cross-Newell phase-diffusion equation 
\begin{equation}\label{e:CN}
\Phi_t=-c_4\Phi_{xxxx}-c_1\eps \Phi_{xx}+c_3\Phi_x^2\Phi_{xx};\qquad c_4=1,c_1=4, c_3=6.
\end{equation}
see for instance  \cite{hoyle} for the general strategy and \cite{schneidernws} for approximation results and limits of validity. Note that the coefficients in \eqref{e:CN} are obtained from \eqref{e:NWS}, thus leading-order in $\mu$, only. One can more generally derive \eqref{e:CN} directly from the Swift-Hohenberg equation, near the zigzag instability, not necessarily at small $\mu$.  We computed coefficients $c_{1,3,4}$ numerically in this way, thus not using an asymptotic expression in $\mu$,  from the expansion of the dispersion relation. We found however very small deviations from \eqref{e:CN} of order $10^{-4}$  for $\mu=0.25$.

\subsection{Spatial dynamics, knees, and more}\label{s:2.3}
A different approach \cite{hs} focuses on stationary patterns or traveling waves of \eqref{e:shc} that remain close to horizontal striped patterns at all locations $x\in\R$, $u=u_\mathrm{p}(k_\mathrm{zz}y-\phi(x);k_\mathrm{zz}+w(x,y))$ in 
\begin{equation}\label{e:shctw}
-c_x u_x-c_y u_y=-(\partial_{xx}+k_y^2 \partial_{yy}+1)^2u + \mu u - u^3,\qquad u(x,y)=u(x,y+2\pi),
\end{equation}
with $k_y=k_\mathrm{zz}+\eps$. One casts \eqref{e:shctw} as a first-order differential equation in $x$, requiring that $w$ be orthogonal to $u_\mathrm{p}'$ and studies the resulting equations for $\phi$ and $w$ from a dynamical systems point of view. One finds a family of equilibria, that is, $x$-independent solutions, $w\equiv 0$, $\phi\equiv const$, at $\eps=0$. 
Linearizing at these equilibria gives a length-4 Jordan block at the origin, with the rest of the spectrum being bounded away form the origin. A center-manifold reduction can thus be carried out, and one finds a fourth-order differential equation for $\phi$, given at leading order through
\begin{align}
\phi_x&=\psi,\notag\\
\psi_x&=v\notag\\
v_x&=-4\eps\psi+2\psi^3+m\notag\\
m_x&=c_x\psi+c_y,\label{e:chtw}
\end{align}
 where $\eps=k_y-k_\mathrm{zz}$. Scaling 
  \begin{equation}\label{e:scalch}\psi=(2\eps)^{1/2}\tilde{\psi},\ v=\sqrt{8}\eps\tilde{v},\ m=\sqrt{32}\eps^{3/2}\tilde{m},\  \partial_x=2\eps^{1/2}\partial_{\tilde{x}},\ c_x=8\eps^{3/2}\tilde{c}_x, \  c_y={128}\sqrt\eps^{2}\tilde{c}_y, 
 \end{equation}
  eliminates $\eps$-dependence and gives the 
  the traveling-wave equation corresponding to the Cross-Newell equation \eqref{e:CN}, substituting an  Ansatz 
 $\Phi=\phi(x-c_x t, t) + c_y t$. 
\begin{figure} \centering
\includegraphics[width=1.\linewidth]{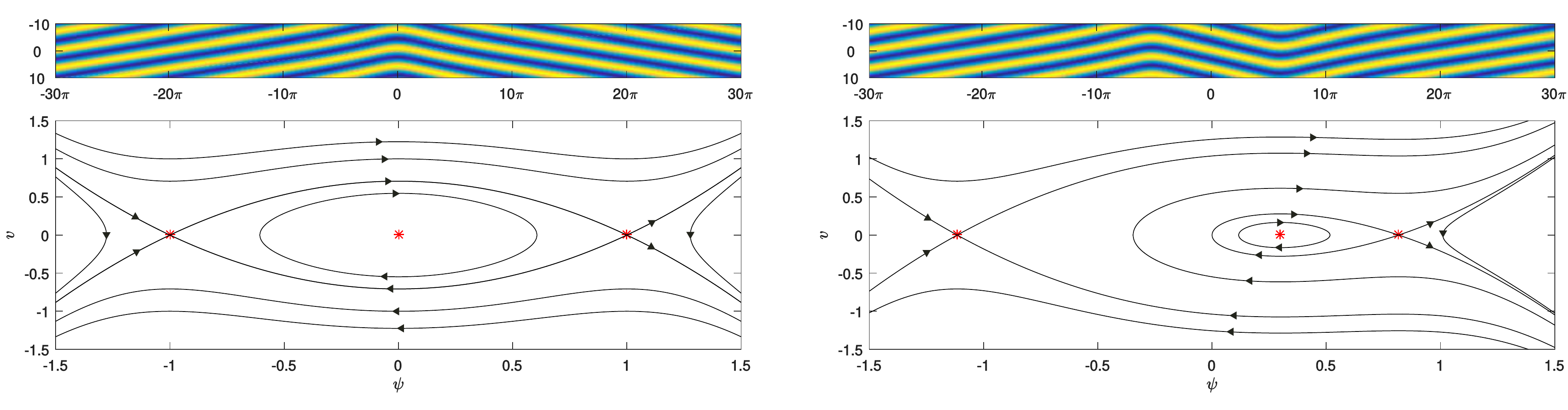}
\caption{Picture of knee solution (heteroclinic) and zigzag (homoclinic) in Swift-Hohenberg (top); phase portraits with corresponding heteroclinic and homoclinic orbits (bottom).}\label{f:knee}
\end{figure}

In the specific case of the Swift-Hohenberg equation, the stationary equations \eqref{e:shctw} for $c_x=c_y=0$ possess a Hamiltonian structure. Indeed, the system can be obtained as Euler-Lagrange equation to a variational problem with translation-invariance in $x$. Interpreting the energy as the action functional and the energy density as the Lagrangian, one then finds the Hamiltonian structure. In more detail, the energy interpreted as  action functional is 
\[
\mathcal{E}[u]=\int_{x,y} \left(\frac{1}{2}\left(\partial_{xx} u + k_y^2 \partial_{yy} u +u\right)^2 -\frac{1}{2} \eps^2 u^2 + \frac{1}{4}u^4 \right)\rmd x \rmd y.
 \]
We write \eqref{e:shctw} as a first order equation for $\underline{u}=(u,u_1,v,v_1)^T$ in the form
\begin{align}
u_x&=u_1\notag\\
u_{1,x}&=v-k_y^2 \partial_{yy} u -u\notag\\
v_x&=v_1\notag\\
v_{1,x}&=-k_y^2 \partial_{yy} v -v+ \mu u - u^3,\label{e:1st}
\end{align}
and define the Hamiltonian as 
\[
\mathcal{H}[\underline{u}]=\int_y h(\underline{u})\rmd y,\quad h(\underline{u})= -\frac{1}{2} v^2+u_1v_1 + v(k_y^2u_{yy}+u)-\frac{1}{2}u^2+\frac{1}{4}u^4,
\]
and the symplectic structure through  \cite{lloydscheel}
\[
\omega(\underline{u},\underline{\tilde{u}})=\int_y \underline{u}\cdot(\mathcal{J}\underline{\tilde{u}})\rmd y,\qquad \mathcal{J}=\begin{pmatrix}
0&0&0&1\\ 
0&0&-1&0\\
0&1&0&0\\
-1&0&0&0
\end{pmatrix}.
\]
The translation symmetry in $y$ corresponds to the conserved quantity 
\[
\mathcal{S}[\underline{u}]=\int_y s(\underline{u}),\quad s(\underline{u})=u (v_1)_y+v (u_1)_y,\quad \mathcal{J}\nabla_{L^2}S[\underline{u}]=\partial_y\underline{u},
\]
which we shall refer to as the momentum.

The tangent space to the center manifold is spanned by \cite{hs}
\begin{align*}
e_1&=
\left(\begin{array}{c}
u_\mathrm{p}'(y)\\0\\-(1+k_y^2\partial_{yy}u_\mathrm{p}')\\0
\end{array}\right),\ e_2=
\left(\begin{array}{c}
0\\u_\mathrm{p}'(y)\\0\\(1+k_y^2\partial_{yy})(1+k_y^2\partial_{yy})u_\mathrm{p}'
\end{array}\right),\\
e_3&=
\left(\begin{array}{c}
-\frac{1}{2}v_2(y)\\0\\u_\mathrm{p}'(y)-\frac{1}{2}(1+k_y^2\partial_{yy})v_2(y)\\0
\end{array}\right),\ e_4=
\left(\begin{array}{c}
0\\-\frac{1}{2}v_2(y)\\0\\ u_\mathrm{p}'(y)-\frac{1}{2}(1+k_y^2\partial_{yy})v_2(y) 
\end{array}\right).
\end{align*}
Here, $v_2(y)$ is the second derivative of the eigenvector for the Floquet-Bloch operator \eqref{e:fb} in $\sigma_y$ at $\sigma_x=0$ \cite{hs}.

One readily finds that the reduced flow \eqref{e:chtw} is obtained in this basis through $\underline{u}_\mathrm{c}=\phi e_1+\psi e_2+v e_3+m e_4$. The symplectic structure, reduced Hamiltonian, and reduced momentum are, at leading order, in coordinates $U=(\phi,\psi,v,m)^T$,
\[
\omega(U,\tilde{U})=U\cdot (J \tilde{U}), \mathcal{J}=\begin{pmatrix}
0&0&0&1\\ 
0&0&-1&0\\
0&1&0&0\\
-1&0&0&0
\end{pmatrix},\ H(U)=m \psi-\frac{1}{2}v^2-2\eps^2 \psi^2+\frac{1}{2}\psi^4, S(U)=m.
\]
We note that heteroclinic orbits connect periodic orbits that are marginally stable with respect to the zigzag instablitity \cite{lloydscheel}, due to conservation of $\mathcal{S}$. One easily verifies the $\mathcal{S}$ is also conserved for spatially inhomogeneous $\rho=\rho(x)$ that do not break the associated translation symmetry in $y$. As a consequence, the quenched system allows for oblique stripes with wavenumber $k_\mathrm{zz}$, only, when $c_x=0$.

\subsection{Spatial dynamics and effective boundary conditions}\label{s:2.4}
The considerations in the previous section provide a local description of solutions in a vicinity of the primary periodic stripe, for $x<0$, only. The picture can be complemented by a description of dynamics in $x>0$, where the origin $u=0$ is a hyperbolic equilibrium. The following discussion is kept at a somewhat informal level as it is merely meant to motivate effective boundary conditions. We consider \eqref{e:1st} now with a parameter step, replacing the constant coefficient $\mu$ by  $\rr=-\mu\sign,(x)$. 

We first define the stable manifold  $W^\mathrm{s}_+$ in $x>0$  where $\rr=-\mu<0$, as the set of initial conditions  $(u,u_x,u_{xx},u_{xxx})(y)$ at $x=0$ that give rise to solutions converging to the origin as $x\to+\infty$. Next, we define the center-unstable manifold  $W^\mathrm{cu}_-$, in $x<0$  where $\rr=-\mu<0$, as the set of initial conditions at $x=0$ that give rise to solutions  that converge to the center manifold $W^\mathrm{c}_-$ for the dynamics in $x<0$ near the stripes, as $x\to -\infty$. Solutions of interest to us lie in the intersection of $W^\mathrm{s}_+\cap W^\mathrm{cu}_-$. 

Fredholm theory shows that a transverse intersection of these two manifolds would consist of a two-dimen\-sional submanifold of $W^\mathrm{cu}_-$. Since $W^\mathrm{cu}_-$ is foliated over the 4-dimensional center manifold $W^\mathrm{c}$, we may assume that the submanifold is transverse to this foliation and then project this two-dimensional submanifold along the smooth foliation onto the center manifold $W^\mathrm{c}$, where it gives rise to a two-dimensional submanifold  $\mathcal{B}$ of $W^\mathrm{c}$. By construction, initial conditions on this two-dimensional submanifold that give rise to bounded solutions on the center manifold as $x\to -\infty$ correspond to bounded solutions on $x\in\R$, after lifting to the corresponding intersection point in the unstable foliation. 
    
The construction outlined above yields, under some transversality assumptions, the existence of \emph{effective boundary conditions}, a two-dimensional submanifold  $\mathcal{B}\subset W^\mathrm{c}$. By translation invariance with respect to the shift in $y$, the center manifold $W^\mathrm{c}$, the local flow on $W^\mathrm{c}$, and the effective boundary condition $\mathcal{B}$ are invariant under this translation, which is simply given through the additive action $\phi\mapsto \phi+\varphi$ on the circle. As a consequence, $\mathcal{B}=\{(\psi,v,m)\in B\}$ for some one-dimensional manifold $B$ which we parameterize as $(\psi_B(\sigma), v_B(\sigma),m_B(\sigma))$, $\sigma\sim 0$, with $\psi_B(0)= v_B(0)=m_B(0)=0$.

Within the center-manifold, the scaling  $\tilde{x}=\sqrt{\eps}x$, $\tilde{\psi}=\sqrt{\eps}\psi$,  $\tilde{v}={\eps}v$, $\tilde{m}={\eps}^{3/2}m$ that reduces to the $\eps$-independent Cahn-Hilliard equation \eqref{e:scalch}, eliminates the parameter $\eps$ at leading order and gives the Cahn-Hilliard steady-state equation. With this scaling, the boundary curve $\mathcal{B}$ is transformed to 
$B_\eps\sim (0,0,\sigma)$ provided that $m_B'(0)\neq 0$. In this sense, we expect a typical clamped boundary condition 
\begin{equation}\label{e:bcclamped}
\psi=\psi_x=0 \mbox{   at } x=0.
\end{equation}
Of course, these boundary conditions would be accurate only at leading order in $\eps$.

In the specific case of the Swift-Hohenberg equation with a parameter step, the boundary manifold is not ``generic'' in the sense that the tangent space at the origin is given by $\psi=\psi_{xx}=0$. This non-genericity is caused by the Hamiltonian structure of the reduced equation, or, more specifically, by the conservation of momentum. In fact, the equation with parameter jump in $x$ possesses the $y$-translation symmetry such that the momentum $S$ is conserved in $x$. Therefore, ${S}$ evaluated on the effective boundary conditions coincides with $S$ evaluated at the origin, $x=+\infty$. As a consequence, the boundary manifold $B_\eps$ is contained in $\{m=0\}$. A generic curve through the origin in the $(\psi,v)$-plane will, after scaling, reduce to the line $\psi=0$, which together with $m=0$ gives the Dirichlet boundary conditions
\begin{equation}\label{e:bcdirichlet}
\psi=\psi_{xx}=0. \mbox{   at } x=0.
\end{equation}
We would expect small non-variational effects to yield boundary conditions that interpolate between Dirichlet and clamped, and therefore also study a straight interpolation, 
\begin{equation}\label{e:bchomotopy}
\psi=\tau \psi_x+(1-\tau)\psi_{xx}=0. \mbox{   at } x=0,
\end{equation}
for $0\leq \tau\leq 1$.

Finally, we shall also consider the time-dependent, scaled  version,
\begin{equation}\label{e:cht}
\psi_t=-(\psi_{xx}+\psi-\psi^3)_{xx}+c_x\psi_x,
\end{equation}
together with the boundary conditions \eqref{e:bcclamped} and \eqref{e:bcdirichlet}. We emphasize that for any of the choices of boundary conditions, mass $\int \psi$ is not conserved at the boundary. In particular, solutions with $\psi=0$ at the boundary and $\psi\to \eta\neq 0$ for $x\to-\infty$ are possible.

\section{Slow growth: A singular heteroclinic perturbation problem}\label{s:3}

This sections contains our main analytical results. We first analyze the stationary equation $c_x=0$ in \S\ref{s:3.1} with various boundary conditions. We then set up a perturbation analysis and establish existence of oblique stripe creation for $c_x\gtrsim 0$ in \S\ref{s:3.2}. We relegate some more technical aspects of the analysis to \S\ref{s:3.3} which the reader may skip at first reading. Throughout this section, we focus on the traveling-wave equation for the Cahn-Hilliard type problem \eqref{e:chtw}. The construction here is robust and can therefore easily accommodate error terms in boundary conditions or higher-order terms in the equation.

\subsection{Oblique stripes at zero speed}\label{s:3.1}
We study \eqref{e:chtw}, with rescaling  \eqref{e:scalch}, and omitting the trivial equation for $\phi$, 
\begin{align}
\psi_x&=v\notag\\
v_x&=-\psi+\psi^3+m\notag\\
m_x&=c_x\psi+c_y.\label{e:chtw0}
\end{align}
First, set $c_x=c_y=0$. In this case $m$ is a parameter, and the resulting system in the planes $m\equiv const$ can be readily analyzed as a nonlinear pendulum equation; see Figure \ref{f:phasespace}. Clamped boundary conditions correspond to a shooting problem from the line $\psi=v=0$ to backward spatial time $x<0$. Dirichlet boundary conditions correspond to a similar shooting problem, now from the line $\psi=m=0$.

\begin{figure}[h] \centering
\includegraphics[width=0.7\textwidth]{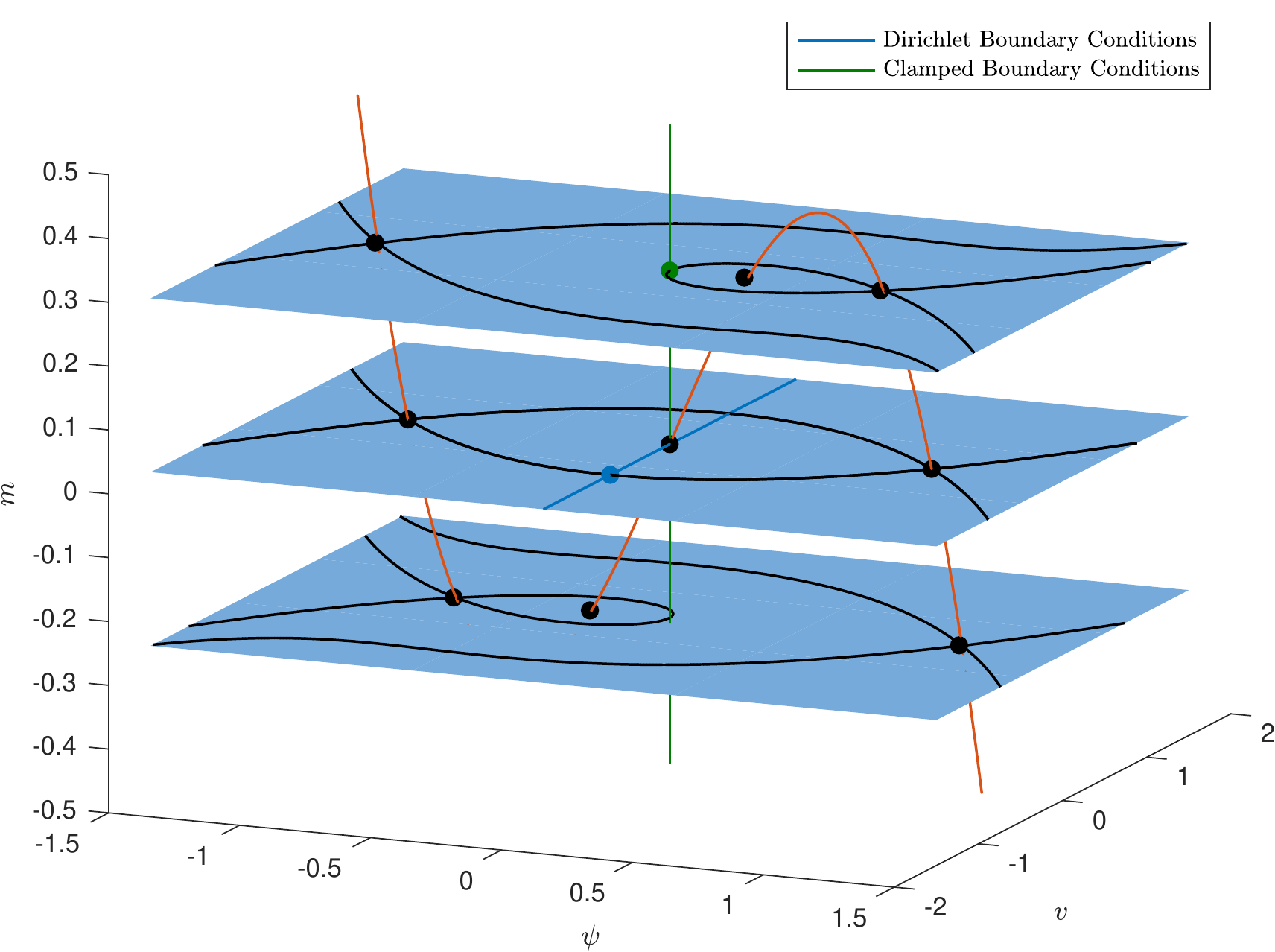}
\caption{Phase portrait for \eqref{e:chtw0} at $c_x=c_y=0$, including straight lines defined by boundary conditions at $x=0$ and the heteroclinic and homoclinic intersections with the boundary conditions. }
\label{f:phasespace}
\end{figure}

We immediately find that for Dirichlet boundary conditions, $\psi_\mathrm{d}(x)=\pm\tanh(x/\sqrt{2})$ is a bounded solution. Of course, this solution is simply half of a grain boundary solution described in the previous section. For clamped boundary conditions, we find values $\eta_* = \sqrt{2/3},$ $m_* = \eta_* - \eta_*^3$ and solutions $\psi_\mathrm{cl}(x)\to \eta_*$ for $x\to -\infty$, which are simply half of the type of step solution described in the previous section; see Figure \ref{f:knee}. For later reference, we collect explicit expressions in the case of clamped boundary conditions 
\begin{equation}\label{e:psi}
\psi_\mathrm{cl}'= (\psi_\mathrm{cl} - \eta_*) \sqrt{\frac{1}{2} \psi_\mathrm{cl} (\psi_\mathrm{cl} + 2 \eta_*)},
\end{equation}
We note that the conserved quantity $m$ leads to degenerate dynamics in $\R^3$, and perturbations with $c_x$ and $c_y$ that break this degeneracy should be viewed as singular perturbations in the sense of \cite{fen}. While one could go ahead and study this singular perturbation problem geometrically following the ideas there, we choose a somewhat more direct and possibly more self-contained approach using farfield-core decompositions.  

\subsection{Oblique stripes for $c_x\gtrsim 0$}\label{s:3.2}

Our main analytical result is as follows. 

\begin{theorem}\label{t:1}Consider \eqref{e:chtw0} with either clamped \eqref{e:bcclamped} or Dirichlet boundary conditions \eqref{e:bcdirichlet}, near the profiles $\psi_\mathrm{cl/d}$ and near $c_x=c_y=0$. 
For all $c_x$ sufficiently small, there exists a smooth function $c_y=-c_x\cdot \eta_\mathrm{cl/d}(c_x)$ and solutions $\psi_\mathrm{cl/d}(x;c_x)$ such that $\psi_\mathrm{cl/d}(x;c_x)\to \eta_\mathrm{cl/d}(c_x)$ for $x\to -\infty$ and $\psi_\mathrm{cl/d}$ satisfies clamped or Dirichlet boundary conditions at $x=0$, respectively. 

Moreover, $\psi_\mathrm{cl/d}$ and its derivatives depend smoothly on $c_x$ as smooth functions on the real line, locally uniformly. We have the expansions 
\begin{align}
\eta_\mathrm{cl}(c_x)&= \sqrt{\frac{2}{3}} - \left(\sqrt{6} - \sqrt{2} \log(2+\sqrt{3})\right) c_x+\rmO(c_x^2),\\
\eta_\mathrm{d}(c_x)&=1 - \frac{\sqrt{2} \log(2)}{2} c_x+\rmO(c_x^2).
\end{align}
\end{theorem}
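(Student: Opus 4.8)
The plan is to recast \eqref{e:chtw0} as a zero-finding problem for the implicit function theorem via a far-field/core decomposition, and to read the expansions off the linearization at $c_x=0$.

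First I would eliminate $m$, turning \eqref{e:chtw0} into the scalar equation $\psi_{xxx}-(3\psi^2-1)\psi_x=c_x\psi+c_y$ on $x\le 0$ (recovering $m=\psi_{xx}+\psi-\psi^3$ afterwards), in which \eqref{e:bcclamped} reads $\psi(0)=\psi_x(0)=0$ and \eqref{e:bcdirichlet} reads $\psi(0)=\psi_{xx}(0)=0$. Any bounded solution with $\psi\to\eta$ as $x\to-\infty$ must limit on a rest state $(\eta,0,0)$, and integrating $m_x=c_x\psi+c_y$ forces $c_y=-c_x\eta$; I therefore substitute $c_y:=-c_x\eta$ once and for all and treat $(\psi,\eta)$ as the unknowns, with $\eta=\eta_\mathrm{cl/d}(c_x)$ to be found. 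Then set $\psi=\psi_\mathrm{cl/d}(x)+(\eta-\eta_*)\chi(x)+w(x)$, where $\eta_*=\sqrt{2/3}$ (resp.\ $\eta_*=1$), $\chi$ is a smooth cutoff equal to $1$ near $-\infty$ and to $0$ near $x=0$, and $w$ lies in an exponentially weighted Sobolev space on $(-\infty,0]$ (weight $\rme^{\alpha x}$, $0<\alpha<\sqrt{3\eta_*^2-1}$) subject to the homogeneous boundary conditions $w(0)=w_x(0)=0$ (clamped) or $w(0)=w_{xx}(0)=0$ (Dirichlet). This yields a smooth map $\mathcal{G}(w,\eta;c_x)$ into the weighted space with $\mathcal{G}(0,\eta_*;0)=0$; crucially $\mathcal{G}$ still maps into this space for $c_x\neq 0$ because $\psi-\eta=(\psi_\mathrm{cl/d}-\eta_*)+(\eta-\eta_*)(\chi-1)+w$ decays exponentially.

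The heart of the argument is that $D_{(w,\eta)}\mathcal{G}(0,\eta_*;0)$ is an isomorphism. Its $w$-block is $L=\partial_x\mathcal{L}_2$ with $\mathcal{L}_2=\partial_{xx}+(1-3\psi_\mathrm{cl/d}^2)$; differentiating the profile equation gives $\mathcal{L}_2\psi_\mathrm{cl/d}'=0$, and $\psi_\mathrm{cl/d}'$ is the unique exponentially localized element of $\ker\mathcal{L}_2$ at $-\infty$, since the asymptotic exponents $\pm\sqrt{3\eta_*^2-1}$ are nonzero. A standard Fredholm computation for this half-line operator then gives $\ker L=\{0\}$ and a one-dimensional cokernel, with solvability functional $f\mapsto\int_{-\infty}^0 f\psi_\mathrm{cl}\,\rmd x$ (clamped) or $f\mapsto\int_{-\infty}^0 f\,\rmd x$ (Dirichlet). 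The $\eta$-block equals $L\chi$, and integrating by parts, using $\mathcal{L}_2\psi_\mathrm{cl/d}'=0$, $\chi(0)=\chi_x(0)=\chi_{xx}(0)=0$ and $\chi\equiv 1$ near $-\infty$, shows its pairing with the solvability functional is $\eta_*\neq 0$ (clamped) or $2\neq 0$ (Dirichlet). Hence $D_{(w,\eta)}\mathcal{G}(0,\eta_*;0)$ is bijective; the implicit function theorem produces smooth branches $(w(c_x),\eta(c_x))$ with $(w,\eta)(0)=(0,\eta_*)$, and joint smoothness of $\mathcal{G}$ together with standard ODE regularity yields the asserted smooth dependence of $\psi_\mathrm{cl/d}(\cdot;c_x)$ and its derivatives on $c_x$, locally uniformly. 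Setting $\psi_\mathrm{cl/d}(x;c_x):=\psi_\mathrm{cl/d}(x)+(\eta(c_x)-\eta_*)\chi(x)+w(c_x)(x)$, $c_y=-c_x\eta(c_x)$, and $\eta_\mathrm{cl/d}(c_x):=\eta(c_x)$ completes the existence statement.

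For the expansions, differentiate the branch at $c_x=0$. Writing $\psi=\psi_\mathrm{cl/d}+c_x\psi_1+\rmO(c_x^2)$, $\eta=\eta_*+c_x\eta_1+\rmO(c_x^2)$, the $\rmO(c_x)$-equation is $\partial_x(\mathcal{L}_2\psi_1)=\psi_\mathrm{cl/d}-\eta_*$ with $\psi_1\to\eta_1$ as $x\to-\infty$ and the inherited boundary conditions, so one integration gives $\mathcal{L}_2\psi_1(x)=(1-3\eta_*^2)\eta_1+\int_{-\infty}^x(\psi_\mathrm{cl/d}-\eta_*)\,\rmd s$. In the Dirichlet case $1-3\eta_*^2=-2$ and $\psi_{1,xx}(0)=0$ force $\mathcal{L}_2\psi_1(0)=0$, hence $\eta_1=\frac12\int_{-\infty}^0(\psi_\mathrm{d}-1)\,\rmd s=-\frac{\sqrt 2}{2}\log 2$. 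In the clamped case one pairs $\partial_x(\mathcal{L}_2\psi_1)=\psi_\mathrm{cl}-\eta_*$ with $\psi_\mathrm{cl}'\in\ker\mathcal{L}_2$ over $(-\infty,0]$; all boundary terms vanish by $\psi_1(0)=0$ and $\psi_\mathrm{cl}'(0)=0$, leaving $\eta_*\eta_1=-\int_{-\infty}^0\big(\int_{-\infty}^x(\psi_\mathrm{cl}-\eta_*)\,\rmd s\big)\psi_\mathrm{cl}'(x)\,\rmd x$, which after a further integration by parts and the substitution $\rmd x=\rmd\psi/\psi_\mathrm{cl}'$ from \eqref{e:psi} becomes $-\sqrt 2\int_0^{\eta_*}\sqrt{\psi/(\psi+2\eta_*)}\,\rmd\psi=-\eta_*\big(\sqrt 6-\sqrt 2\log(2+\sqrt 3)\big)$, so $\eta_1=-\big(\sqrt 6-\sqrt 2\log(2+\sqrt 3)\big)$.

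I expect the isomorphism claim for $D_{(w,\eta)}\mathcal{G}(0,\eta_*;0)$ to be the main obstacle: because $m$ is conserved at $c_x=0$, the asymptotic rest state carries a zero eigenvalue, so turning on $c_x$ is a genuine singular perturbation, and one must choose the exponential weight so that this neutral direction is slaved rather than free, pin down the Fredholm index of $L$ on the half-line, and check the nondegeneracy of the Melnikov-type pairings above. Everything else reduces to the implicit function theorem and elementary quadrature.
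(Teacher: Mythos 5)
Your proposal is correct and follows essentially the same route as the paper: a far-field/core decomposition with the asymptotic value $\eta$ appended as an unknown, exponentially weighted spaces on the half-line to handle the neutral (conserved-$m$) direction, Fredholm properties plus a nondegenerate pairing with the one-dimensional cokernel to invoke the implicit function theorem, and the first-order coefficients obtained from the solvability condition at order $c_x$, yielding exactly the paper's integrals $\eta_*\eta_1=\int_{-\infty}^0(\psi_\mathrm{cl}-\eta_*)\psi_\mathrm{cl}\,\rmd x$ and $2\eta_1=\int_{-\infty}^0(\psi_\mathrm{d}-1)\,\rmd x$. The only differences are cosmetic: you work with the equivalent scalar third-order equation rather than the first-order system, and you extract the expansions by direct quadrature of the $\rmO(c_x)$ equation instead of the projection formula $\eta'(0)=-\langle\partial_{c_x}F,e_*\rangle/\langle\partial_\eta F,e_*\rangle$; your cokernel functionals match the paper's adjoint kernels $e_*^\mathrm{cl}$, $e_*^\mathrm{d}$.
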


We prove Theorem \ref{t:1} in the remainder of this section, up to some more technical aspects that we treat with more care in the next section. The key initial step is decompose the solution into a constant piece near infinity plus an exponentially localized perturbation. Specifically, we introduce a smooth cutoff function $\chi_-$ satisfying $0 \leq \chi_- \leq 1$, $\chi_- \equiv 1$ for sufficiently negative $x$, and $\chi_- \equiv 0$ for $x$ sufficienly close to zero, and look for solutions of the form
\begin{equation}
	\begin{pmatrix}
	\psi \\ v \\ m 
	\end{pmatrix} =
	\begin{pmatrix}
	\hat{\psi} + \chi_- \eta \\
	v \\
	\hat{m} + \chi_- (\eta - \eta^3)
	\end{pmatrix}, \label{e:loc Ansatz}
\end{equation}
where $\eta$ is a constant to be determined, and $\hat{\psi}, v,$ and $\hat{m}$ decay exponentially at $-\infty$. We then write \eqref{e:chtw0} as an equation for $\hat{\psi}, v, \hat{m}$ and $\eta$ in the form 
\begin{equation}
F(\hat{\psi}, v, \hat{m}, \eta; c_x) = 0 \label{e: IFT eqn}
\end{equation}
in an appropriately chosen exponentially weighted function space. The function $F$ is given explicitly by 
\begin{equation}
F(\hat{\psi}, v, \hat{m}, \eta; c_x) = 
\begin{pmatrix}
\hat{\psi}' + \chi_-' \eta - v \\
v' + \hat{\psi} - \hat{m} - \hat{\psi}^3 - 3 \hat{\psi}^2 \chi_- \eta - 3 \hat{\psi} (\chi_- \eta)^2 + (\chi_- - \chi_-^3) \eta^3 \\
\hat{m}' - c_x \hat{\psi} + \chi_-' (\eta-\eta^3) + c_x (1-\chi_-) \eta 
\end{pmatrix}. \label{e: F}
\end{equation}
The Ansatz \eqref{e:loc Ansatz} is chosen so that $(\psi, v, m)$ converges exponentially  to an equilibrium solution of \eqref{e:chtw0} at $-\infty$, hence \eqref{e: IFT eqn} has a trivial solution at $c_x = 0$ given by 
\begin{align}
\eta_0 &= \eta_*^\mathrm{cl/d}, \notag \\
\hat{\psi}_0 &= \psi_\mathrm{cl/d} - \chi_- \eta_*^\mathrm{cl/d}, \notag \\
v_0 &= \psi_\mathrm{cl/d}', \notag \\
\hat{m}_0 &= (1- \chi_-) \left( \eta_*^\mathrm{cl/d} - (\eta_*^\mathrm{cl/d})^3 \right) \label{e: trivial solution}
\end{align}
where $\eta_*^\mathrm{cl} = \sqrt{2/3}$ and $\eta_*^\mathrm{d} = 1$  are the limits of $\psi_\mathrm{cl/d}$ at $-\infty$. 
For ease of notation, let $u_0^\mathrm{cl/d} = (\hat{\psi_0}, v_0, \hat{m}_0, \eta_0; 0)$ denote this trivial solution. We will drop the sub-/super-scripts $\mathrm{cl/d}$ when the difference is irrelevant. 

In the following section, we use Fredholm properties to prove that the linearization of $F$ at the trivial solution is invertible in appropriately chosen spaces. The implicit function theorem then guarantees the existence of unique solutions to \eqref{e: IFT eqn} near this trivial solution as well as smooth dependence on $c_x$, proving the first part of Theorem \ref{t:1}. Uniqueness guarantees $\eta_\mathrm{cl/d} (0) = \eta_*^\mathrm{cl/d}$, giving the zeroth order terms in the asymptotics. In computing the coefficient at the next order, we make use of the following fact, obtained in the course of the existence argument.

\noindent \textit{Fact}: Let $\mathcal{L} = \partial_{\hat{\psi}, v, \hat{m}} F (u_0)$ denote the linearization of $F$ at its trivial solution with respect to its first three arguments, and let $\mathcal{L}^*$ be the adjoint of $\mathcal{L}$ with respect to the standard $L^2$ inner product. The kernel of $\mathcal{L}^*$ is one dimensional, spanned by $e_*^\mathrm{cl} = (-\psi_\mathrm{cl}'', \psi_\mathrm{cl}', -\psi_\mathrm{cl})$ for clamped boundary conditions and by $e_*^\mathrm{d} = (0, 0, 1)$ for Dirichlet boundary conditions. 

Once we have existence of solutions and smooth dependence on parameters, differentiating \eqref{e: IFT eqn}  gives, via the chain rule, 
\begin{equation}
\mathcal{L} \left(\partial_{c_x} (\hat{\psi}(c_x), v(c_x), \hat{m}(c_x))\big|_{c_x = 0}\right) + \partial_\eta F(u_0) \eta'(0) + \partial_{c_x} F(u_0) = 0. 
\end{equation}
The first term may be eliminated by projecting onto the kernel of $\mathcal{L}^*$, which is orthogonal to the range of $\mathcal{L}$. We thereby find explicit expressions for $\eta'(0)$ in terms of projections onto the adjoint kernel: 
\begin{equation}
\eta'_\mathrm{cl/d} (0) = -\frac{\langle \partial_{c_x} F(u_0^\mathrm{cl/d}), e_*^\mathrm{cl/d} \rangle}{\langle \partial_\eta F(u_0^\mathrm{cl/d}), e_*^\mathrm{cl/d} \rangle}. \label{e: d eta 0}
\end{equation}
These derivatives are given explicitly, after some simplification, by 
\begin{equation}
\partial_\eta F(u_0) = \begin{pmatrix}
	\chi_-' \\
	3 \chi_- \left((\eta_*^\mathrm{cl/d})^2-(\psi_*^\mathrm{cl/d})^2 \right) \\
	\chi_-'(1- 3 (\eta_*^\mathrm{cl/d})^2)
    \end{pmatrix}
\text{ and }
\partial_{c_x} F(u_0) = \begin{pmatrix}
0 \\
0 \\
\eta_*^\mathrm{cl/d} - \psi_\mathrm{cl/d}
\end{pmatrix}.
\end{equation}
For clamped boundary conditions, we find 
\begin{align}
\langle \partial_{c_x} F(u_0^\mathrm{cl}), e_*^\mathrm{cl} \rangle &= \langle \partial_{c_x} F(u_0^\mathrm{cl}), (-\psi_\mathrm{cl}'', \psi_\mathrm{cl}', -\psi_\mathrm{cl}) \rangle \notag \\
	&= \int_{-\infty}^0 (\eta_*^\mathrm{cl} - \psi_\mathrm{cl}) (-\psi_\mathrm{cl}) \, \rmd x \notag 
	= \int_{\eta_*^\mathrm{cl}}^0 (\psi_\mathrm{cl} - \eta_*^\mathrm{cl}) \psi_\mathrm{cl} \frac{d \psi_\mathrm{cl}}{\psi_\mathrm{cl}'} \notag \\
	&= \int_{\eta_*^\mathrm{cl}}^0 \frac{(\psi_\mathrm{cl} - \eta_*^\mathrm{cl}) \psi_\mathrm{cl}}{(\psi_\mathrm{cl}-\eta_*^\mathrm{cl}) \sqrt{\frac{1}{2} \psi_\mathrm{cl} (\psi_\mathrm{cl}+2\eta_*^\mathrm{cl})}} d \psi_\mathrm{cl} \notag 
	= \int_{\eta_*^\mathrm{cl}}^0 \frac{\sqrt{2 \psi_\mathrm{cl}}}{\sqrt{\psi_\mathrm{cl}+2\eta_\mathrm{cl}}} \, d \psi_\mathrm{cl} \notag \\
	&= \frac{2 \log(2+ \sqrt{3})}{\sqrt{3}} - 2, \label{e: M cx cl}
\end{align}
where we have used \eqref{e:psi} to write $\psi_\mathrm{cl}'$ in terms of $\psi_\mathrm{cl}$ so that we may compute the integral with respect to $\psi_\mathrm{cl}$ explicitly.  For the other factor, we find, integrating by parts, 
\begin{align}
\langle \partial_\eta F(u_0^\mathrm{cl}), e_*^\mathrm{cl} \rangle &= \langle \partial_\eta F(u_0^\mathrm{cl}), (- \psi_\mathrm{cl}'', \psi_\mathrm{cl}', -\psi_\mathrm{cl}) \rangle \notag \\
&= \int_{-\infty}^0 -\psi_\mathrm{cl}'' \chi_-' + \psi_\mathrm{cl}' (3 \chi_- ((\eta_*^\mathrm{cl})^2 - \psi_\mathrm{cl}^2)) - \psi_\mathrm{cl}\chi_-'(1-3(\eta_*^\mathrm{cl})^2) \, \rmd x \notag \\
	&= \int_{-\infty}^0 (-\psi_\mathrm{cl}'' - \psi_\mathrm{cl} (1- 3 (\eta_*^\mathrm{cl})^2) - 3 \psi_\mathrm{cl} (\eta_*^\mathrm{cl})^2 +\psi_\mathrm{cl}^3 ) \chi_-' \, \rmd x \notag\\
	&\hspace*{2.5in} + \left[ (3 \psi_\mathrm{cl} \chi_- (\eta_*^\mathrm{cl})^2) - \psi_\mathrm{cl}^3 \chi_- \right]^0_{-\infty} \notag \\
	&= \int_{-\infty}^0 -m_* \chi_-' \, \rmd x + 3 \psi_\mathrm{cl} (\eta_*^\mathrm{cl})^2 - \psi_\mathrm{cl}^3 \big|_{-\infty} \notag \\
	&= m_* - 2 \eta_*^3 
	= \eta_*^\mathrm{cl} - 3 (\eta_*^\mathrm{cl})^3
    = - \sqrt{\frac{2}{3}}. \label{e: M eta cl}
\end{align}
Inserting \eqref{e: M cx cl} and \eqref{e: M eta cl} into \eqref{e: d eta 0} gives the linear coefficient in the asymptotics presented in Theorem \ref{t:1} for clamped boundary conditions. 

For Dirichlet boundary conditions, for which $e_*^\mathrm{d} = (0,0,1)$, we instead find
\begin{align}
\langle \partial_{c_x} F(u_0^d), e_*^\mathrm{d} \rangle  &= \langle \partial_{c_x} F(u_0^d), (0, 0, 1) \rangle \notag \\
&
= \int_{-\infty}^0 \eta_*^d - \psi_d \, \rmd x \notag 
= \int_{-\infty}^0 1 + \tanh \left( \frac{x}{\sqrt{2}} \right) \, \rmd x \notag 
= \sqrt{2} \log (2)
\end{align}
and 
\begin{align}
\langle \partial_\eta F(u_0^d), e_*^\mathrm{d} \rangle &= \langle \partial_\eta F(u_0^d), (0, 0, 1) \rangle \notag \\
&= \int_{-\infty}^0 \chi_-' \left( 1-3(\eta_*^d)^2 \right) \, \rmd x \notag 
= 3 (\eta_*^d)^2 -1 \notag 
= 2,
\end{align}
giving the linear asymptotics for Dirichlet boundary conditions, which completes the proof of Theorem \ref{t:1}, up to the technical details which we present in the next section. 

\subsection{Weighted spaces, Fredholm properties, and the implicit function theorem}\label{s:3.3}
We construct function spaces $X_\mathrm{cl/d}$ and view $F$ as an operator $F: X_\mathrm{cl/d} \times \R^2 \to (L^2_\delta (\R^-))^3$, defined as follows. First,  for $\delta>0$ small, let $H^1_\delta(\R^-)$ denote the weighted Sobolev space of weakly differentiable functions on $x < 0$ with finite $H^1_\delta$ norm, given by 
\begin{equation}
||f(x)||_{H^1_\delta}^2 = ||e^{-\delta x} f(x)||_{H^1}^2 = \int_{-\infty}^0 (|f(x)|^2 + |f'(x)|^2) e^{-2\delta x} \, \rmd x.
\end{equation}
Then, we define
\begin{equation}
X_\mathrm{cl} = \{ (\hat{\psi}, v, \hat{m}) \in (H^1_\delta (\R^-))^3 : \hat{\psi}(0) = v(0) = 0  \},
\end{equation}
and
\begin{equation}
X_\mathrm{d} = \{ (\hat{\psi}, v, \hat{m}) \in (H^1_\delta (\R^-))^3 : \hat{\psi}(0) = \hat{m}(0) = 0 \}
\end{equation}
as the subspaces of $(H^1_\delta (\R^-))^3$ satisfying clamped/Dirichlet boundary conditions, respectively. The linear operator $\mathcal{L}: X_\mathrm{cl/d} \to (L^2_\delta (\R^-))^3$ defined in the previous section is the Frech\'et derivative of $F$ evaluated at the trivial solution defined in \eqref{e: trivial solution}, and is given explicitly by the expected pointwise derivative, 
\begin{equation}
\mathcal{L} \begin{pmatrix}
	\hat{\psi}_1 \\ v_1 \\ \hat{m}_1\end{pmatrix} = \begin{pmatrix}
	\hat{\psi}_1' - v_1 \\
	v_1' + (1-3\psi_\mathrm{cl/d}^2)\hat{\psi}_1 - \hat{m}_1 \\
	\hat{m}_1',
	\end{pmatrix}.
\end{equation}
Our argument is based on the Fredholm properties of $\mathcal{L}$, which determines our choice of the exponential weight $\delta$. 

\begin{lemma}[Fredholm properties]
 For $\delta > 0$ sufficiently small, $\mathcal{L}$ is a Fredholm operator with index -1, trivial kernel, and one dimensional cokernel.  
\end{lemma}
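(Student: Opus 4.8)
The plan is to analyze $\mathcal{L}$ as an asymptotically constant-coefficient first-order ODE operator on the half-line and read off its Fredholm properties from the dichotomy of the limiting matrix at $-\infty$ together with the boundary conditions at $x=0$. First I would rewrite $\mathcal{L}(\hat\psi_1,v_1,\hat m_1) = \partial_x U - A(x) U$ with $U=(\hat\psi_1,v_1,\hat m_1)^T$, where $A(x) \to A_\infty$ exponentially as $x\to-\infty$; here $A_\infty$ is the linearization of the $x$-independent vector field in \eqref{e:chtw0} (with $c_x=0$, so $m$ still a conserved quantity) at the equilibrium $\psi=\eta_*$, namely the matrix with rows $(0,1,0)$, $(3\eta_*^2-1,0,1)$, $(0,0,0)$. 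A short computation gives the eigenvalues of $A_\infty$: one is $0$ (from the conserved $m$ direction) and the other two are $\pm\sqrt{3\eta_*^2-1}$, which are real and nonzero since $3\eta_*^2-1 = 1 > 0$ for clamped ($\eta_*^2 = 2/3$) and $=2>0$ for Dirichlet ($\eta_*=1$). So $A_\infty$ is hyperbolic on the exponentially weighted space: conjugating by $e^{-\delta x}$ shifts the spectrum by $-\delta$, and for $0<\delta$ small the shifted matrix $A_\infty - \delta I$ has two eigenvalues with negative real part (the $0$ eigenvalue moves to $-\delta$, and $-\sqrt{3\eta_*^2-1}-\delta<0$) and one with positive real part ($\sqrt{3\eta_*^2-1}-\delta>0$). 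Thus on $L^2_\delta(\R^-)$ the operator has a two-dimensional stable subspace and a one-dimensional unstable subspace at $-\infty$.

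Next I would invoke the standard Fredholm theory for such asymptotically autonomous operators on a half-line (as in Palmer's theorem / the exponential-dichotomy framework, e.g. as used in \cite{hs} and the references on far-field--core decompositions): on $\R^-$ with a boundary condition at $x=0$ cutting out a subspace $Y\subset\C^3$ of dimension $\dim Y = 2$ (both $X_\mathrm{cl}$ and $X_\mathrm{d}$ impose two scalar conditions $\hat\psi(0)=v(0)=0$, resp. $\hat\psi(0)=\hat m(0)=0$), the operator $\mathcal{L}$ is Fredholm with index $= \dim Y - \dim E^\mathrm{u}_\infty = 2 - \dim(\text{space of bounded-at-}-\infty \text{ solutions among the unweighted stable/center part})$. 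More carefully: in the weighted space the solutions that decay at $-\infty$ form the generalized stable subspace $E^\mathrm{s}_\delta$ of dimension $2$; the Fredholm index equals $\dim E^\mathrm{s}_\delta - (\text{codim of } Y) = 2 - (3-2) \cdot ?$ — here I would be careful and instead use the clean formula: for $\partial_x - A(x)$ on $\R^-$ restricted to $\{U: U(0)\in Y\}$ mapping into $L^2_\delta$, the index is $\dim Y + \dim E^\mathrm{u}_{\infty,\delta} - n$ where $n=3$; with $\dim Y = 2$ and $\dim E^\mathrm{u}_{\infty,\delta}=1$ this gives $2+1-3-1 = -1$ after accounting for the derivative losing no dimension — the bookkeeping I would pin down precisely is that the asymptotic stable subspace is $2$-dimensional while the boundary imposes $2$ conditions on a $3$-dimensional fiber, for a net index $2 + 2 - 3 = 1$? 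This sign/counting is exactly the place I expect to have to be most careful, so I would fix conventions by an explicit model computation (see below) rather than quoting a formula.

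To nail the kernel and cokernel (and hence confirm index $-1$), I would compute directly. The kernel consists of $U\in L^2_\delta$ solving $U' = A(x)U$ with $U(0)\in Y$: the third equation gives $\hat m_1' = 0$, so $\hat m_1$ is constant, and to lie in $L^2_\delta$ with $\delta$ small it must be $\hat m_1\equiv 0$ (a nonzero constant is not in $L^2_\delta(\R^-)$ since $e^{-2\delta x}\to\infty$). With $\hat m_1 = 0$ the first two equations reduce to the linearized pendulum $v_1' = -(1-3\psi_\mathrm{cl/d}^2)\,\hat\psi_1\cdot(-1)$, i.e. $\hat\psi_1'' + (3\psi_\mathrm{cl/d}^2-1)\hat\psi_1 = 0$ — wait, signs: $\hat\psi_1' = v_1$, $v_1' = (1-3\psi_\mathrm{cl/d}^2)\hat\psi_1$, so $\hat\psi_1'' = (1-3\psi_\mathrm{cl/d}^2)\hat\psi_1$. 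This second-order equation has $\psi_\mathrm{cl/d}'$ as a solution (differentiate the profile equation $\psi'' = \psi - \psi^3 + m$, using that $m_* $ is constant), and $\psi_\mathrm{cl/d}'\to 0$ exponentially at $-\infty$ so it lies in $L^2_\delta$ for $\delta$ small. The second linearly independent solution grows exponentially at $-\infty$, hence is excluded. So the solution space decaying at $-\infty$ is exactly $\mathrm{span}\{(\psi_\mathrm{cl/d}', \psi_\mathrm{cl/d}'', 0)\}$, one-dimensional; imposing the two boundary conditions at $x=0$ then generically kills it (e.g. for Dirichlet one needs $\psi_\mathrm{d}'(0)=0$ and $\psi_\mathrm{d}''(0)=0$ simultaneously, and $\psi_\mathrm{d}=\tanh(x/\sqrt2)$ has $\psi_\mathrm{d}'(0)\neq0$; for clamped one checks $\psi_\mathrm{cl}'(0)\ne 0$ from \eqref{e:psi}). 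Hence $\ker\mathcal{L}=\{0\}$. For the cokernel I would identify $\mathcal{L}^*$ with respect to the $L^2$ pairing (the adjoint in the weighted space is the unweighted adjoint conjugated by the weight, but for the dimension count it suffices to compute the formal $L^2$ adjoint together with the adjoint boundary conditions) and solve $\mathcal{L}^* e_* = 0$: this is the computation already asserted in the \textit{Fact}, yielding a one-dimensional cokernel spanned by $(-\psi_\mathrm{cl}'',\psi_\mathrm{cl}',-\psi_\mathrm{cl})$ (clamped) or $(0,0,1)$ (Dirichlet); one verifies these lie in the appropriate dual weighted space and satisfy the adjoint boundary conditions. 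With $\dim\ker = 0$ and $\dim\mathrm{coker} = 1$ we get Fredholm index $-1$, as claimed. The main obstacle, as flagged, is organizing the exponential-dichotomy / boundary-condition bookkeeping so the index comes out with the correct sign; the direct kernel/cokernel computation serves as an independent check that resolves any ambiguity.
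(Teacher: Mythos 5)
Your proposal is correct and follows essentially the same route as the paper: Fredholm property/closed range via Palmer-type exponential-dichotomy theory, trivial kernel by reducing to the linearized profile equation (solved by $\psi_{\mathrm{cl/d}}'$, whose second, linearly independent solution grows exponentially, and which is then excluded by the boundary conditions at $x=0$), and a one-dimensional cokernel computed as the kernel of the $L^2$-adjoint equipped with the adjoint boundary conditions. The only blemishes are minor: your displayed profile equation and kernel ODE carry a consistent sign flip (the profile solves $\psi''=-\psi+\psi^3+m$, so the kernel equation is $\hat\psi_1''+(1-3\psi_{\mathrm{cl/d}}^2)\hat\psi_1=0$), and your dichotomy index bookkeeping is admittedly muddled, but, as you yourself note, the direct kernel/cokernel computation settles the index at $-1$ exactly as in the paper.
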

\begin{proof}
If $(\hat{\psi_1}, v_1, \hat{m}_1)$ belongs to the kernel of $\mathcal{L}$, from the third equation we must have $\hat{m_1} \equiv const.$, but constants do not belong to our weighted space, so $\hat{m}_1 \equiv 0$. The first two equations then reduce to 
\begin{equation}
\hat{\psi}_1'' + (1-3\psi_\mathrm{cl/d}^2)\hat{\psi}_1 = 0, \label{e: psi linearization} 
\end{equation}
with $v_1 = \hat{\psi}_1'$. This is the linearization of the equation solved by $\psi_{cl/d}$, about this solution.
Translation invariance of the original equation guarantees that $\hat{\psi}_1 = \psi_\mathrm{cl/d}'$ is a solution to \eqref{e: psi linearization}. Since the Wronskian is constant, a second, linearly independent solution to  \eqref{e: psi linearization} necessarily grows exponentially and we conclude that $\psi_\mathrm{cl/d}$ is the unique  solution that is bounded at $x=-\infty$. Hence, for sufficiently small $\delta > 0$, $\hat{\psi}_1 = \psi_\mathrm{cl/d}'$ is the only solution to \eqref{e: psi linearization} which is contained in our weighted space. In the case of clamped boundary conditions $\hat{\psi}_1'(0) = \psi_\mathrm{cl}''(0) \neq 0$, so the solution does not satisfy the boundary conditions at $x = 0$. For Dirichlet boundary conditions, $\hat{\psi}(0) = \psi_d'(0) \neq 0$, and again the boundary conditions are  not satisfied. Thus, the kernel of $\mathcal{L}$ is trivial. 

We find the cokernel by viewing $\mathcal{L}$ as a closed, densely defined operator on $(L^2(\R^-))^3$ and computing its adjoint $\mathcal{L}^*$ with respect to the standard $L^2$ inner product. The boundary conditions for the adjoint are the orthogonal complement to the boundary conditins for $\mathcal{L}$, i.e. the domain of $\mathcal{L}^*$ is the dense subspace of $(L^2(\R^-))^3$ defined by
\begin{equation}
Y_\mathrm{cl} = \{(\hat{\psi}_1, v_1, \hat{m}_1) \in ( H^1_{-\delta} (\R^-))^3 : \hat{m}_1 (0) = 0 \} \label{e: adj domain cl}
\end{equation}
in the clamped case and 
\begin{equation}
Y_\mathrm{d} = \{(\hat{\psi}_1, v_1, \hat{m}_1) \in ( H^1_{-\delta} (\R^-))^3 : v_1 (0) = 0 \} \label{e: adj domain d}
\end{equation}
in the Dirichlet case. In both cases, $\mathcal{L}^*$ is defined by the formula 
\begin{equation}
\mathcal{L}^* \begin{pmatrix}
	\hat{\psi}_1 \\ v_1 \\ \hat{m}_1 
	\end{pmatrix}
	= \left[ - \frac{\rmd}{\rmd x} + \begin{pmatrix}
	0 & 1-3\psi_\mathrm{cl/d}^2 & 0 \\
	-1 & 0 & 0 \\
	0 & -1 & 0
	\end{pmatrix}
	\right]
	\begin{pmatrix}
	\hat{\psi}_1 \\ v_1 \\ \hat{m}_1 
	\end{pmatrix} =
    \begin{pmatrix}
    -\hat{\psi}_1' + (1-3\psi_\mathrm{cl/d}^2)v_1 \\
    -v_1' - \hat{\psi}_1 \\
    -\hat{m}_1' - v_1
    \end{pmatrix}. 
\end{equation}
Searching for the kernel of $\mathcal{L}^*$ reduces to solving 
\begin{align}
v_1'' + (1-3\psi_\mathrm{cl/d}^2) v_1 &= 0, \notag \\
\hat{\psi}_1 &= -v_1', \notag \\
\hat{m}_1' &= - v_1
\end{align}
with appropriate boundary conditions given by \eqref{e: adj domain cl} and \eqref{e: adj domain d}, respectively. The equation for $v_1$ is again the linearization of the equation for $\psi_\mathrm{cl/d}$, hence we obtain solutions 
\[
(\hat{\psi}_1, v_1, \hat{m}_1) = (-\alpha \psi_\mathrm{cl/d}'', \alpha \psi_\mathrm{cl/d}', -\alpha \psi_\mathrm{cl/d} + \beta),
\]
for arbitrary constants $\alpha$ and $\beta$. Choosing $\delta$ sufficiently small again guarantees that these are the only possible solutions. For clamped boundary conditions \eqref{e: adj domain cl}, we obtain $\beta = 0$, hence the cokernel is spanned by $e_*^\mathrm{cl} = (-\psi_\mathrm{cl}'', \psi_\mathrm{cl}', -\psi_\mathrm{cl})$. Dirichlet boundary conditions \eqref{e: adj domain d} force $\alpha = 0$, since $\psi_d'(0) \neq 0$, so in this case the cokernel is spanned by $e_*^\mathrm{d} = (0, 0, 1)$. Note that constants and asymptotically constant functions are allowed in our space due to the exponential weight, now appearing with opposite sign $-\delta$ for the $L^2$-dual of $L^2_{\delta}$. In either case, the cokernel is one dimensional, as claimed. 

To complete the proof, one needs to verify that $\mathcal{L}$ has closed range. This follows from an abstract closed range lemma (see \cite{schwarz1993morse}) or the methods of Palmer \cite{palmer}. 
\end{proof}

In order to solve \eqref{e: IFT eqn} using the implicit function theorem, we make use of our far-field/core decomposition to treat $\eta$ as a variable. Provided the linearization $\partial_\eta F(u_0^\mathrm{cl/d})$ does not lie in the range of $\mathcal{L}$, appending $\eta$ as a variable increases the dimension of the range of the derivative of $F$ by 1, and hence the derivative becomes invertible. The implicit function theorem then gives the existence of a unique solution $(\hat{\psi} (c_x), v(c_x), \hat{m}(c_x), \eta(c_x); c_x)$ near $u_0$ in $X_\mathrm{cl/d} \times \R^2$ (in particular for sufficiently small $c_x$), depending smoothly on $c_x$. Thus, the argument is complete once we prove the following lemma:

\begin{lemma}[Transversality] \label{l:tr}
The derivative $\partial_\eta F(u_0^\mathrm{cl/d})$ does not lie in the range of $\mathcal{L}$. 
\end{lemma}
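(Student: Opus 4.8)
The plan is to reduce the non-membership statement to the nonvanishing of a single scalar pairing, which has in fact already been computed in the course of the asymptotic expansion in \S\ref{s:3.2}. The input is the Fredholm lemma just proved: $\mathcal{L}:X_\mathrm{cl/d}\to(L^2_\delta(\R^-))^3$ has closed range, trivial kernel, and one-dimensional cokernel, with $\ker\mathcal{L}^*=\mathrm{span}\,e_*^\mathrm{cl/d}$, where $\mathcal{L}^*$ is the formal adjoint taken with respect to the unweighted $L^2$ pairing between $L^2_\delta(\R^-)$ and $L^2_{-\delta}(\R^-)$ (the opposite weight is precisely what makes $e_*^\mathrm{cl/d}$, in particular the constant vector $(0,0,1)$, square integrable against that weight). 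Because the range of $\mathcal{L}$ is closed, the closed-range theorem identifies it with the annihilator of $\ker\mathcal{L}^*$ under this pairing; hence a right-hand side $f\in(L^2_\delta(\R^-))^3$ lies in $\mathrm{range}(\mathcal{L})$ if and only if $\langle f,e_*^\mathrm{cl/d}\rangle_{L^2}=0$.

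First I would verify that $\partial_\eta F(u_0^\mathrm{cl/d})$ is an admissible right-hand side, i.e.\ that it belongs to $(L^2_\delta(\R^-))^3$: its first and third components are multiples of $\chi_-'$ and hence compactly supported, while its middle component $3\chi_-((\eta_*^\mathrm{cl/d})^2-\psi_\mathrm{cl/d}^2)$ decays exponentially as $x\to-\infty$ because $\psi_\mathrm{cl/d}$ approaches its limit $\eta_*^\mathrm{cl/d}$ exponentially (in the Dirichlet case this component is a multiple of $\mathrm{sech}^2(x/\sqrt2)$). After shrinking $\delta$ below the relevant exponential rate if necessary, the criterion of the previous paragraph applies. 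It then suffices to quote the values of $\langle\partial_\eta F(u_0^\mathrm{cl/d}),e_*^\mathrm{cl/d}\rangle$ already obtained: for clamped boundary conditions \eqref{e: M eta cl} gives $m_*-2\eta_*^3=\eta_*^\mathrm{cl}-3(\eta_*^\mathrm{cl})^3=-\sqrt{2/3}$, and for Dirichlet boundary conditions the analogous computation gives $3(\eta_*^\mathrm{d})^2-1=2$. Both are nonzero, so in each case $\partial_\eta F(u_0^\mathrm{cl/d})\notin\mathrm{range}(\mathcal{L})$, which is exactly Lemma \ref{l:tr}. Consequently, appending $\eta$ as a variable enlarges the range of the full derivative of $F$ at $u_0$ by one dimension, making it onto, and, together with $\ker\mathcal{L}=\{0\}$, invertible — precisely what the implicit function theorem argument requires.

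The main obstacle, such as it is, is purely bookkeeping: making sure that ``range of $\mathcal{L}$ equals the annihilator of its cokernel'' is invoked with the correct duality pairing, so that the scalar $\langle\partial_\eta F(u_0),e_*\rangle$ genuinely is the obstruction to solvability. Once that is pinned down, this pairing is exactly the integral already evaluated in \S\ref{s:3.2}, where the boundary contribution at $x=0$ is absorbed using $\psi_\mathrm{cl}(0)=\psi_\mathrm{cl}'(0)=0$ in the clamped case and $\psi_\mathrm{d}(0)=0$ in the Dirichlet case, so no new computation is needed.
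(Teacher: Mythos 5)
Your proposal is correct and follows essentially the same route as the paper: the range of $\mathcal{L}$ is the annihilator of $\ker\mathcal{L}^*$ (closed range from the Fredholm lemma), and the pairings $\langle\partial_\eta F(u_0^\mathrm{cl/d}),e_*^\mathrm{cl/d}\rangle$ computed in \S\ref{s:3.2} are nonzero ($-\sqrt{2/3}$ and $2$), so $\partial_\eta F(u_0^\mathrm{cl/d})$ cannot lie in the range. Your added checks (that the right-hand side lies in $(L^2_\delta(\R^-))^3$ and that the duality pairing uses the weight $-\delta$ on the adjoint side) are sensible bookkeeping that the paper leaves implicit.
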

\begin{proof}
The range of $\mathcal{L}$ is orthogonal to the kernel of $\mathcal{L}^*$. In \S\ref{s:3.2}, we computed the projections of $\partial_\eta F(u_0^\mathrm{cl/d})$ onto the respective adjoint kernels, and found them to be nonzero, which proves the lemma. 
\end{proof}
\begin{remark}[Geometry]
 The fact that  $\partial_\eta F(u_0^\mathrm{cl/d})$ does not lie in the range, or, equivalently, that the scalar product with the kernel of the adjoint does not vanish, has an equivalent geometric interpretation in terms of transversality, hence the name of Lemma \ref{l:tr}. In the Dirichlet case, both unstable manifold $\eta=1$ and the subspace of solutions satisfying the boundary conditions are one-dimensional. Adding the asymptotic state $\eta$ as a parameter, we merely consider the center-unstable manifold, now two-dimensional, and show that it intersects the boundary subspace transversely. Inspecting the phase portrait, this fact is of course quite obvious. 
\end{remark}

\section{Moderate growth rates: from oblique stripes to zigzags through homoclinic bifurcations}\label{s:4}

Beyond the existence result for small speed, we analyze here the dynamics in the Cahn-Hilliard approximation \eqref{e:cht} with clamped or Dirichlet boundary conditions at $x=0$. We discuss existence of oblique stripe solutions and their stability for both clamped and Dirichlet boundary conditions, based on numerical continuation. In particular, we describe the kink-dragging bubble in \S\ref{s:4.1}. We then describe in more detail the endpoint of the bubble, where oblique stripes disappear in a saddle-node bifurcations that gives rise to time-periodic solutions, \S\ref{s:4.2}. We conclude in \S\ref{s:4.3} with a discussion of larger speeds, when the periodic kink-shedding detaches from the boundary.

\subsection{The kink-dragging bubble}\label{s:4.1}
We solved \eqref{e:chtw0} numerically setting $c_y=-\eta c_x$, 
\begin{align}
\psi_x&=v\notag\\
v_x&=-\psi+\psi^3+m\notag\\
m_x&=c_x(\psi-\eta),\label{e:chtw00}
\end{align}
with boundary conditions
\begin{equation}\label{e:bch}
\begin{array}{lll}\psi=0,&\qquad \tau v+(1-\tau) m =0,& \mbox{ at } x=0,\\
\psi=\eta,&\qquad v=0,& \mbox{ at } x=-L,
\end{array}
\end{equation}
using second-order finite differences for $L=100$ and grid spacing $dx=0.01$, observing no noticeable changes after increasing $L$ or decreasing $dx$. The parameter $\tau$ interpolates between Dirichlet boundary conditions $\psi=\psi_{xx}=0$ at $\tau=0$ and clamped boundary conditions $\psi=\psi_x=0$  at $\tau=1$.

Note that the three-dimensional ODE is then equipped with 4 boundary conditions, and the resulting overdetermined system is solved by leaving the asymptotic angle $\eta$ as a free variable, a procedure which mimics well the Fredholm analysis in \S\ref{e:chtw0}. The number of boundary conditions at $x=-L$ can also be understood as defining a one-dimensional linear subspace which approximates the one-dimensional unstable manifold of the equilibrium $\psi=\eta$ at zeroth order. 

\begin{figure}[h] \centering
\includegraphics[width=\linewidth,trim = {3.8cm .5cm 3.8cm 0},clip]{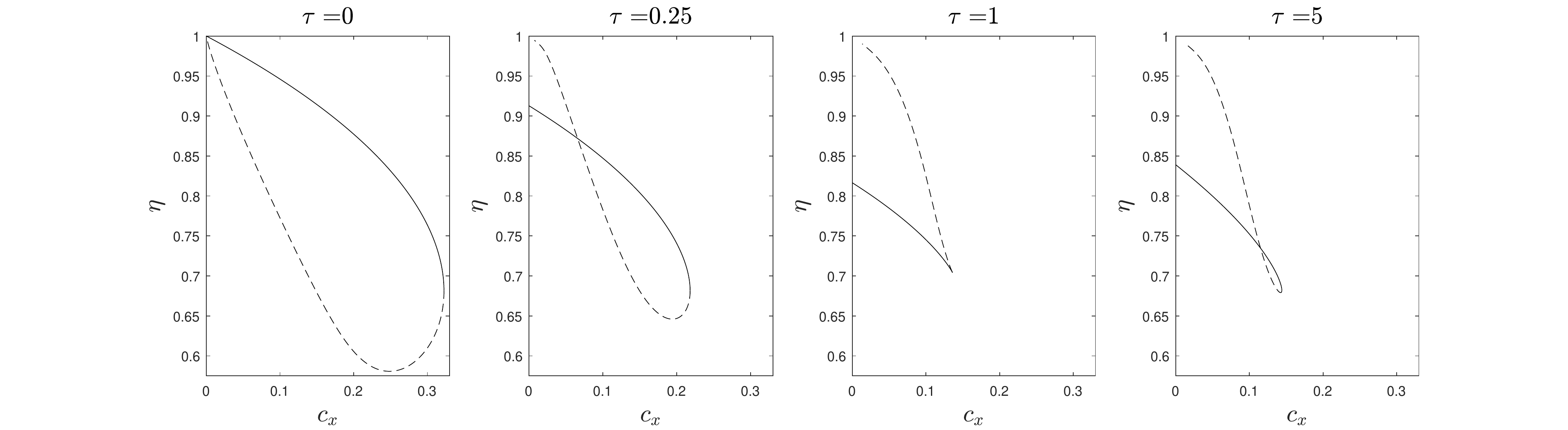}
\caption{Bifurcation diagrams of heteroclinic orbits in \eqref{e:chtw00} for several values of the boundary homotopy parameter $\tau$ \eqref{e:bch}.Shown is the value $\eta=\psi(x=-L)$ representing the angle of oblique stripes as a function of the speed $c_x$. Solid lines correspond to linearly stable, dashed to unstable solutions.}
\label{fig:teardropPlot}
\end{figure}

We used arclength continuation starting at $c_x=0$ to compute solutions for positive $c_x$; see Figure \ref{fig:teardropPlot}. The saddle-node bifurcation occurs at $c_x^\mathrm{sn}=0.136$, $\eta^\mathrm{sn}=0.704$ for clamped boundary conditions and $c_x^\mathrm{sn}=0.322$, $\eta^\mathrm{sn}=0.681$ for Dirichlet boundary conditions. Note that the saddle-node is apparently degenerate in the projection onto $\eta$ in the case of clamped boundary conditions, a fact that we corroborated by computing the kernel at the saddle-node location which exhibits a zero $\eta$-component. We illustrate how the ``folding'' at the saddle-node changes orientation near $\tau=1$ by continuing the homotopy past $\tau=1$. Incidentally, we found that $c_x^\mathrm{sn}$ is minimal at $\tau=1$.

Figure \ref{fig:teardropsols} shows selected solutions profiles and spectra of linearized operators obtained from linearizing \eqref{e:cht} at these stationary solutions. We notice that, continuing through the saddle-node, solution profiles turn non-monotone at the bifurcation point and, continuning back to $c_x=0$ on the unstable branch,  ultimately develop a kink. In an unbounded domain, solutions on the unstable branch converge locally uniformly as $c_x\searrow 0$ to the reflected solution with $\eta=-\eta(c_x=0)$, while a kink near $x=-\infty$ mediates a jump back from $\eta=-1$ to $\eta=+1$. 

The linearized spectra, computed in large domains, approximate the extended point spectrum and the absolute spectrum in the unbounded domain; see \cite{ssabs}. We computed the curves given by the absolute spectrum  of $-\partial_x^4+(1-3\eta^2)\partial_x^2+c_x\partial_x$ via continuation as outlined in \cite{rss}. We confirmed that most eigenvalues cluster on these curves, with the exception of a simple isolated real  eigenvalue that crosses the origin in the saddle-node bifurcation. 


The rightmost points of the absolute spectrum are pinched double roots which are stable as long as the selected state $\eta(c_x^\mathrm{sn})$ is convectively stable (which is true for all computed profiles, here since $\eta>1/\sqrt{3}$ is linearly stable). Note that the spectrum in the unbounded domain contains a branch of continuous spectra, inherited from the linearization at $\psi\equiv\eta$, that can be readily computed using Fourier transform as 
\[
\lambda = -k^4+(1-3\eta^2)k^2+c_x\rmi k,\qquad k\in\R.
\]
The zero mode $\lambda=k=0$ is caused by neutral mass conservation at $x=-\infty$. 

As a consequence, absent a spectral gap between the zero eigenvalue and the continuous spectrum, it is not immediately clear how this saddle-node bifurcation could be analyzed using temporal center-manifold reductions, for instance. We will nevertheless pursue such a reduction formally in the next subsection. 
\begin{figure}[h] \centering
\includegraphics[width=1\textwidth]{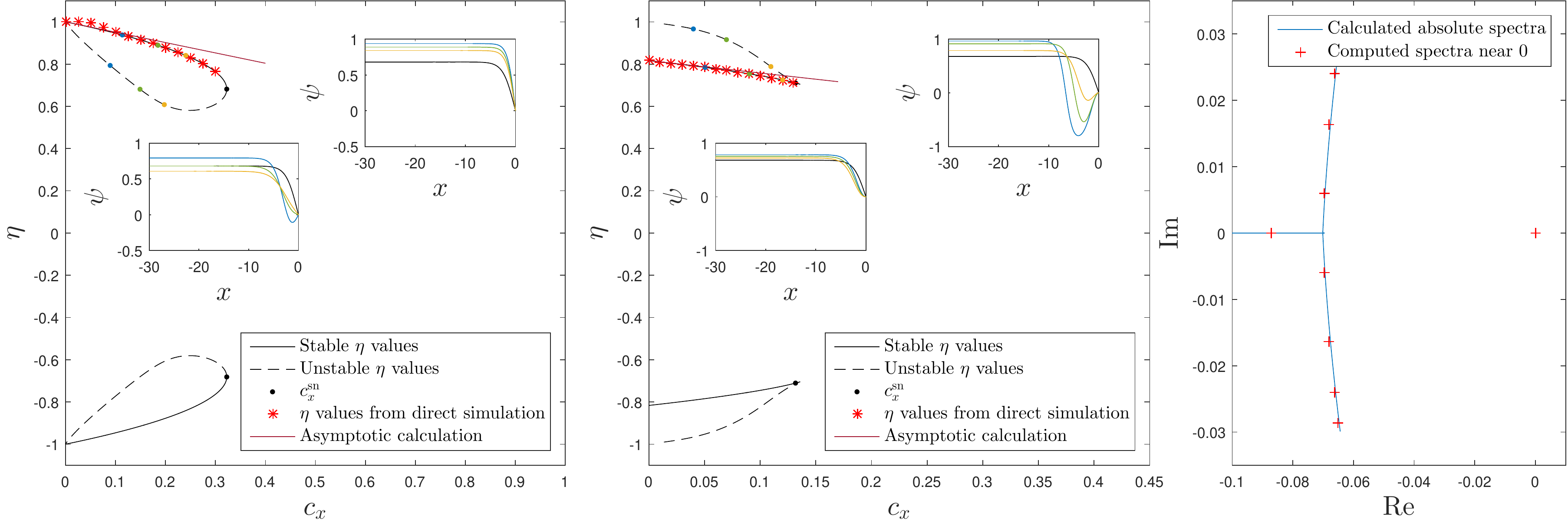}
\caption{Bifurcation diagrams for \eqref{e:chtw00}--\eqref{e:bch}, $\tau=0$ (left) and $\tau=1$ (center), with computed profiles in insets and red markers for values from direct simulations. The right figure shows the spectrum of the linearization at the critical equilibrium, demonstrating that the saddle-node bifurcation is caused by an isolated eigenvalue. Blue superimposed lines show absolute spectra; see text for details.}
\label{fig:teardropsols}
\end{figure}
In order to further demonstrate the nature of the saddle-node bifurcation, we investigated perturbations of the unstable equilibrium close to the saddle-node in direct simulations. We found the typical separation of the neighborhood of the unstable equilibrium by a codimension-one stable manifold. perturbations on either side of this manifold lead to release of a single kink and convergence to the reflected, negative, stable equilibrium, or to the stable equilibrium nearby after annihilation of the trapped kink in the unstable profile at the boundary $x=0$, respectively; see Figure  \ref{fig:kinkrelease}.
\begin{figure}[h!] \centering
\includegraphics[width=.7\linewidth]{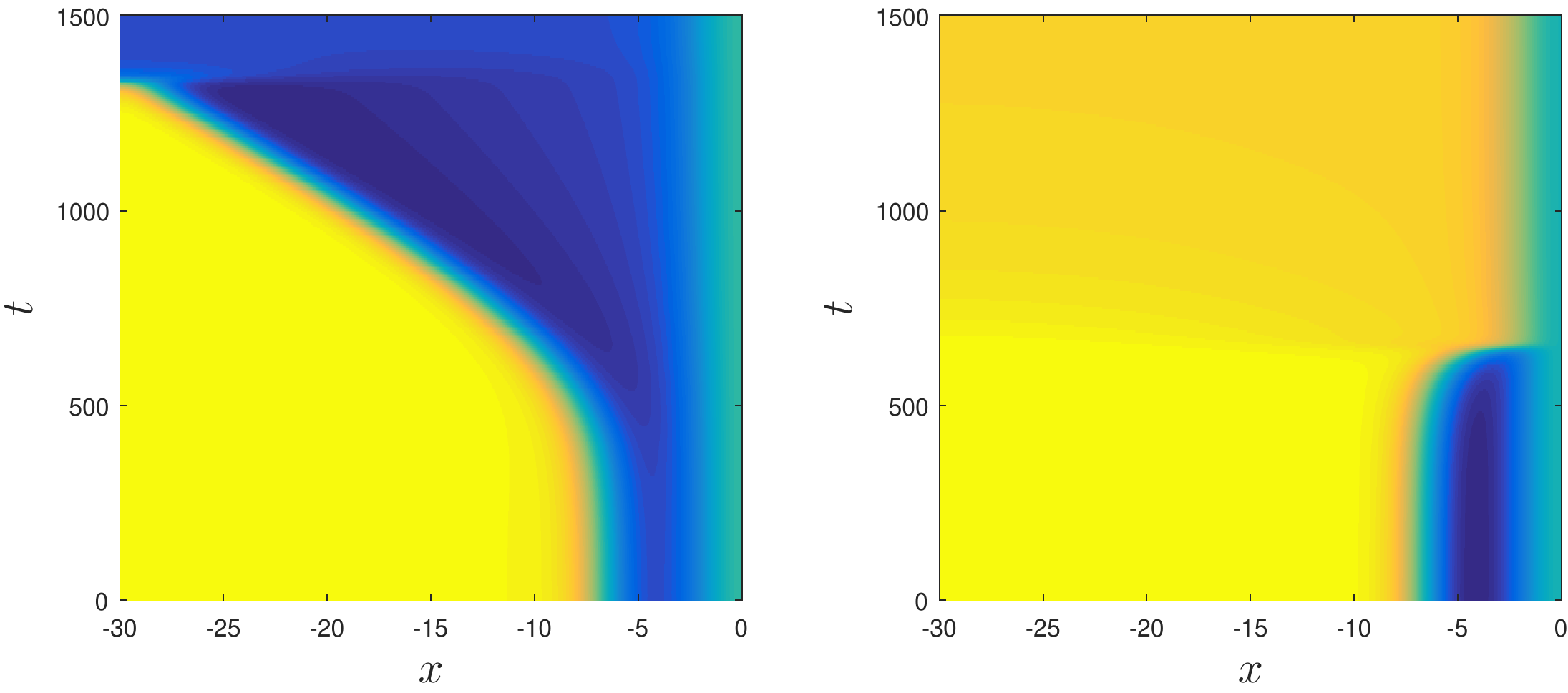}
\caption{Space-time plots of perturbations of the unstable solution profile, resulting in either release (left) or annihilation (right) of the kink. }
\label{fig:kinkrelease}
\end{figure}

\subsection{Kink-shedding ---  saddle-node on a limit cycle}\label{s:4.2}
The simulations in Figure \ref{fig:kinkrelease} demonstrate that there appears to be a heteroclinic orbit connecting the two saddle-node equilibria: small perturbations of a saddle-node equilibrium converge locally uniformly to the other saddle-node equilibtium, conjugate by reflection $\psi\to -\psi$. Reflecting the heteroclinic gives a heteroclinic loop between the saddle-node equilibria, which can be thought of as a (double, since there are two equilibria) saddle-node bifurcation on a limit cycle. One therefore expects that for parameter values $c_x$ just past the saddle-node, one observes a periodic orbit with large temporal period, due to slow passages near the region in phase space where the saddle-node was located. One can therefore infer leading-order asymptotics of the period of the periodic orbit from the leading-order expansion of dynamics on the center-manifold, only. We shall attempt to compare such predictions with periods measured in direct simulations, below.

Before calculating this expansion, we notice however a technical difficulty for the problem posed on the unbounded half line. The kink released by the perturbation from the unstable (or the saddle-node) equilibrium travels to the left from $x=0$ with speed $c_x$ but never vanishes, such that the heteroclinic solution converges to the opposite saddle-node equilibrium locally uniformly, but not in any translation-invariant norm that one may want to use to establish well-posedness of the equation. The problem is reflected in the presence of essential spectrum in the linearization at the equilibrium $\psi_*(x;c_x)$, stemming from the linearization at the constant $\eta_*$,
\[
\mathrm{spec}\,(-\partial_x^4+(1-3\eta^2)\partial_x^2+c_x\partial_x)=\{\lambda=-k^4+(1-3\eta^2)k^2+\rmi k,\ k\in\R\},
\]
which touches the origin at $\lambda=0$. Similar to the nonlinear considerations in Section \ref{s:3}, the essential spectrum can be stabilized in exponentially weighted norms \[
\|
u(x)\|_\delta=\|u(x)\rme^{-\delta x}\|_{L^2(\R^-)}, \qquad \delta\gtrsim 0,
\] 
but nonlinear analysis is typically not feasible in such norms. 

Nevertheless, we computed the eigenfunction $e$ associated with the kernel at the saddle-node and the associated adjoint eigenfunction $e^*$ to obtain an expansion for an effective equation on a center manifold,
\begin{equation}\label{e:sn}
A'=\alpha (c_x-c_x^\mathrm{sn})+\beta A^2+\rmO\left( (c_x-c_x^\mathrm{sn})^2+|c_x-c_x^\mathrm{sn}||A|+(|A|+|c_x-c_x^\mathrm{sn}|)^3\right),
\end{equation}
where 
\begin{equation}\label{e:sncoeff}	
\alpha=\int_{-\infty}^0 e^*(x) \left(\psi_\mathrm{cl/d}(x;c_x^\mathrm{sn})\right)_x\rmd x,\qquad  
\beta=\int_{-\infty}^0 e^*(x) \left(3  \psi_\mathrm{cl/d}(x;c_x^\mathrm{sn})e^2(x)\right)_{xx}\rmd x,
\end{equation}
with normalizations
\[
\int_{-\infty}^0 e^*(x) e(x)\rmd x=1,\qquad \int(e(x)\rme^{0.1x}\rmd x = 1,
\]
Since the adjoint eigenfunction $e^*$ is exponentially localized, all integrals converge, and we find values of $\alpha$ and $\beta$ for clamped and Dirichlet boundary conditions of 
\[
\alpha_\mathrm{cl}=-0.493\ldots,\qquad \beta_\mathrm{cl}=-0.0297\ldots,\qquad \alpha_\mathrm{d}=-0.959\ldots,\qquad \beta_\mathrm{d}=-0.0297\ldots .
\]

%
From the expansion, we compute a passage time near the saddle-node $T=\frac{\pi}{\sqrt{\alpha\beta (c_x-c_x^\mathrm{sn})}}$ which gives leading-order frequency $\omega$ and spacing $L$ of kinks
\begin{equation}\label{e:sns}
\omega=2\sqrt{\alpha\beta(c_x-c_x^\mathrm{sn})},\quad k=\omega/c_x,\quad L=2\pi/k.
\end{equation}
We compare the predictions with measurements in direct simulations and find good agreement, albeit not unexpectedly only for speeds $c_x$ very close to criticality; see Figure \ref{fig:wavelengthall}. Agreement is better for Dirichlet boundary conditions. For clamped boundary conditions, agreement is achieved only for values of $c_x$ extremely close to criticality. 

In the clamped case, the eigenfunction associated with the saddle-node is exponentially localized, in agreement with the fact that the saddle-node bifurcation in Figure \ref{fig:teardropPlot} is degenerate. Perturbations of the saddle-node equilibrium lead to global excursions that converge back to this equilibrium, in a leading direction not associated with this eigenfunction but with continuous spectrum reflecting the slow shedding of a kink. In this sense, the excursion can be understood as a codimension-two homoclinic orbit at a saddle-node equilibrium that enters the critical equilibrium along a direction other than the saddle-node, leading to changed asymptotics \cite{chowlin}. Unfortunately, the direction associated with this flip of the homoclinic is not hyperbolic and asymptotics for periods of periodic orbits resulting from a flip bifurcation as in \cite{chowlin} do not give better approximation results. 

On the other hand, the presence of a flip bifurcation usually marks the boundary between a homoclinic orbit to a saddle-node bifurcation on a limit cycle and a homoclinic orbit to a hyperbolic equilibrium. We noticed that for moderate domain sizes and clamped boundary conditions, the limit of periodic orbits is indeed a homoclinic orbit to the unstable equilibrium resulting from the saddle-node bifurcation\footnote{We always refer to homoclinic orbits here for simplicity when indeed the objects are homoclinic cycles, consisting of two heteroclinic orbits connecting equilibria related by $\psi\mapsto -\psi$.} In particular, one finds a small region of coexistence of periodic orbits, that is, the creation of zigzag patterns, and stable equilibria, that is, the creation of oblique stripes; see Figure \ref{fig: c_crit moderate dom}. 
 
\begin{figure}%
\centering
\subfigure[Dirichlet boundary conditions]{%
\label{fig:wavelengthDirichlet}%
\includegraphics[width=0.48\linewidth]{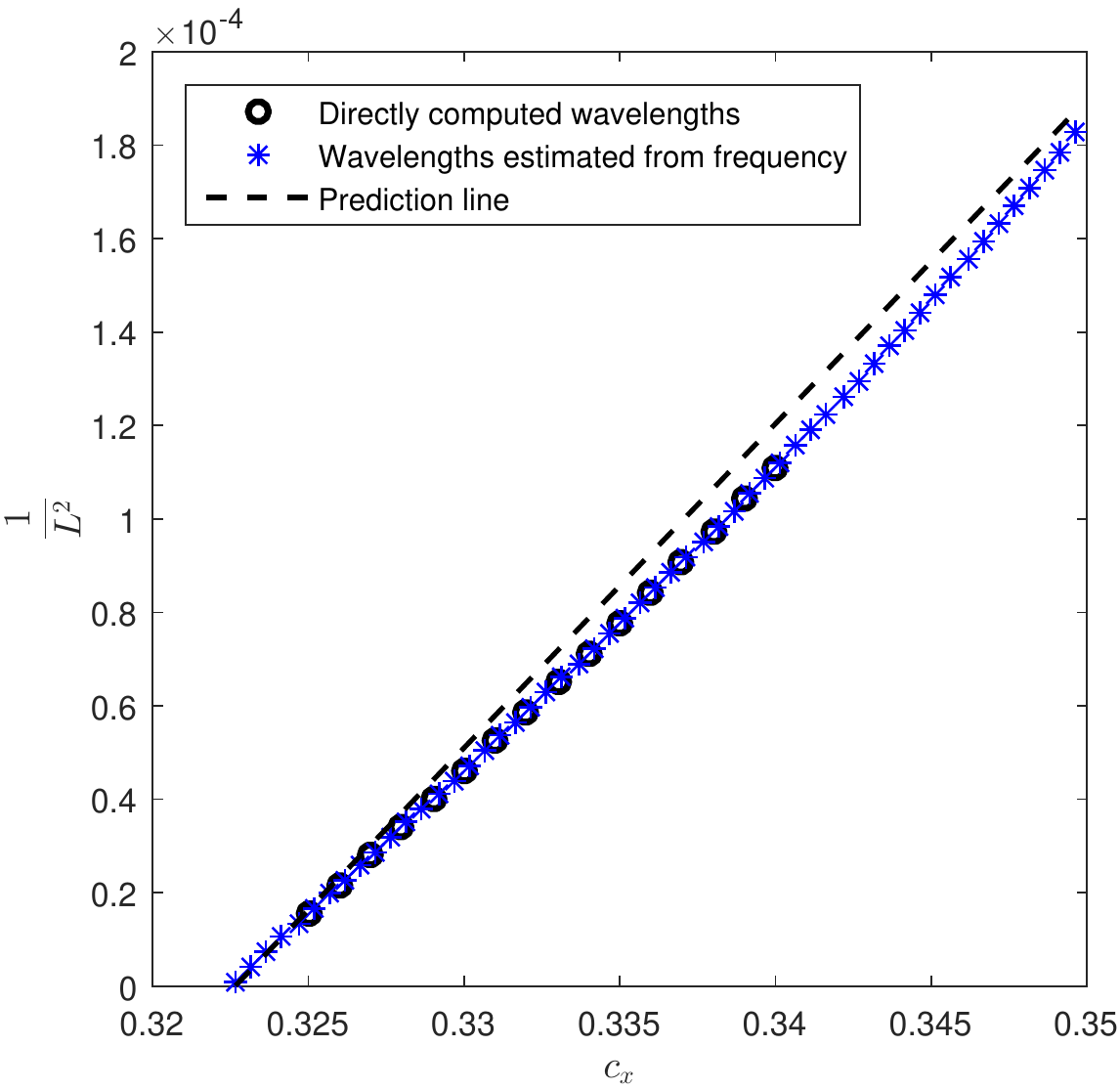}}%
\hfill
\subfigure[Clamped boundary conditions]{%
\label{fig:wavelengthClamped}%
\includegraphics[width=0.470\linewidth]{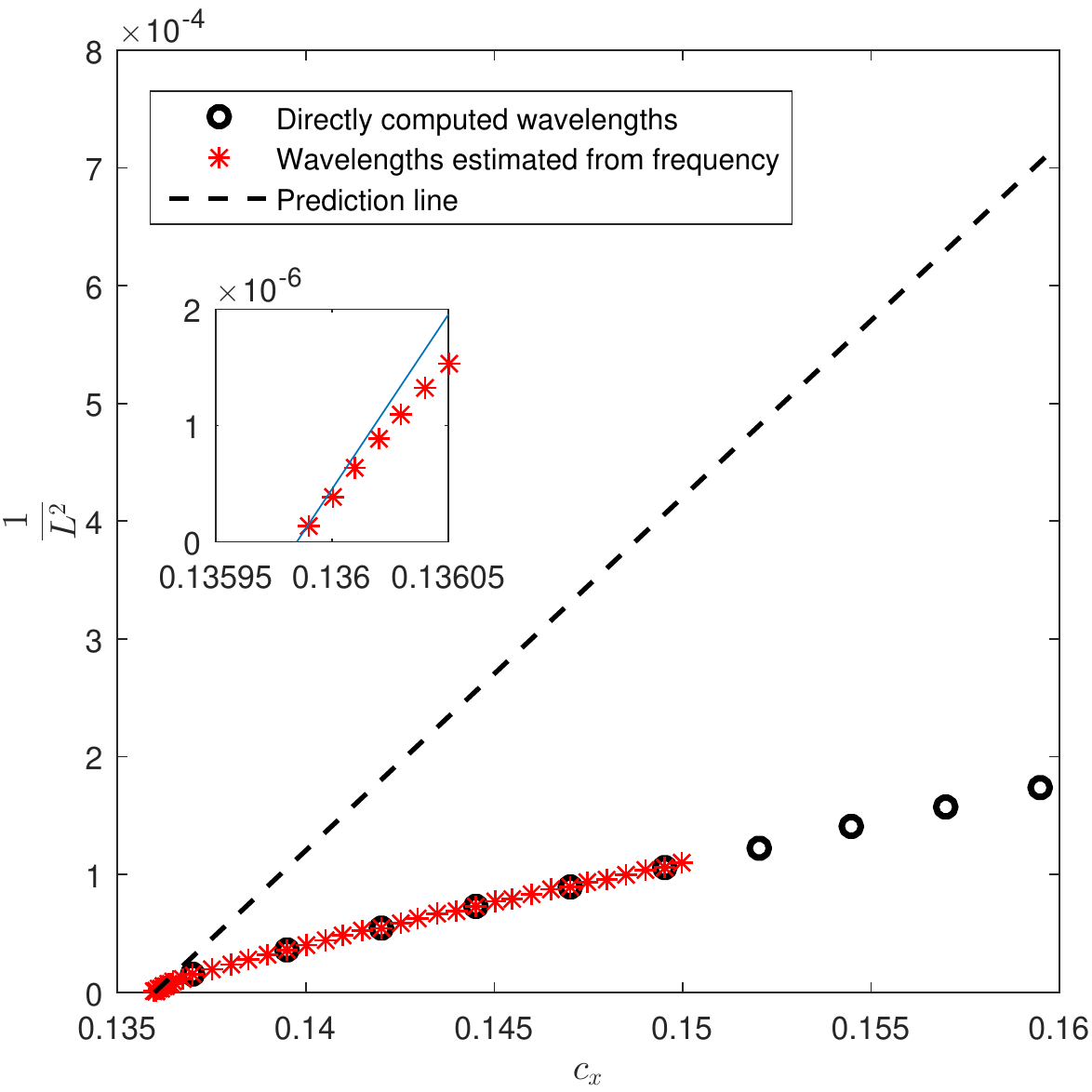}}%
\caption{Plot of inverse-square of wavelength close to $c_x^\mathrm{sn}$ according to the predicted square root scaling \eqref{e:sns} and measured frequency and wavelength data. Wavelengths from direct simulations agree well with the predicted scaling for Dirichlet boundary conditions (left), at least for large wavelengths, but agreement is limited to a very small region near the bifurcation point for clamped boundary conditions (right). Direct simulations were based on second order finite differences with domain size $250$,  $dx=0.01$ using  \textsc{matlab}'s \textsc{ode15s} for time stepping. Measurements of wavelengths were both spatial ($\circ$) and indirect through temporal periods ($*$), multiplied by the speed $c_x$.}\label{fig:wavelengthall}%
\end{figure}

Figure \ref{fig:oscProfile}, shows dependence of the $L^\infty$-norm on $c_x$ and illustrates select solution profiles. We see that the periodic solutions converge, as $x\to-\infty$, to stationary solutions of the Cahn-Hilliard equation (in the steady frame), which in turn converge to concatenations of the layer solution (or kinks)  $\psi(x)\sim \tanh(x/\sqrt{2})$ with amplitude $1$, as the wavelength tends to $\infty$. The spatial convergence of profiles is illustrated in Figure \ref{fig:oscProfile}, showing a twin-horn structure with local minima near the value of $\eta$ at the saddle-node bifurcation in Figure \ref{fig:teardropPlot}. 
\begin{figure}[h!] \centering
\raisebox{0.08in}{\includegraphics[width=0.43\textwidth]{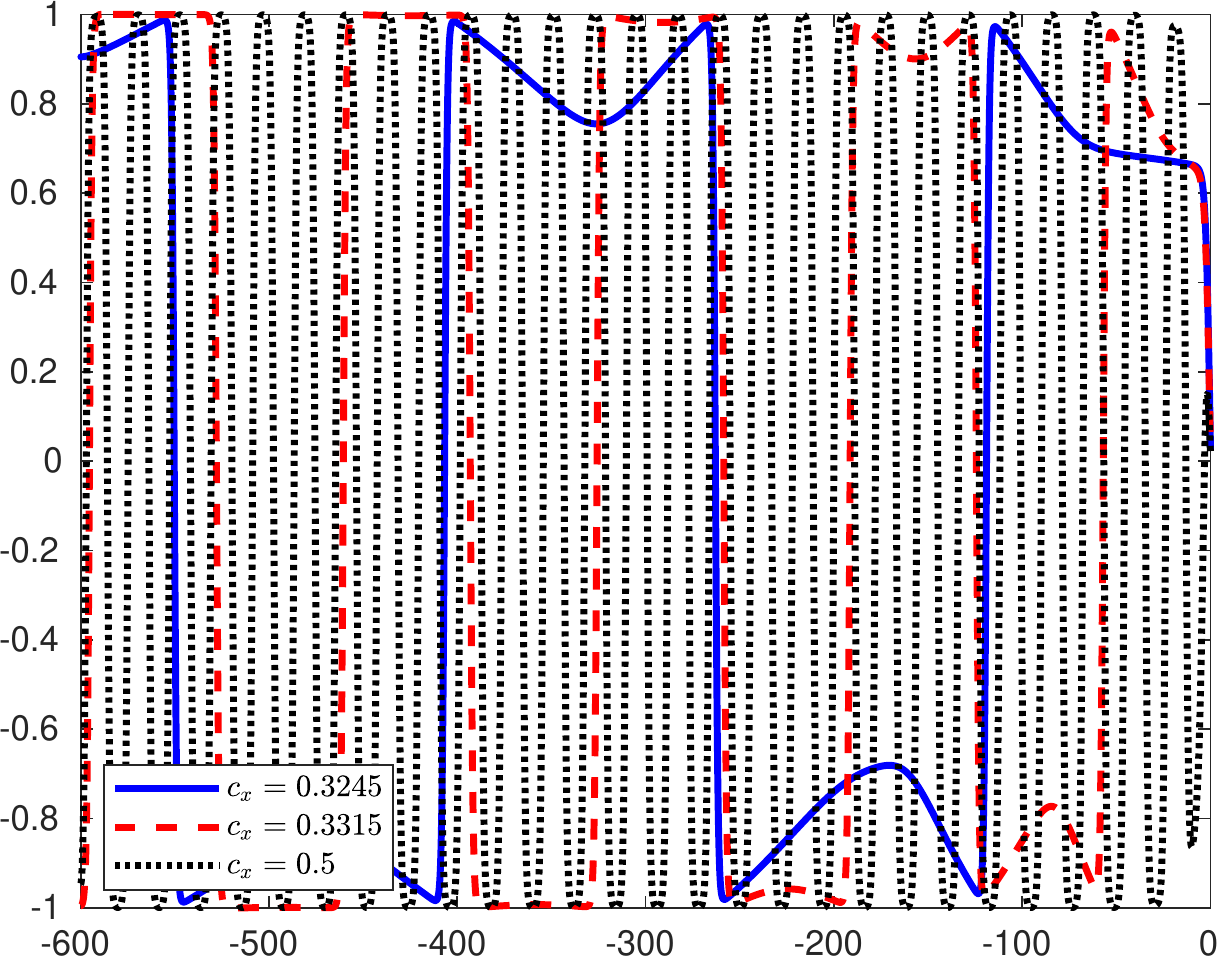}}\qquad
\includegraphics[width=0.46\textwidth]{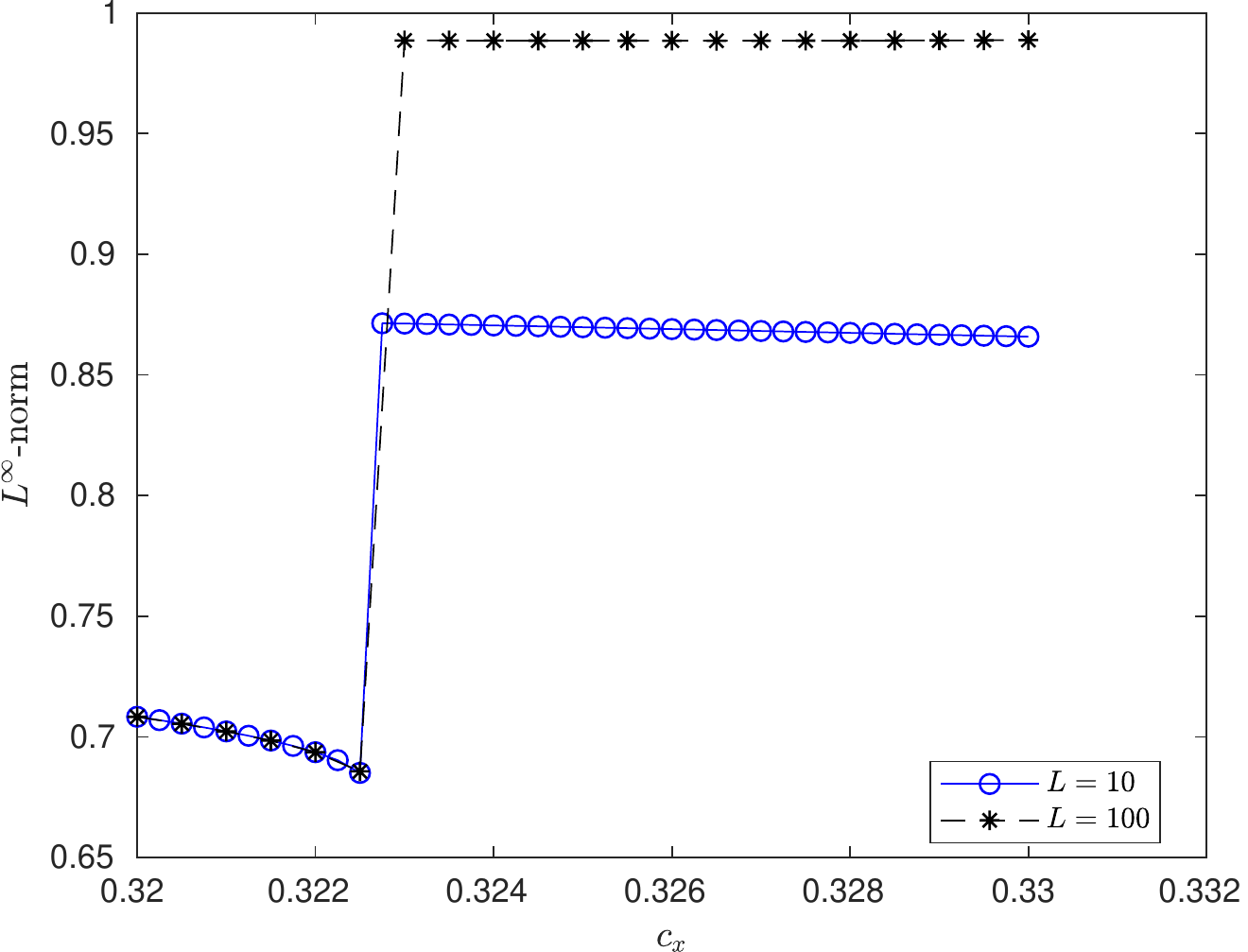}
\caption{Solution profile snapshots  from direct simulations for $c_x^\mathrm{sn} < c_x < c_x^\mathrm{lin}$ (left); we observe oscillations, creating sequences of up-down kinks. As $c_x$ increases, the period of the oscillations decreases. Large period periodic orbits develop a characteristic overshoot in the form of twin-horns, stemming from the mismatch of the selected angle $\eta$ at the saddle-node and the value $\eta=\pm1$ selected by the kink. The twin-horns develop as $x\to-\infty$, as can be seen from a plot of the $L^\infty$-norm of solutions for different domain sizes (right). }
\label{fig:oscProfile}
\end{figure}
It would clearly be interesting to analyze this bifurcation in a more precise asymptotic analysis. 

\begin{figure}[h!] \centering
\subfigure{%
\label{fig:c_sn/hom vs L}%
\includegraphics[width=0.32\linewidth]{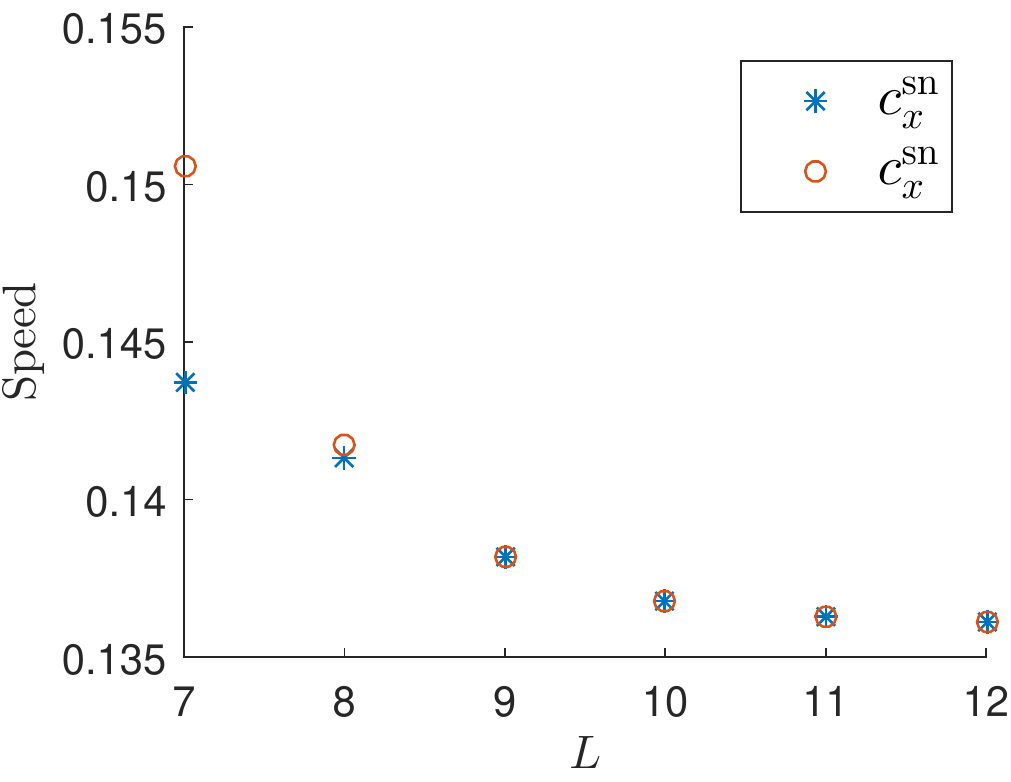}}%
\hfill
\subfigure{%
\label{fig:c_sn/hom difference}%
\includegraphics[width=0.32\linewidth]{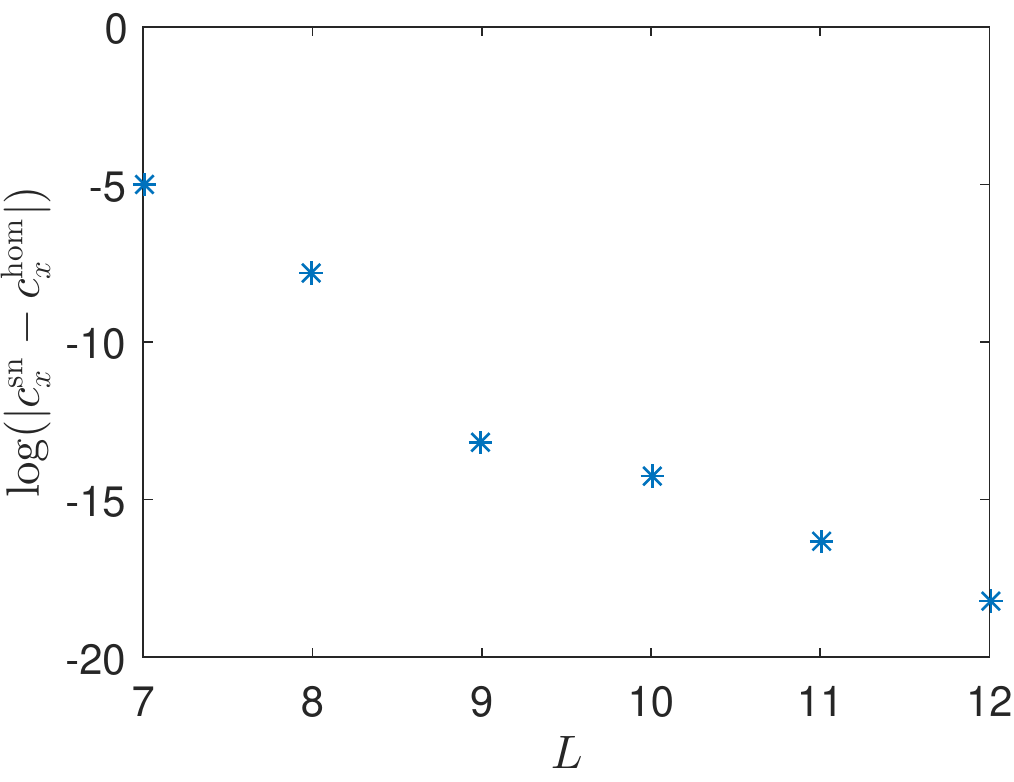}}%
\hfill
\subfigure{
\label{fig:bistability L=7}
\includegraphics[width=0.32\linewidth]{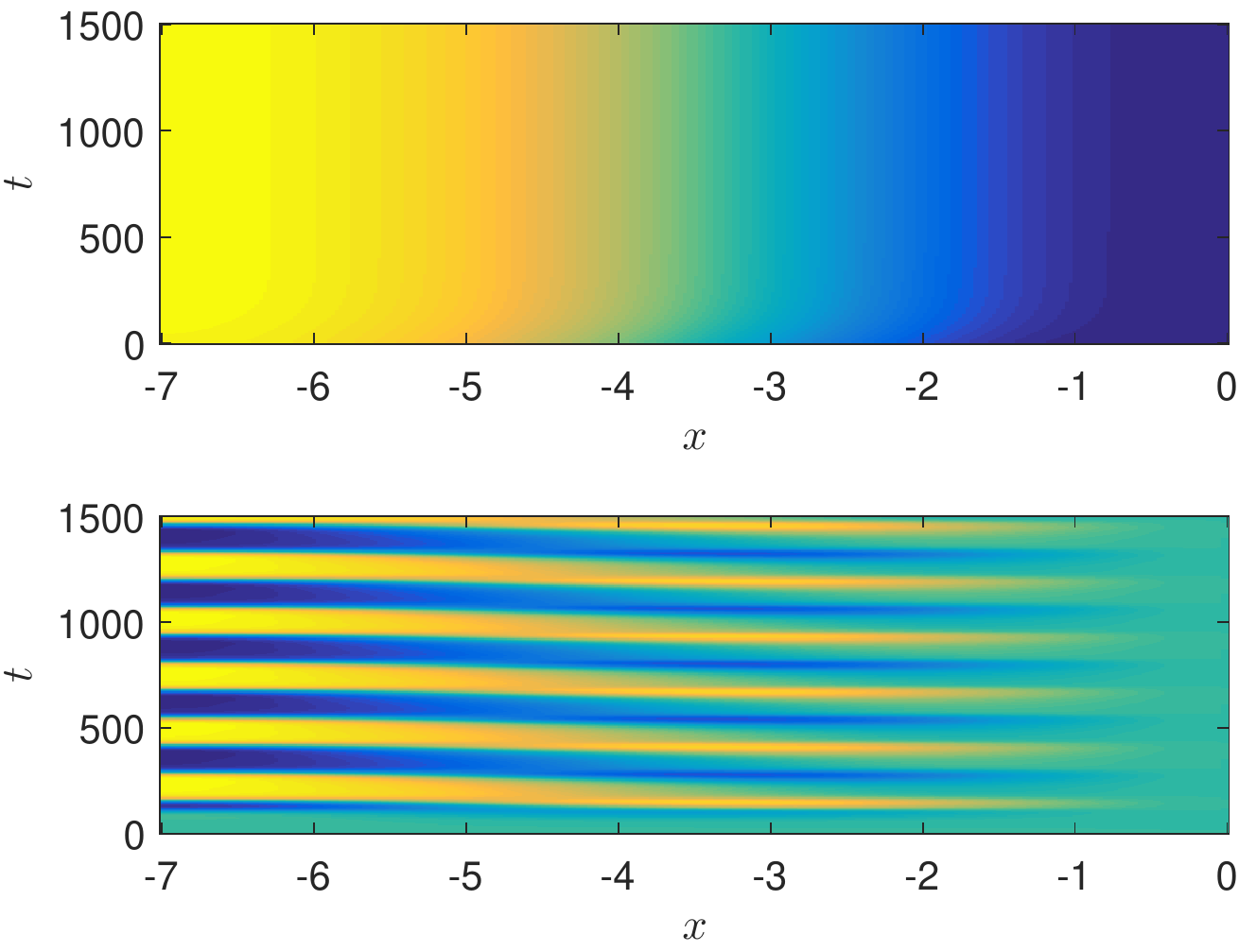}}
\caption{Critical speeds $c_x^\mathrm{sn/hom}$ over moderate sized domains, $L = 7, 8,..., 12$, for clamped boundary conditions. Left: $c_x^\mathrm{hom}$ (the speed at which the periodic orbits disappear) and $c_x^\mathrm{sn}$ (the speed at which the saddle node occurs) versus domain size. Both converge as $L \to \infty$. Middle: $\log(|c_x^\mathrm{sn} - c_x^\mathrm{hom}|)$ versus domain size, showing that this difference converges to zero exponentially in $L$. Right: spacetime demonstration of bistability for $L = 7, c_x = 0.148$. Initial data for the top right plot was a small perturbation of the equilibrium at the saddle node, while the bottom right plot used low amplitude random initial data, and converges to a stable periodic orbit.}
\label{fig: c_crit moderate dom}
\end{figure}

\subsection{Kink-shedding beyond the saddle-node and detachment}\label{s:4.3}
Beyond the immediate vicinity of the saddle-node, we find a continuously decreasing wavelength until the kink-shedding detaches and we relax to $\psi\equiv 0$ as the stable solution. In any bounded domain, this detachment transition induces a very steep bifurcation, common for transitions between convective and absolute instabilities \cite{koepf1,knobloch,ssbasin}. Speed and wavenumber converge to speed and wavenumber of the invasion front in the Cahn-Hilliard equation, given by 
\[
c_x^\mathrm{lin}=1/3 \sqrt{\frac{2}{3} \left(-1 + \sqrt{7}\right)} \left(2 + \sqrt{7}\right)\sim 1.62208, \qquad k^\mathrm{lin}=\frac{3 (3 + \sqrt{7})}{8 \sqrt{5 + \sqrt{7}}}\sim  0.765672.
\]	
The analysis in \cite{gs1}  demonstrates this limiting behavior in the case of the complex Ginzburg-Landau equation and, making conceptual assumptions on existence \cite{chfront} and transversality of the Cahn-Hilliard invasion front, should extend to the situation here; see \cite{gs2} for such a conceptual extension. Moreover, \cite{gs1} gives a first-order correction to the selected frequency near the linear front speed $c_x^\mathrm{lin}$ which is independent of the boundary conditions at $x=0$, obtained simply from the intersection of the absolute spectrum with the imaginary axis. The somewhat lengthy calculation of this intersection yields 
\begin{align}
k(c_x)&=\frac{3 (3 + \sqrt{7})}{8 \sqrt{5 + \sqrt{7}}}+ \frac{9 \sqrt{6 (2 + \sqrt{7})} (4 + \sqrt{7})}{800 + 304 \sqrt{7}}(c_x-c_x^\mathrm{lin})+\rmO\left((c_x-c_x^\mathrm{lin})^{3/2}\right)\notag\\
&\sim 0.765672 + 0.196835(c_x-c_x^\mathrm{lin}).\label{e:asych}
\end{align}
with good agreement for both clamped and Dirichlet boundary conditions; see Figure \ref{fig:oscDecay}.

\begin{figure}\centering
\includegraphics[width=0.48\textwidth]{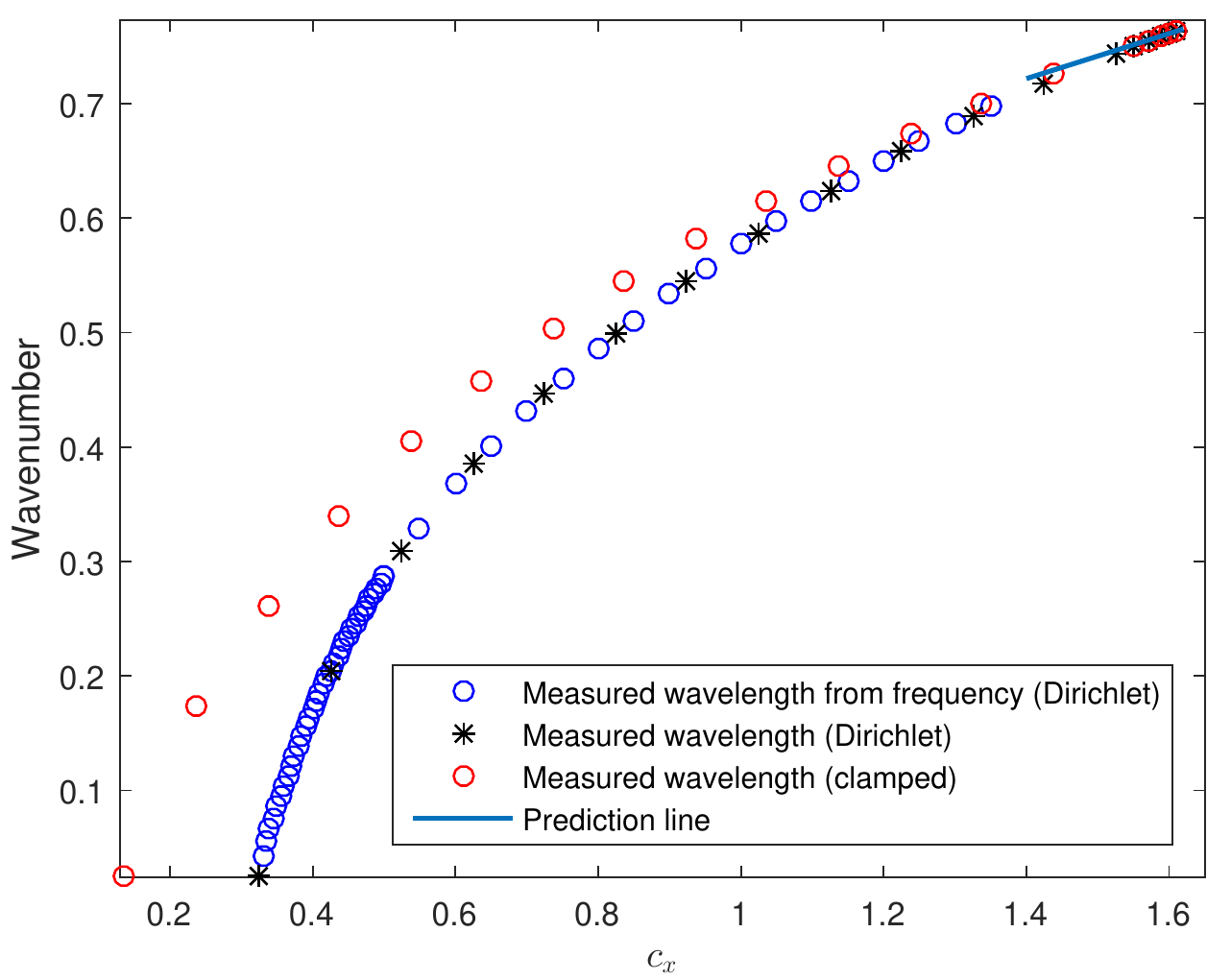}\hfill 
\raisebox{0.12in}{\includegraphics[width=0.45\textwidth]{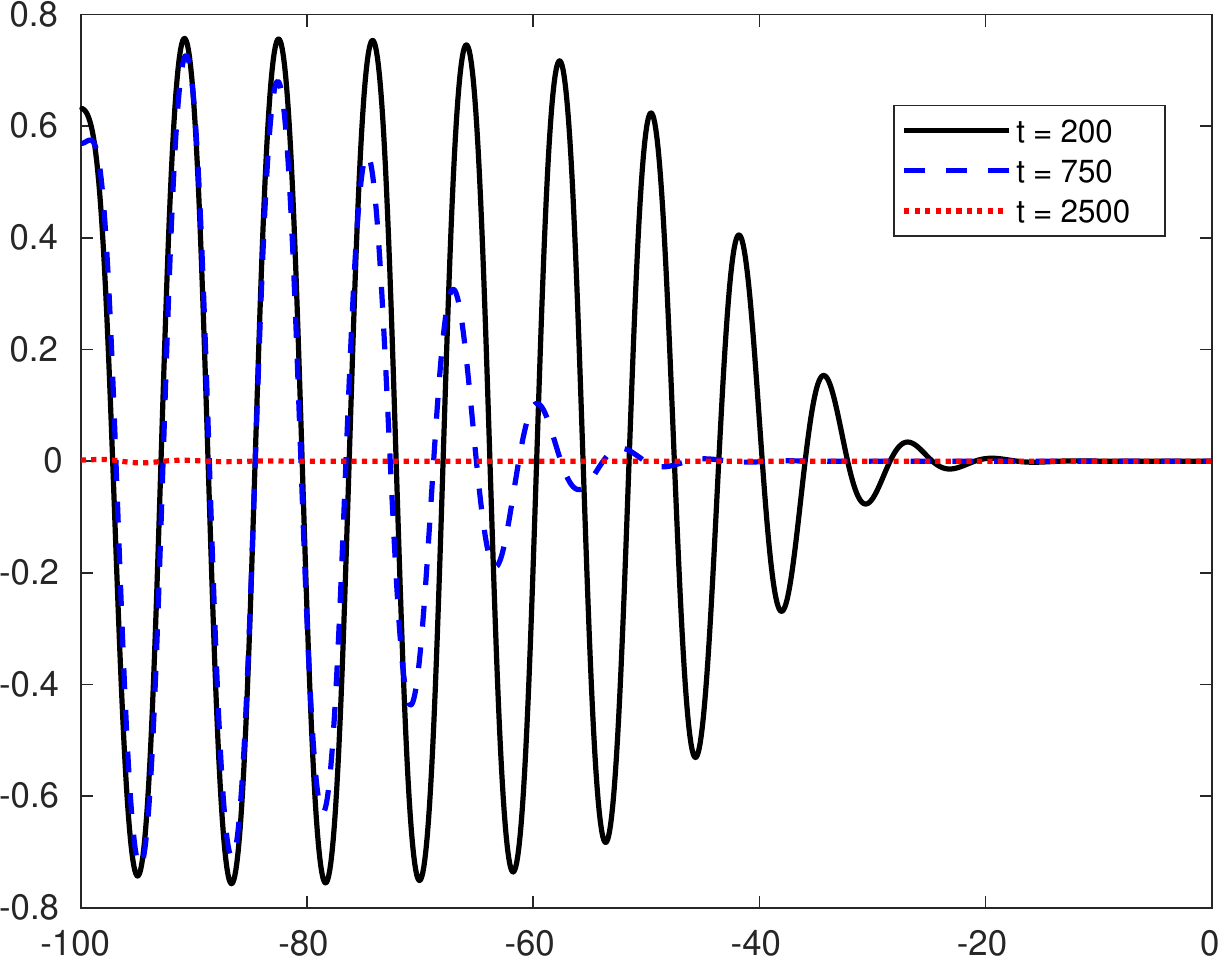}}
\caption{Measured wavenumber of periodic kink sequences  $c_x^\mathrm{sn}<c_x<c_x^\mathrm{lin}$, for both Dirichlet and clamped boundary conditions, including predictions from linear spreading speed theory as in  \eqref{e:asych} near $c_\mathrm{lin}$ (left). Sample solution profiles near detachment (right). Numerical details as in Figure \ref{fig:oscProfile}.}
\label{fig:oscDecay}
\end{figure}

\section{Perpendicular detachment, oblique reattachment, and all-stripe detachment}\label{s:5}

Thus far, we have observed and analyzed transitions for fixed wavenumber $k\lesssim k_\mathrm{zz}$, and  small speed, increasing from $c_x=0$. We first observe oblique stripes, that subsequently detach in a saddle-node on a limit cycle, giving rise to zigzagging patterns which eventually detach such that perpendicular stripes form at the quenching line. For all small parameter values, perpendicular stripes exist but are not observed, being unstable. All phenomena thus far were well described in the Cahn-Hillard phase approximation: perpendicular stripes correspond to the trivial state, subject to spinodal decomposition instability; oblique stripes correspond to the constant, spinodal stable regime;  and zigzagging stripes correspond to periodic patterns emerging from the spinodal decomposition. The Cahn-Hilliard approximation holds the inherent advantage of being a universally accurate description near a zigzag-instability, in particular for Swift-Hohenberg beyond the small-amplitude regime $\mu\ll 1$. 

For larger speeds, the phase approximation simply predicts perpendicular stripes. In order to capture the phenomena observed in the Swift-Hohenberg equation for such moderate speeds, we therefore study amplitude equations. In a quick summary, subject to many subtleties, as $c_x$ is increased we first observe a transition back to oblique stripes in the perpendicular detachment at $c_x^\mathrm{psn}(k_y)$, and subsequently a detachment and transition to parallel stripes. 

We study the transition from perpendicular stripes back to oblique stripes in \S\ref{s:5.1} and the detachment of stripes in \S\ref{s:5.2}.

\subsection{Detachment of perpendicular stripes and rattaching oblique stripes}\label{s:5.1}

We study the reattaching of oblique stripes in the Newell-Whitehead-Segel equations, derived from Swift-Hohenberg by evaluating and projecting onto the first Fourier mode $\rme^{\rmi k_y y}$,
\begin{equation}\label{e:nwsm}
A_t=-(\partial_{xx}+1-k_y^2)^2A + \rr(x) A - 3 A|A|^2 + c_x A_x.
\end{equation}
The subspaces $A\in\R$ and $A\in\rmi\R$ are invariant and correspond to solutions that are even with respect to reflections at $y=0$ and $y=\pi/k_y$, respectively. Several scalings are possible in this equation and we shall fix throughout 
\begin{equation}\label{e:mu}
\rr(x)=-\mu\sign(x),\qquad \mu=\frac{1}{4}. 
\end{equation}
Perpendicular stripes in $x<0$ correspond to solutions $A\equiv const$, oblique stripes to $A\sim \rme^{\rmi k_x x}$.

\paragraph{Continuing perpendicular stripes in $c_x$:  another saddle-node on a limit cycle}
Perpendicular stripes can be found as stationary solutions to \eqref{e:nwsm} with $A\in\R$, with boundary conditions $A(x)\to 0$ for $x\to\infty$, $A(x)\to r(k_y)$ for $x\to -\infty$, with 
\[
r^2(k_y)=\mu-(1-k_y)^2.
\]
We solved for solutions using numerical continuation and found a saddle-node bifurcation at $c_x^\mathrm{psn}(k_y)$; see Figure \ref{f:perp-bif} where the saddle-node is shown in red.  
We found that the saddle-node bifurcation curve ends at wavenumbers  $k_y = k_y^{\pm}$ with
\begin{align}
 k_y^-&\sim 0.781\ldots, &\qquad k_y^+&=\frac{\sqrt{4+\sqrt{3}}}{2}=1.19709\ldots,\notag\\
 c_x^\mathrm{psn}(k_y^-)&=0,&\qquad c_x^\mathrm{psn}(k_y^+)&=\frac{1}{\sqrt[4]{27}}=0.43869\ldots \label{e:snpol}
\end{align}
We will discuss the rationale for $c_x^\mathrm{psn}(k_y^+)$ and expansions for the saddle-node bifurcation for $k_y$ near this upper boundary in \S \ref{s:5.2} and analyze the behavior near $k_y^-$ at the end of this section\footnote{Figure \ref{f:perp-bif} also shows a light-gray curve that is the continuation of the green spreading speed for smaller values of $k_y$. For this curve, the speed is in fact complex and we plotted the real part, only. It appears to predict the saddle-node bifurcation surprisingly well, but we were not able to find any theoretical foundation for this apparent coincidence.}.

\begin{figure}
\centering\includegraphics[width=0.99\textwidth]{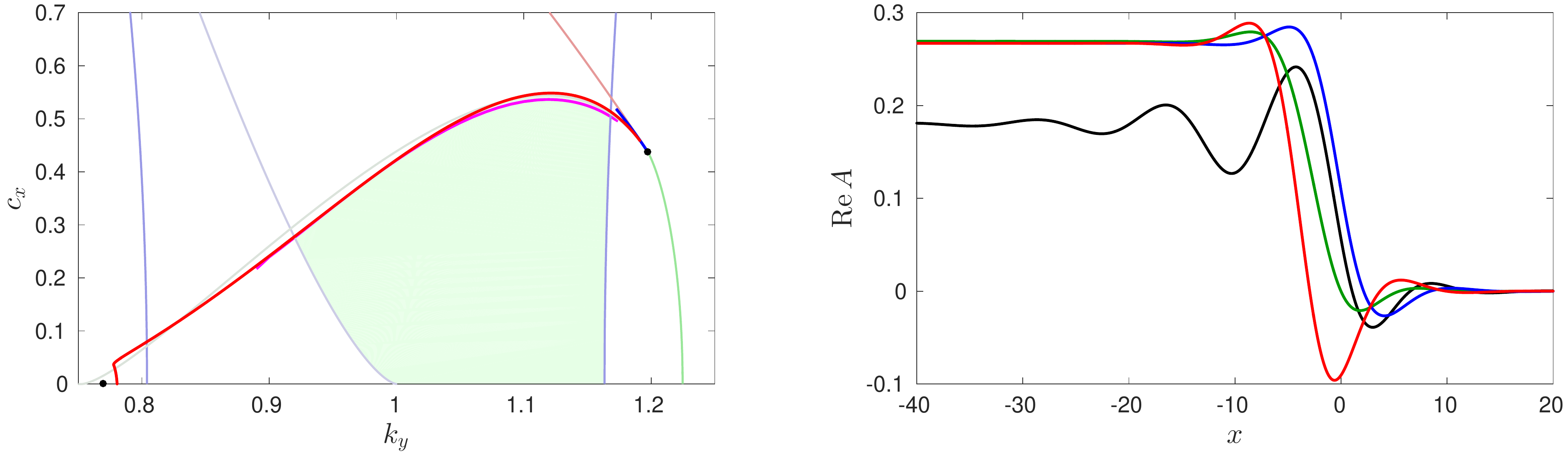}
\caption{Bifurcation diagram for perpendicular stripes in NWS, \eqref{e:nwsm}, in the $(k_y,c_x)$-plane (left). The red curve shows the saddle-node bifurcation $c_x^\mathrm{psn}(k_y)$, the green curve the detachment at the linear spreading speed \eqref{e:clinsh}, with black marker at the junction $(k_y^+,c_x(k_y^+))$ \eqref{e:snpol}. Perpendicular stripes exist in the bounded region marked by these red and green curves, $c_x\geq 0$. The blue curve near the junction marks the theoretical prediction for the transition from perpendicular to oblique stripes by triple point of the absolute spectrum \eqref{e:triple}. The magenta curve shows the pitchfork bifurcation to oblique stripes on the stable branch. Also shown in brown is the spreading speed of oblique stripes. The light gray curve shows the spreading speed of zigzags $c_x^\mathrm{zz}$ into perpendicular stripes  \eqref{e:zzlin}. The light blue curve shows the spreading speed $c_x^\mathrm{cr}$ of the cross-roll instability into perpendicular stripes \eqref{e:crlin}. The marker at $(k_y^\mathrm{t},0)$ denotes the Turing-type instability of perpendicular stripes against amplitude modulations \eqref{e:turing}. Sample profiles (right) near the left end point of the saddle-node curve $k_y=0.781$, $c_x=0$ (black) and at $k_y=0.9$, with $c_x=0.05$ (stable, blue; unstable, red) and $c_x= 0.2412$ at the saddle-node (green). }\label{f:perp-bif}
\end{figure}

Continuing through the saddle-node, one can follow the now unstable branch of perpendicular stripes decreasing $c_x$ and observe phenomena very similar to the kink shedding observed in the Cahn-Hilliard  equation, \S\ref{s:2}. The solution profile develops a kink which separates from the quenching line; see Figure \ref{f:perp-bif}. The kink typically possesses oscillatory tails, and therefore weakly locks to the quenching line, thus leading to a snaking bifurcation diagram near $c_x=0$, that is, the speed oscillates around $0$ while the distance of the kink from the quenching line increases. 

Intersting phenomena occur when, for smaller $k_y$, the saddle-node interacts with the snaking diagram. We explore this region in somewhat more detail in Section \ref{s:6.3}. We see that the snaking diagram breaks up into isolas which resemble at first figure-eight shapes with pairs of saddle-nodes. For yet smaller $k_y$, two saddle-nodes disappear in a cusp bifurcation and only an isola with two saddle-nodes remains, which eventually disappears when the two saddle-node bifurcations coalesce in a parabolic catastrophe at the minimum value of $k_y$ for which perpendicular stripes exist; see Figure \ref{fig:sh-isola}. Note that in this respect, the bifurcation diagram in Figure \ref{f:perp-bif} is rather incomplete, omitting in particular many saddle-node bifurcations  near the lower range of $k_y$-values.

Increasing $c_x$ past the saddle-node, one observes periodic kink-shedding similar to the situation in \S\ref{s:2}, with some caveats. The kink-shedding is only observed in spaces of functions that are even with respect to $y=0$ and odd with respect to $y=\pi/2$, or $y$-translates of functions in this subspace. Perturbations away from this subspace can lead to different phenomenologies, associated with destabilization and bifurcations of perpendicular stripes prior to the saddle-node in the  $(k_y,c_x)$-plane. Those stability boundaries are shown in Figure \ref{f:perp-bif} and we shall discuss them in somewhat more detail in the remainder of this section. The associated phenomenologies are illustrated in direct simulations for the Swift-Hohenberg equation in Figure \ref{f:sn}. At the end of this section, we shall return to the saddle-node curve and discuss the regime of small $k_y$.

\begin{figure}[h!]
\begin{minipage}{0.08\textwidth}
\includegraphics[width=0.89\textwidth]{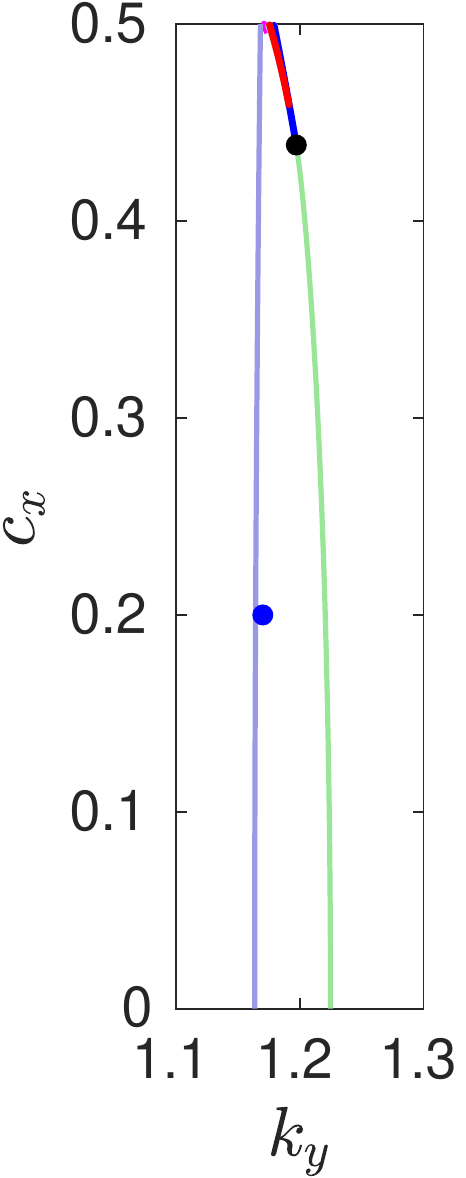}
\end{minipage}\hfill
\begin{minipage}{0.89\textwidth}\includegraphics[width=0.89\textwidth]{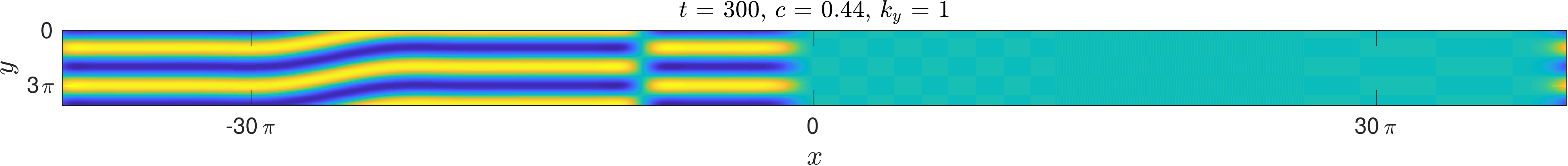}\hfill\raisebox{0.33in}{(1)}\\
\includegraphics[width=0.89\textwidth]{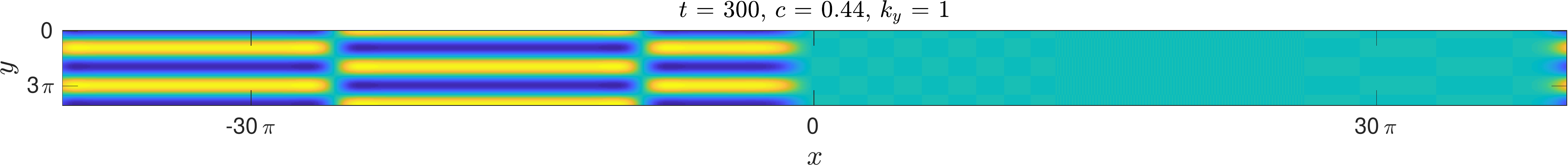}\hfill\raisebox{0.33in}{(2)}
\end{minipage}\\
\begin{minipage}{0.08\textwidth}
\includegraphics[width=0.89\textwidth]{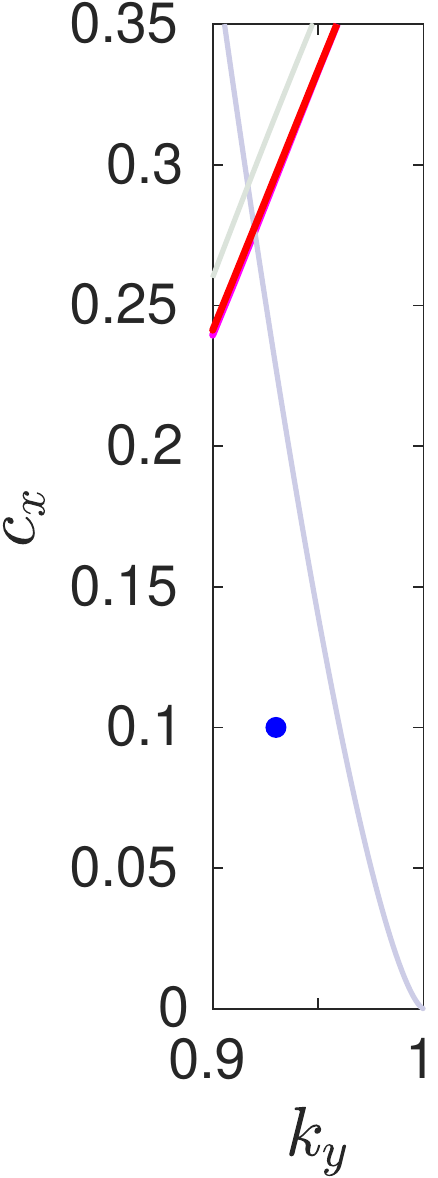}
\end{minipage}\hfill
\begin{minipage}{0.89\textwidth}\includegraphics[width=0.89\textwidth]{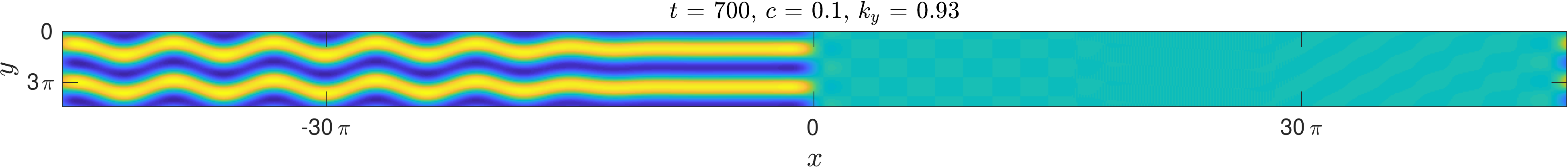}\hfill\raisebox{0.33in}{(3)}\\
\includegraphics[width=0.89\textwidth]{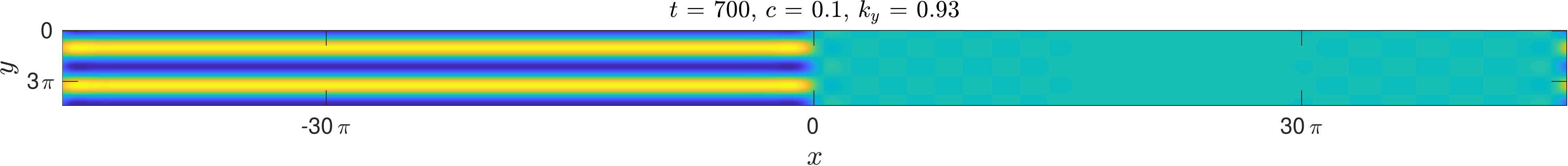}\hfill\raisebox{0.33in}{(4)}
\end{minipage}\\
\begin{minipage}{0.08\textwidth}
\includegraphics[width=0.89\textwidth]{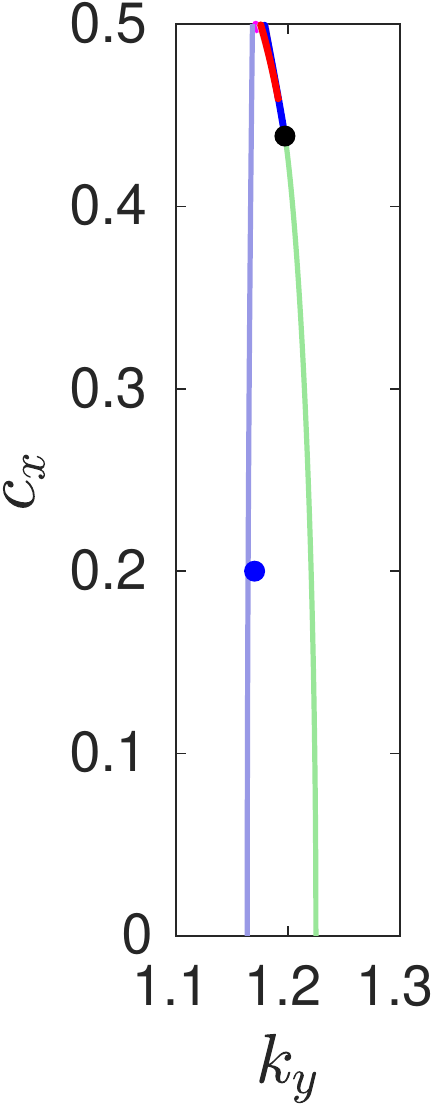}
\end{minipage}\hfill
\begin{minipage}{0.89\textwidth}\includegraphics[width=0.89\textwidth]{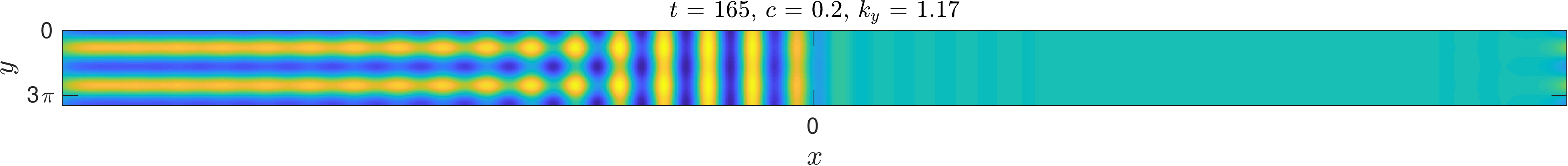}\hfill\raisebox{0.33in}{(5)}\\
\includegraphics[width=0.89\textwidth]{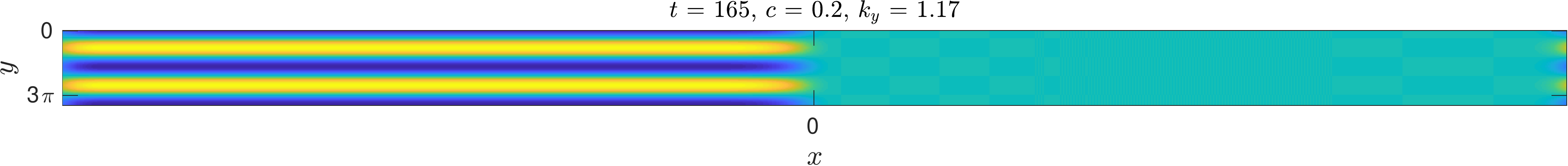}\hfill\raisebox{0.33in}{(6)}
\end{minipage}\\
\begin{minipage}{0.08\textwidth}
\includegraphics[width=0.89\textwidth]{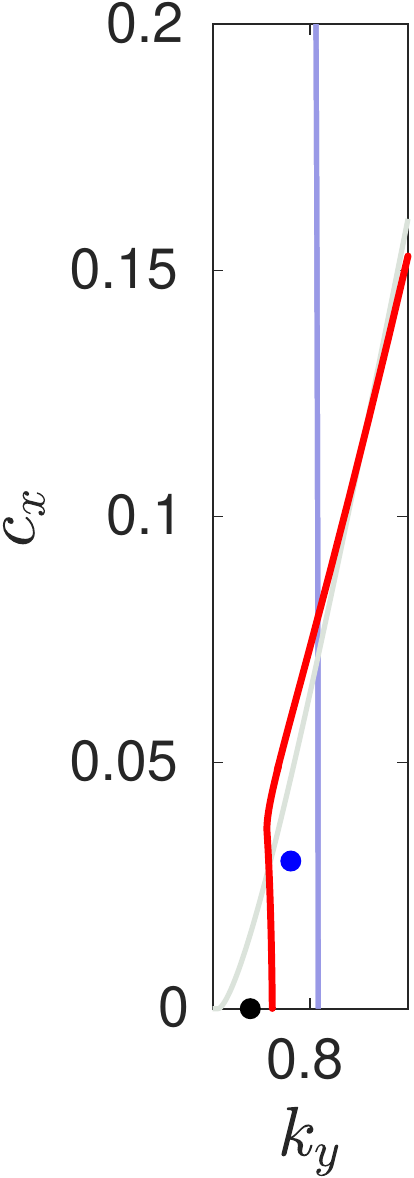}
\end{minipage}\hfill
\begin{minipage}{0.89\textwidth}\includegraphics[width=0.89\textwidth]{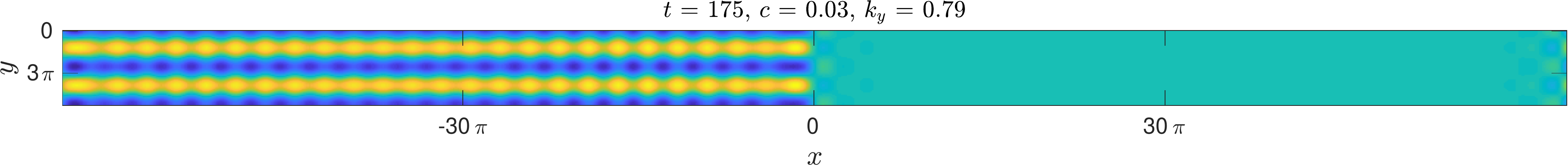}\hfill\raisebox{0.33in}{(7)}\\
\includegraphics[width=0.89\textwidth]{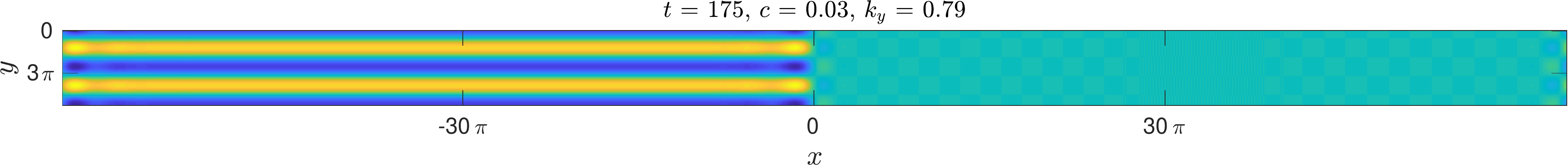}\hfill\raisebox{0.33in}{(8)}
\end{minipage}\\
\begin{minipage}{0.08\textwidth}
\includegraphics[width=0.89\textwidth]{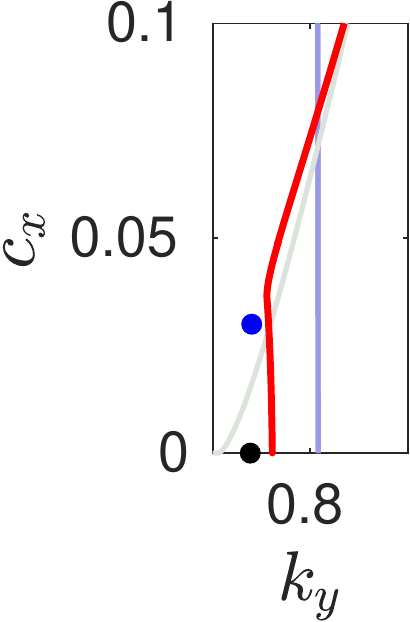}
\end{minipage}\hfill
\begin{minipage}{0.89\textwidth}
\includegraphics[width=0.89\textwidth]{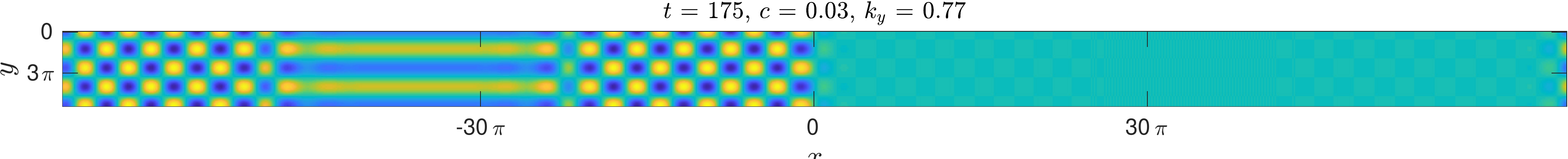}\hfill\raisebox{0.33in}{(9)}
\end{minipage}\\
\begin{minipage}{0.08\textwidth}
\includegraphics[width=0.89\textwidth]{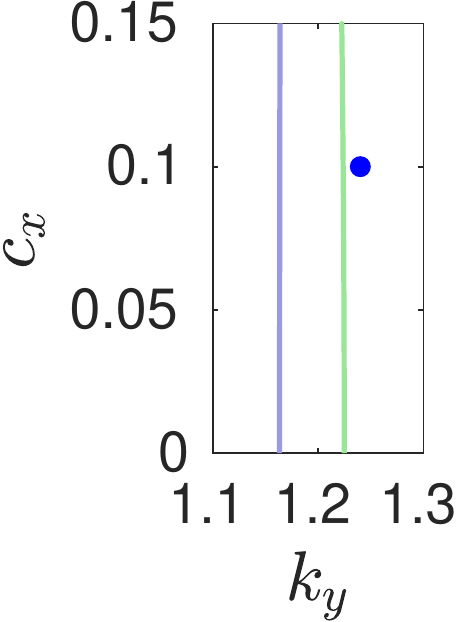}
\end{minipage}\hfill
\begin{minipage}{0.89\textwidth}
\includegraphics[width=0.89\textwidth]{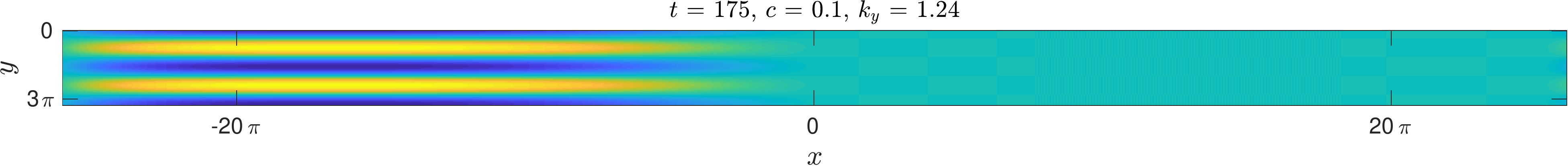}\hfill\raisebox{0.33in}{(10)}
\end{minipage}

\caption{From top to bottom: saddle-node near $k_y=1$, with random perturbations (1) and with even in $y$ random perturbations (2); crossing the zigzag boundary near $c=0.1$, zigzag modulations spread into the domain (3); the zigzag instability is suppressed for even initial conditions (4); parallel stripes just past the upper cross-roll instability at $c=0.2$ (5) are suppressed for odd initial conditions (6); 
parallel stripes past the  lower cross-roll boundary,  visible for even initial conditions (7),  
suppressed for even-odd initial conditions (8), eventually even-odd destabilized past the saddle-node (9); detachment of even-odd perturbations of stripes past the spreading speed of perpendicular stripes (10).
}\label{f:sn}
\end{figure}

\paragraph{Stability and instability of perpendicular stripes --- pitchfork to oblique stripes} 
The linearization at the quenched perpendicular stripes $A^*(x)\in\R$ exhibits a bifurcation in the direction of complex $A$, breaking the reflection symmetry in $y$. We analyzed this bifurcation by studying the linearized operator in the direction of $A_\mathrm{i}:= \mathrm{Im}\, A$, 
\[
\mathcal{L}_\mathrm{i} A_\mathrm{i}=\left[-(\partial_{xx}+1-k_y^2)^2+\rr-3(A^*)^2\right] A_\mathrm{i}.
\]
This operator possesses essential spectrum up to the origin due to the marginal stability of stripes in the far-field. For positive speeds, the essential spectrum can however be pushed into the left half plane using exponential weights $\|A\|_\delta=\sup_x (1+\exp(-\delta x))^{-1} |A(x)|$, see  \cite{fiedlerscheel}, allowing us to track possible instabilities by eigenvalues emerging near $\lambda=0$. 
By gauge invariance (from $y$-shift symmetry), the operator possesses a zero eigenvalue in this exponentially weighted space, given simply by $A^*$. Close to the saddle-node, an eigenvalue crosses the origin. At criticality, the zero eigenvalue is a Jordan block of length two and as expected the generalized eigenvector exhibits linear growth at $x\to -\infty$. Spatial dynamics methods as in \cite{ssessfront,sscl} should allow one to confirm the observed bifurcation towards oblique stripes at this parameter value. We show numerical evidence for this bifurcation in the numerical study of the full Swift-Hohenberg equation in \S\ref{s:6}. 

The corresponding bifurcation curves are shown as the magenta curve in Figure \ref{f:perp-bif}. Bifurcations happen very close to the saddle-node bifurcation except in a 	region  $k_y=1.12\pm 0.04$, where the pitchfork is located on the unstable branch just past the saddle-node bifurcation. In this region, perpendicular and oblique quenched stripes coexist, although the oblique stripes, bifurcating from the unstable branch, are unstable against the saddle-node eigenfunction. 

\paragraph{Stability and instability of perpendicular stripes -- zigzag and cross-roll spreading}

We recall from the earlier discussion in the Cahn-Hilliard equation that perpendicular stripes are unstable for small speeds due to an absolute zigzag instability. In the amplitude equation, this instability boundary can be computed from the linear spreading speed associated with the zigzag-instability. Linearizing the amplitude equations at a perpendicular stripe, we find a complex fourth-order linear equation
\begin{align*}
A_{\mathrm{r},t}&=[-(\partial_{xx}+1-k_y^2)^2 +\mu - 9 r^2(k_y)]A_\mathrm{r},\\
A_{\mathrm{i},t}&=[-(\partial_{xx}+1-k_y^2)^2 +\mu - 3 r^2(k_y)]A_\mathrm{i}.
\end{align*}
The zigzag instability is visible in the imaginary part. Computing the linear spreading speed of instabilities in this equation \cite{holz} one finds 
\begin{equation}\label{e:zzlin}
c_\mathrm{zz}(k_y)=\frac{4}{\sqrt{27}}
\left(2+\sqrt{7}\right) \sqrt{\sqrt{7}-1}  \left(1-k_y^2\right)^{3/2}
,\end{equation}
which is in fact independent of $\mu$. 

The resulting stability boundary is shown in Figure \ref{f:perp-bif}. It intersects the saddle-node bifurcation curve near $k_y=0.920,\, c_x=0.278$, thus marking the smallest wavenumber for which there exists quenching rates for which straight perpendicular stripes can be observed.

In a domain of half the width $y\in(0,\pi)$ with,  Neumann boundary conditions, the zigzag instability is suppressed and perpendicular stripes are stable for smaller values of $k_y$. 	A subsequent instability is visible only in the coupled mode system for $A\rme^{\rmi y}+B\rme^{\rmi x}+c.c$, 
\begin{align*}
A_t&=-(\partial_{xx}+1-k_y^2)^2 A + \mu A - 3A(|A|^2+2|B|^2) +c_x A_x\\
B_t&=4 B_{xx}+ \mu B - 3B(|B|^2+2|A|^2)+c_x B_x.
\end{align*}
Linearizing at $A\equiv \sqrt{\mu-(1-k_y^2)}$, we find a linear operator $4\partial_{xx}+\mu - 6 (\mu-(1-k_y^2))$ which becomes unstable outside of the interval 
\[
\left(k_{y_\mathrm{cr}}^-,k_{y_\mathrm{cr}}^+\right)=\left(\sqrt{1-\sqrt{\mu/2}},\sqrt{1+\sqrt{\mu/2}}\right)\sim (0.80402,1.16342)\quad \text{for } \mu=1/4.
\]
This instability is known as the cross-roll instability. One readily finds an associated spreading speed \cite{holz},
\begin{equation}\label{e:crlin}
c_x^\mathrm{cr}(k_y;\mu)=4 \sqrt{2\left(1-k_y^2\right)^2-\mu}.
\end{equation}
The upper boundary intersects the pitchfork bifurcation curve near $k_y=1.168,\,c_x=0.504$ thus marking the largest wavenumber for which there exist quenching rates for which straight perpendicular stripes can be observed.
The resulting stability boundaries are again depicted in Figure \ref{f:perp-bif}.  Phenomena associated with this instability are shown in Figure \ref{f:sn}\footnote{See also files sh *.m4v in the supplementary materials for movies of solutions}.

Both zigzag and cross-roll instability are supercritical in an appropriate sense and linear spreading speeds give accurate predictions for the associated absolute instability. Both can be eliminated from the Swift-Hohenberg equation in the strip by restricting to even-odd initial conditions.

\paragraph{Blocking perpendicular stripes by amplitude modulations}
The saddle-node curve in Figure \ref{f:perp-bif} terminates at $c_x=0$ for small $k_y$, after reaching a minimal value of $k_y$ for finite $c_x$.  There do not appear to be analytic predictions for either minimal $k_y$-values or the limit at $c_x=0$ as those appear to be global bifurcations even at these limiting points. The fact that the existence of quenched perpendicular stripes is limited can however be understood near $c_x=0$ from an amplitude modulational instability. In fact, inspecting the real amplitude equation \eqref{e:nwsm} with $\rr(x)\equiv \mu$ at a perpendicular stripe, we find, after shifting $A=r(k_y)+u$ so that the perpendicular stripes correspond to $u=0$, a Swift-Hohenberg equation with quadratic nonlinearity, 
\begin{equation}\label{e:shnws}
u_t=-(\partial_{xx}+1-k_y^2)^2+ \mu_\mathrm{eff} u + \gamma u^2 - 3 u^3,\qquad \mu_\mathrm{eff}=\mu - 9r^2(k_y),\ \gamma=-9 r(k_y).
\end{equation}
This equation undergoes a weakly subcritical pattern-forming instability at $k_y=k_{y,\mathrm{a}}$, with selected wavenumber $\ell_\mathrm{a}$, with 
\begin{equation}\label{e:turing}
k_{y,\mathrm{a}} =\sqrt{1-\sqrt{2\mu/3}} , \qquad  \ell_\mathrm{a}=\sqrt[4]{2\mu/3},\qquad (k_{y,\mathrm{a}},\ell_\mathrm{a})\sim (0.7692, 0.6389) \text{ at } \mu=1/4.;
\end{equation}
see Figure \ref{f:perp-bif} for the location of the amplitude modulational instability relative to the saddle-node. Phenomenologically, perpendicular stripes develop amplitude modulations in this instability.

In the quenched problem, $c_x=0$, the perpendicular stripes are hyperbolic equilibria prior to this instability, $k_y>k_{y,\mathrm{a}}$. At the instability, they undergo a Hamiltonian Hopf bifurcation with normal form given by a subcritical Ginzburg-Landau equation, $C_{XX}\pm C+C|C|^2=0$, after suitable scalings. Stable and unstable manifolds of the origin therefore are compact subsets of a small neighborhood of the origin, making an intersection with the stable manifold of $A=0$ at $x=+\infty$ impossible close to criticality.

\subsection{Detaching all stripes}\label{s:5.2}

Increasing the speed further, one eventually sees all stripes detach: the trivial state occupies an increasingly large region in $x<0$, behind the quenching line, as $c_x$ increases, until this region eventually expands linearly in time. The quenching process at this point ceases creating stripes and instead creates an unstable state, which is invaded by a free invasion front in a region well separated from the quenching line. We briefly present predictions for this detachment process and, in particular, consequences for stripe orientation. 

\paragraph{Linear spreading speeds}

Disturbances in the linearized Swift-Hohenberg equation with simple $y$-dependence of the form $\rme^{\rmi k_y y}$ solve
\[
 u_t=-(\partial_{xx}+1-k_y^2)^2 u+\mu u.
\]
Compactly supported initial conditions to this equation spread with the spreading speed 
\begin{equation}\label{e:clinsh}
 c_\mathrm{lin}(k_y)=\left\{
\begin{array}{ll}
\frac{4 \left(2-2 k_y^2+\sqrt{1-2 k_y^2+k_y^4+6 \mu}\right) \sqrt{-1+k_y^2+\sqrt{1-2 k_y^2+k_y^4+6 \mu}}}{3
\sqrt{3}},& 0<k_y<  \sqrt{\frac{2+ \sqrt{3\mu}}{2}}\\
\frac{4 \sqrt{-1+k_y^2-\sqrt{4-8 k_y^2+4 k_y^4-3 \mu}} \left(-2+2 k_y^2+
\sqrt{4-8 k_y^2+4 k_y^4-3 \mu}\right)}{3\sqrt{3}},& \sqrt{\frac{ 2+\sqrt{3\mu}}{2}}<k_y<\sqrt{1+\sqrt{\mu}}.
 \end{array}\right.
\end{equation}
In a frame moving with this speed, one observes oscillations with frequencies $\omega_\mathrm{lin}(k_y)$ which are in $1:1$-resonance with patterns formed at wavenumbers $\omega_\mathrm{lin}(k_y)=c_\mathrm{lin}(k_y)k_\mathrm{lin}(k_y)$, with 
\begin{equation}\label{e:klinsh}
 k_\mathrm{lin}(k_y)=\left\{
 \begin{array}{ll}
 \frac{3 \left(3-3 k_y^2+\sqrt{1-2 k_y^2+k_y^4+6 \mu}\right)^{3/2}}{8 \left(2-2 k_y^2+\sqrt{1-2 k_y^2+k_y^4+6
\mu}\right)},& 0<k_y< \sqrt{\frac{2+ \sqrt{3\mu}}{2}}\\
0, &\sqrt{\frac{2+ \sqrt{3\mu}}{2}}<k_y<\sqrt{1+\sqrt{\mu}}.
 \end{array}\right.
\end{equation}
We refer to \cite{holz} for background and in particular for results that demonstrate that this speed is more generally non-increasing in $|k_y|$. 

The values for $\mu=1/4$ are included in Figure \ref{f:perp-bif} as the upper boundary. The cross-over point $k_y=\sqrt{\frac{2+ \sqrt{3\mu}}{2}}$ distinguishes between $k_x=0$, perpendicular stripes, and $k_x>0$, oblique stripes, selected by the spreading in the leading edge. 

It is worth noticing that, due to the monotonicity $c_x(k_y)\searrow $ in $k_y>0$, parallel rolls always spread fastest and generic initial conditions in a system without (!) parameter step will lead to parallel stripes. 

\paragraph{Quenched stripes near the linear spreading speed}

Nevertheless, we observed oblique stripes in the quenched system and values of $c_x$  up to the linear spreading speed for values of $k_y\leq 0.95$. For larger values of $k_y\geq 1$, we noticed that oblique stripes selected in the quenching process destabilize against parallel stripes well before the linear spreading speed of oblique stripes in what appears to be related to the cross-roll instability. In fact, for $k_y$ close to the cross-over, the perpendicular stripes are unstable against the cross-roll instability and, by continuity of spreading speeds \cite{holz}, oblique stripes would be unstable against such perturbations as well for values near the cross-over point.  We did not attempt a more comprehensive study of stability of oblique stripes far from the transition near perpendicular stripes. 

Near the detachment,  the results in \cite{gs1} establish corrections to the wavenumber based on absolute spectra. Based on these predictions, one concludes in this regime near the linear spreading speed, that the transition from oblique to perpendicular stripes occurs at leading order when the absolute spectrum, computed in the co-moving frame, 
possesses a triple point at $\lambda=0$. Some tedious algebra, solving
\[
 \left\{\begin{array}{l}
  d(0,\nu;c_x, k_y)=0,\\d(0,\nu+\rmi\ell;c_x, k_y)=0,
 \end{array}\right.\qquad \text{ with\ \  } d(\lambda,\nu;c_x,k_y)=-(\nu^2+1-k_y^2)^2+\mu+c_x\nu-\lambda,
\]
as one real and one complex equation for the three real variables  $(\nu,\ell,c_x)$ with parameter $k_y$, leads to the location of this triple point at 
\begin{align}
 c_\mathrm{tr}(k_y)&=\frac{4 \left(2 \sqrt{3}(1-k_y^2)+5 \sqrt{-4(1-k_y^2)^2+7 \mu}\right) \sqrt{-3(1- k_y^2)+\sqrt{-12(1-k_y^2)^2+21 \mu}}}{21 \sqrt{7}}\notag\\
&= 
2
\sqrt{2}\left(\frac{\mu}{3}\right)^{3/4}+
4 \sqrt{2+ \sqrt{3\mu}}\left(\frac{\mu}{3}\right)^{1/4}\Delta k_y+\rmO\left(\Delta k_y^2\right),\label{e:triple}
\end{align}
for $\Delta k_y=k_y-\sqrt{\frac{ 2+\sqrt{3\mu}}{2}}\lesssim 0$; see Figure \ref{f:perp-bif} for a comparison between these asymptotics, the saddle-node of perpendicular stripes, and the pitchfork bifurcation of oblique stripes.

%
%
%
%
%
%
%
%
%

\section{Back to Swift-Hohenberg: organizing stripe formation in the moduli space}\label{s:6}
We present a conceptually simple object, the moduli space, that captures much of the phenomena presented in this work, in particular much of the results from direct simulations as summarized in the parameter landscape of stripe formation, Figure \ref{f:2a}. The moduli space is a variety in $(k_x,k_y,c_x)$-space, which encodes stripe formation at the rate $c_x$ with stripes of wave vector $(k_x,k_y)$  in the wake. We present a more precise definition and describe coarse features of this object in \S\ref{s:6.1}, followed by a more detailed description of numerical strategies used for computing this variety in \S\ref{s:6.2}. The remaining paragraphs zoom in on some of the finer structures of the variety, relating to oblique detachment \S\ref{s:6.3}, perpendicular detachment, \S\ref{s:6.4}, and the interaction of the two detachments, \S\ref{s:6.5}.

\subsection{The moduli space} \label{s:6.1}

Solutions that form stripes in the wake of the quenching step can be stationary in an appropriately co-moving frame, hence solving the elliptic traveling-wave equation
\begin{align}\label{e:shtwc}
0 &=-(\partial_{xx}+k_y^2 \partial_{yy}+1)^2u + \mu(x) u - u^3 + c_x( u_x + k_x u_y),\\
0 &= \lim_{x\rightarrow-\infty} \left(u(x,y) - u_\mathrm{p}(k_x x + y;k)\right),\qquad 0= \lim_{x\rightarrow\infty} u(x,y),\qquad u(x,y)=u(x,y+2\pi).\notag
\end{align}
Note  that the vertical velocity satisfies $c_y = c_x k_x$ since asymptotic patterns are stationary in a stationary frame. We emphasize that this system only captures the \emph{simplest} solutions that form stripes with a given wave vector --- actual stripe formation could possess periodic or even more complex temporal modulations. The system \eqref{e:shtwc} comes with three parameters $(k_x,k_y,c_x)$, and we define the \emph{moduli space} as
\[
 \mathcal{M}=\{(k_x,k_y,c_x)|\text{ ex. solution to \eqref{e:shtwc} } \}.
\]
One can see using Fredholm theory that this moduli space would typically be a two-dimen\-sional surface, except at singularities (or bifurcation points). Practically, this surface encodes the regimes of existence for various orientations of striped patterns formed behind the quenching line. When paired with stability information, it gives a recipe for how to select various orientations of stripes through quenching rates and lateral aspect ratios.

\begin{figure}[h!]\centering
\includegraphics[width=.49\textwidth,trim={0.8cm 0 1.5cm 0},clip]{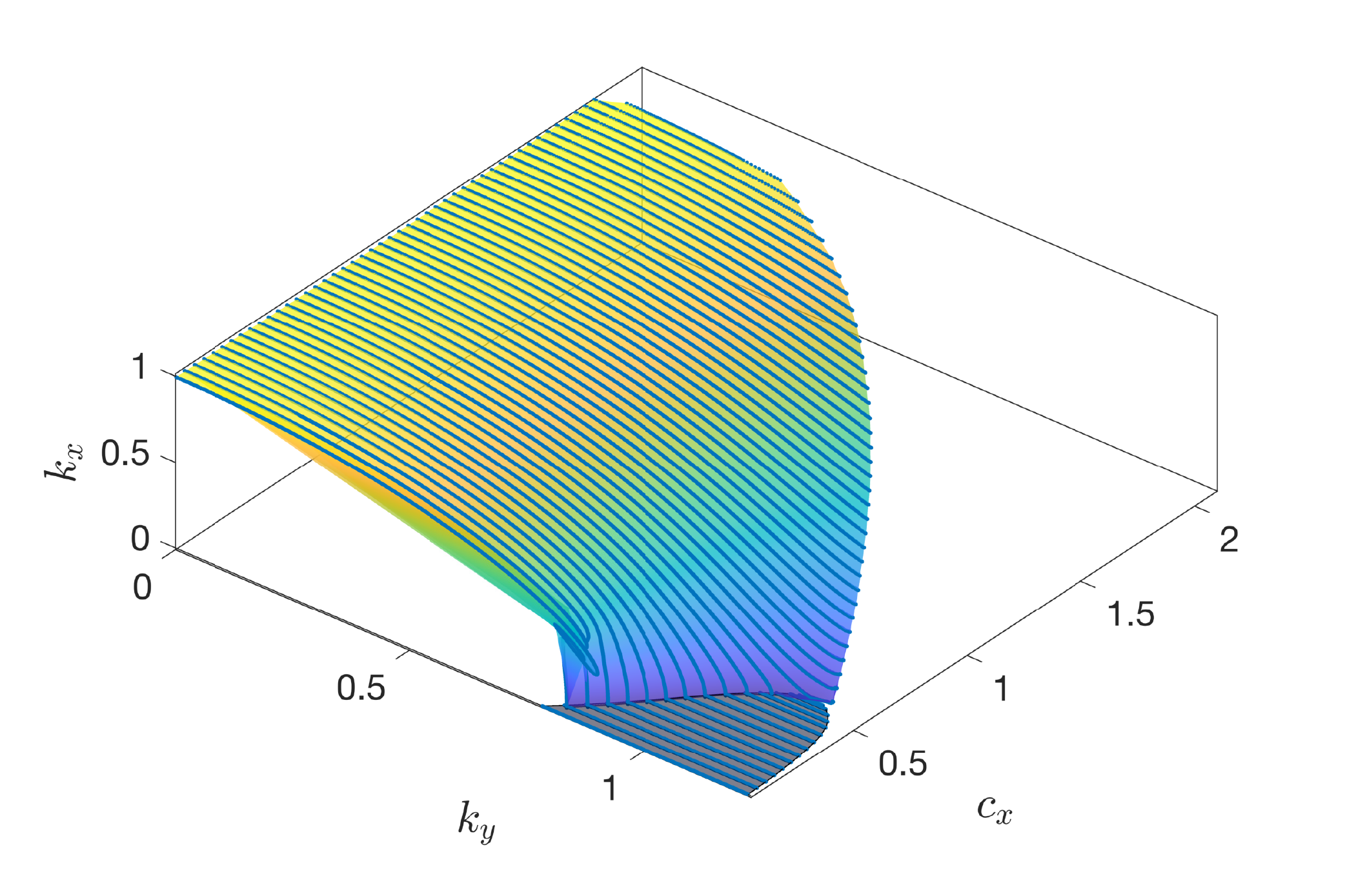}\hfill
\raisebox{0.2in}{\includegraphics[width=.4\textwidth,trim={0.8cm 0 1.5cm 0},clip]{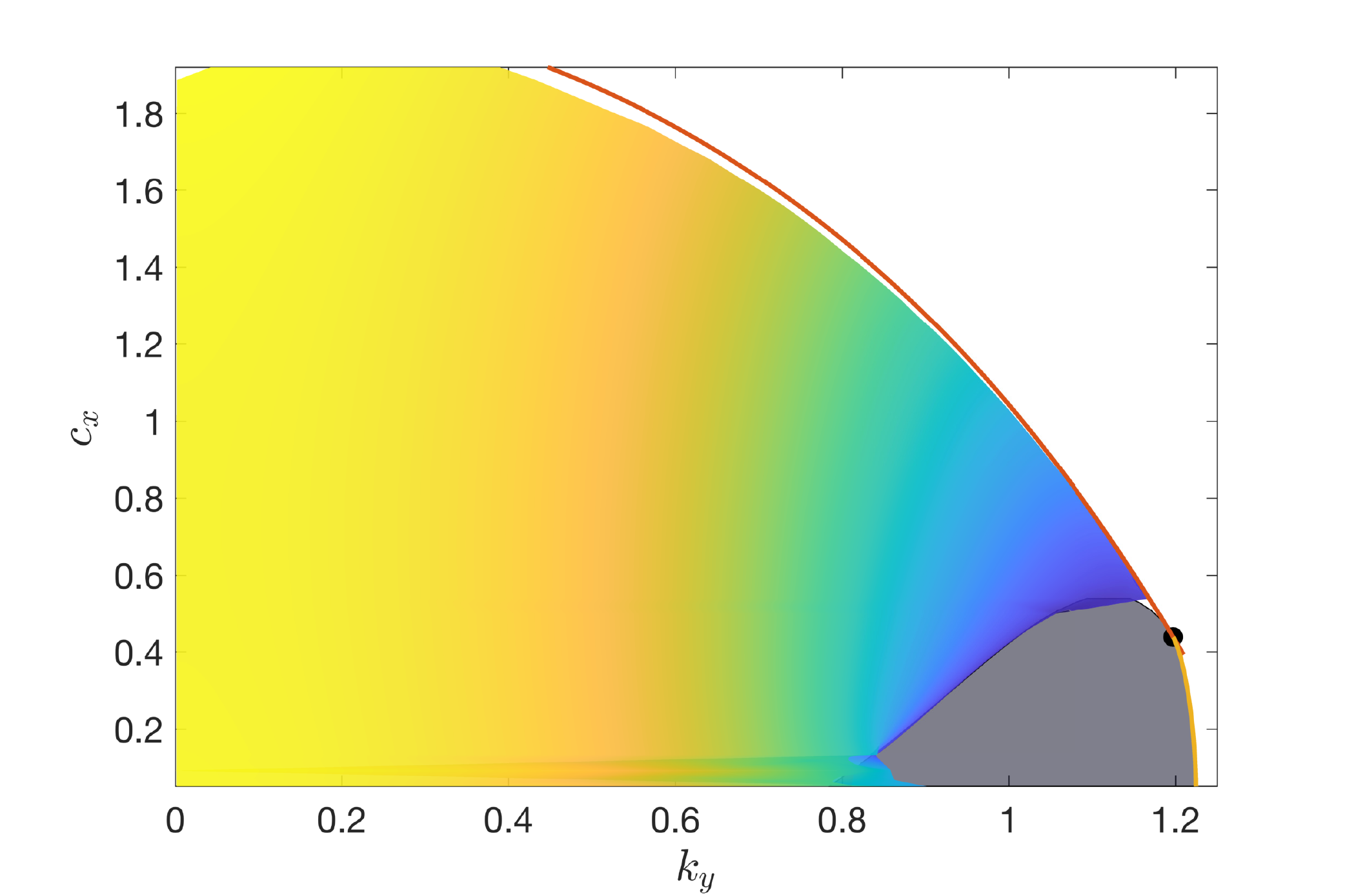}}\quad$ $\\
\includegraphics[width=0.49\textwidth,trim = {.8cm .5cm .8cm 0.5cm},clip]{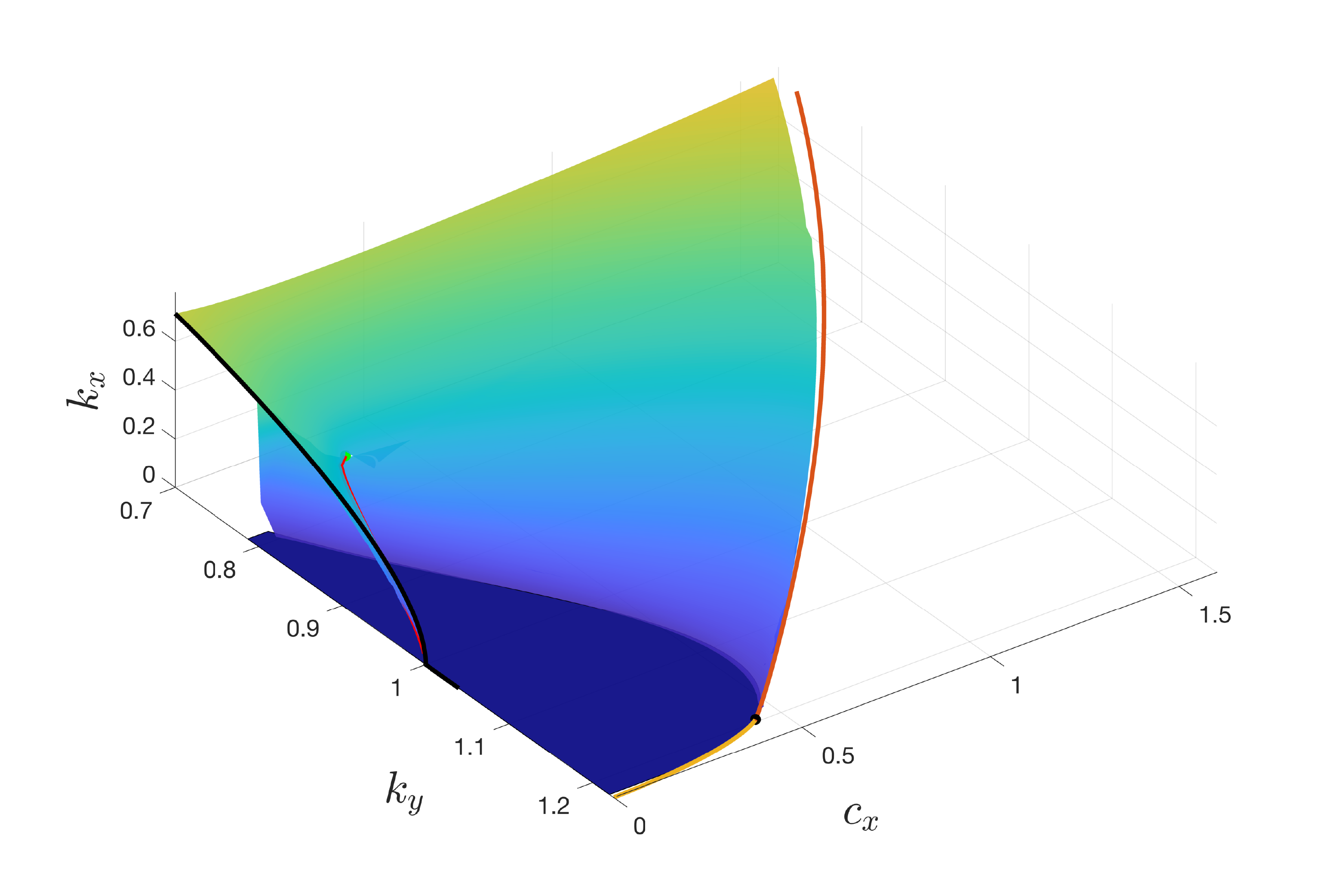}\hfill
\includegraphics[width=0.49\textwidth,trim = {.8cm .5cm .8cm 0.5cm},clip]{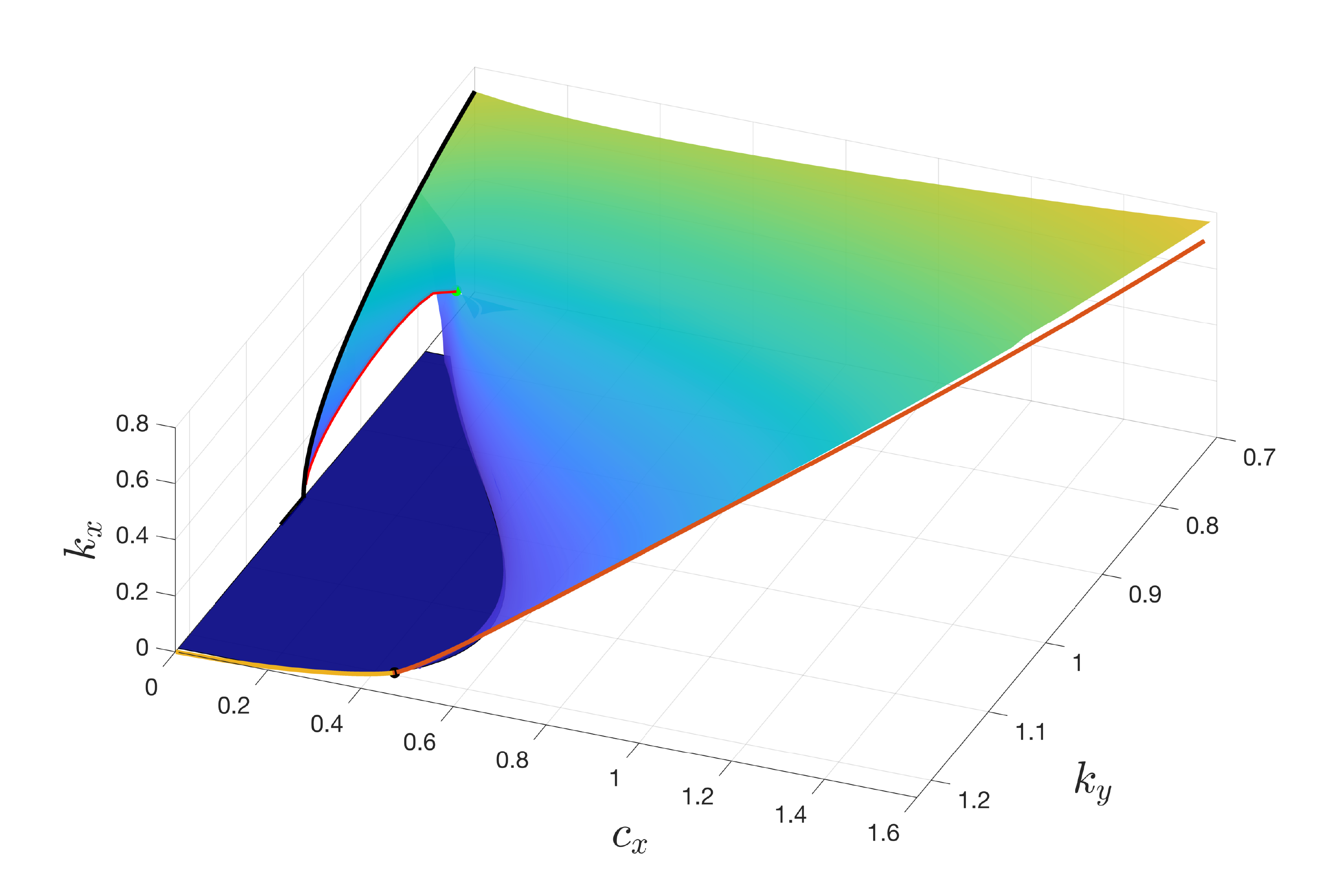}
\caption{Top row: Coarse view of the moduli space computed via continuation in $k_y$ in 3d and top-down view, with gray areas representing perpendicular stripes, color shading $k_x$ which roughly encodes the angle from perpendicular (dark blue) to parallel (yellow). Bottom row: Two views of a zoom into the region $k_y\sim 1$, showing in particular the touch-down near $c_x=0, k_y=k_\mathrm{zz}, k_x=0$ via the kink-dragging bubble which resembles a delicate arch in this view, the perpendicular stripe detachment at finite $c_x^\mathrm{psn}$  where a wing-like surface lifts up above the plane, and the hyperbolic catastrophe where delicate arch and wing meet.  Perpendicular stripes in dark blue, $c_x=0$ as black line, detachment of oblique stripes $c_x^\mathrm{osn}$ at red curves via kink shedding and final detachment.}\label{f:end}
\end{figure}

The moduli space, computed using numerical continuation that we shall explain in the next section, is shown in  Figure \ref{f:end}. The two key organizing elements, the oblique detachment and the perpendicular detachment are both clearly recognizable as a delicate arch near $c_x=0$ and a lift-off to a wing-shaped structure at larger $c_x$. Both collide in the hyperbolic catastrophe\footnote{See  movie \textsc{moduli.m4v} in supplementary materials for an animated $360^\circ$ tour of the moduli space.}. In the following, we explain in more detail the information contained in this surface, how it was obtained, and how it relates to our previous analysis. We encourage, however, at this point, a comparison with the coarse information from the parameter landscape in Figure \ref{f:2a}.

\subsection{Farfield-core continuation and the moduli space}\label{s:6.2}
To explore the moduli space, we use the general approach outlined in \cite{lloydscheel}, as well as in \S\ref{s:3.2}, where heteroclinic profiles are decomposed into a pure asymptotic state cutoff away from negative infinity, and an exponentially localized perturbation which glues the asymptotic state to another asymptotic state at positive infinity. We use a cutoff function $\chi_-(x)$ supported on $x\leq d+1$ with $\chi_-\equiv 1$ on $x\leq d$ to decompose solutions of \eqref{e:shtwc} as
\[
u(x,y) = w(x,y) + \chi_-(x) u_\mathrm{p}(k_x x + y;k);
\]
see Figure \ref{fig:coreff} for a depiction of the various solution components.

\begin{figure} \centering
\hspace*{-0.5in}\includegraphics[width=0.8\linewidth,trim={2.0cm 0.0cm 0.0cm 0.5cm},clip]{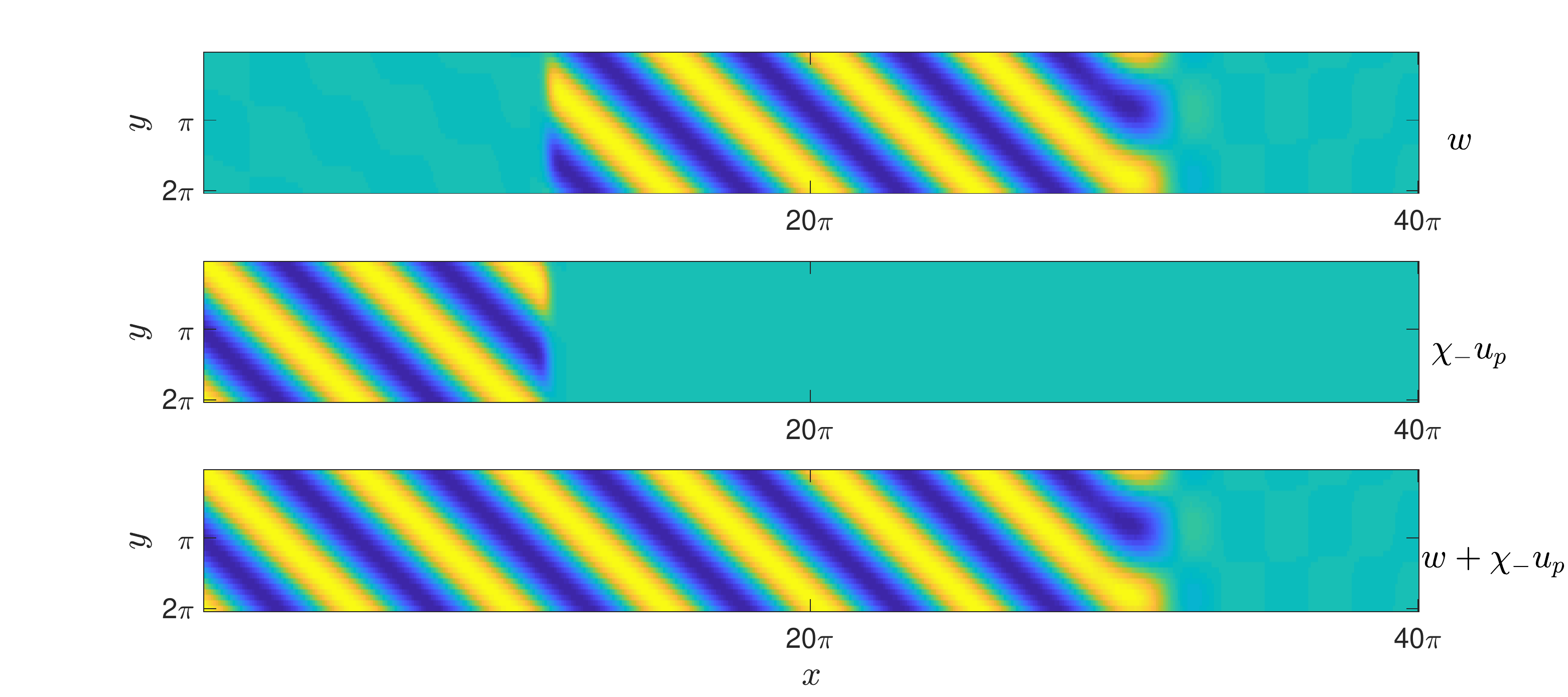}
\caption{Example of a farfield-core decomposition of a traveling wave solution $u = w + \chi_- u_\mathrm{p}$, quenching interface at $x  = 8L_x/10 \sim 100$ and farfield cutoff at $x = 3L_x/10 \sim 37$. }\label{fig:coreff}
\end{figure}

We insert this Ansatz into \eqref{e:shtwc} and use the fact that the stripe solution $u_\mathrm{p}$ is an easily available solution for $\rr\equiv \mu$ to obtain the following nonlinear problem for $(w;k_x,k_y,c_x)$ on a truncated domain
\begin{align}
\mathcal{L} \left( w+ \chi_- u_\mathrm{p} \right) -  \left( w+ \chi_- u_\mathrm{p} \right)^3 &=0,\qquad\quad (x,y)\in(0,L_x)\times(0,2\pi)\\
w = w_{xx} &=0,\qquad\quad(x,y)\in \{0,L_x\}\times(0,2\pi)\\
\partial_y^j w(x,0) - \partial_y^j w(x,2\pi) &=0,\qquad \quad x\in(0,L_x),j = 0,...,3,\\
\int_{x=0}^{2\pi}\int_{y=0}^{2\pi} u'_\mathrm{p}(k_x x + y;k) w(x,y) \rmd y\, \rmd x &=0,\quad\label{e:ffcpc}\\
-\left(k^2\frac{d^2}{d\xi^2} + 1\right)^2u_\mathrm{p} + \mu u_\mathrm{p} - u_\mathrm{p}^3 &=0,\qquad\quad \xi\in (0,2\pi)\\
\frac{d^j}{d\xi^j} u_\mathrm{p}(0) - \frac{d^j}{d\xi^j} u_\mathrm{p}(2\pi)&=0,\quad\qquad j = 0,...,3.
\end{align}
where $\mathcal{L} = -(\partial_{xx}+k_y^2 \partial_{yy}+1)^2 + \rr(x) + c_x( \partial_x + k_x \partial_y)$, and $k^2 = k_x^2 + k_y^2$, and where the parameter jump is now located at $L_\mathrm{q}=8L_x/10$, $\rr(x)\sim \,-\mu\sign(x-L_\mathrm{q})$. 

This decomposition suppresses the continuous family of neutral modes, arising from the asymptotic periodic pattern, in the spectrum of the linearization about a generic traveling wave solution $u(x,y)$ of \eqref{e:shtwc}.   In the $x$-unbounded domain, one imposes exponential weights on the perturbation $w$ to obtain a Fredholm index -1 linearization in $w$ which, after appending the wavenumber parameter $k_x$, yields a Fredholm index 0 problem, with trivial kernel whenever the derivative with respect to $k_x$ does not belong to the range.     
 
Truncating to $x\in [0,L_x]$, we impose Dirichlet boundary conditions in $x$ which are readily found to be transverse to the unstable subspace of the asymptotic stripes and constant state at $x=\pm\infty$. This implies that the truncated problem has the correct Fredholm index in $(w,k_x,k_y,c_x)$ and the perturbation $w$ will be exponentially localized in the domain for generic parameter values. This also implies that truncated solutions converge to the full modulated traveling wave as $L_x \rightarrow \infty.$ Note also that since the quenching interface destroys $x$-translational invariance, we need only one phase condition \eqref{e:ffcpc} to eliminate the multiplicity from the translational mode $\partial_y u_\mathrm{p}$ and fix the vertical phase of the solution. See \cite{lloydscheel,morrissey} for more details about this approach.




Using the above formulation, we implemented an arc-length continuation algorithm in \textsc{matlab2018a}, solving for $(w,k_x)$ and continuing in either $c$ or $k_y$ with the other fixed. We roughly followed the approach outlined in \cite[\S 3]{lloydscheel} and refer the reader there for more details on the implementation. To discretize the problem, we used fourth-order finite differences in $x$ with $L_x = 40\pi$ and approximately 500 grid points. We used a pseudo-spectral discretization in $y$ with 26 collocation points and the far-field periodic patterns $u_\mathrm{p}$ were also computed on a periodic domain $\zeta\in [0,2\pi)$ using a Fourier pseudo-spectral method with 26 collocation points. The quenching interface was placed at $x = 8L_x/10$ and the cutoff-interface was placed at $d = 3L_x/10$; see Figure \ref{fig:coreff} for a depiction of the computational domain and the solution decomposition.  The Jacobian of the discretized system is formed explicitly in $w$ while the derivatives in parameters were approximated using a second-order finite difference. We used the trust-region algorithm in \textsc{matlab}'s \textsc{fsolve} to perform the nonlinear Newton iterations.  Our initial guess for the nonlinear solver consisted of a piece-wise constant stripe solution, rotated to have a specific wavenumber, and cutoff at the quenching interface.  Throughout all of this section we used the onset parameter $\mu = 0.25.$


%
%

We explored the solution space starting from initial guesses along the line $(k_x,k_y) = (1,0)$, keeping $c\in(0,c_\mathrm{lin}(0))$ fixed and continuing in $k_y$ to track oblique solutions as they continuously perturbed from parallel stripes, $k_y = 0$.  For large $k_y$ curves in parameter space either run into the detachment curve predicted by $c_\mathrm{lin}(k_y)$ (roughly in the region $c_x\geq .55$) or the solution transitions, via a pitchfork bifurcation, through the family of perpendicular stripes, to the opposite orientation of stripes with $k_x <0$.  In the former cases, the core solution $w$ loses localization, bleeding into the far-field domain $x\in (0,d)$ and the $L^2$-norm of the full solution $u$ decays to zero as $c_x \rightarrow c_\mathrm{lin}(k_y)$.  

To explore the perpendicular stripe region we started from initial patterns along the line $(k_x,k_y) = (0,1.12)$ for a range of $c_x$.  Continuing in increasing $k_y$ for $c_x < c_\mathrm{lin}\left(\sqrt{\frac{2+\sqrt{3\mu}}{2}}\right) \sim 0.438691 $, the core solution once again loses localization as the detachment curve $c_\mathrm{lin}(k_y)$ is approached.  For larger $c_x\sim 0.5$, continuation in $k_y$ gives an isola bounded by the two saddle-node curves predicted by the Newell-Whitehead-Segel equation; see Figure \ref{f:perp-bif} and \ref{fig:sh-re-comp}.

Figure \ref{f:end} combines these two sets of continuations, oblique/parallel and perpendicular striped, to give an overview of the moduli surface and we find good agreement for all orientations of stripes between the measured detachment points and predictions from the linear spreading speed \eqref{e:clinsh}; see Figure \ref{fig:modsl} for slices of the moduli space for select values of $c_x$ with corresponding solution profiles\footnote{See \textsc{movie\textunderscore c2.948980e-01.m4v} and \textsc{movie\textunderscore c9.081633e-02.m4v} in supplementary materials for movies of these solutions.}.

\begin{figure}\centering
\includegraphics[width=.49\textwidth,trim={0.75cm 0 1.0cm 0},clip]{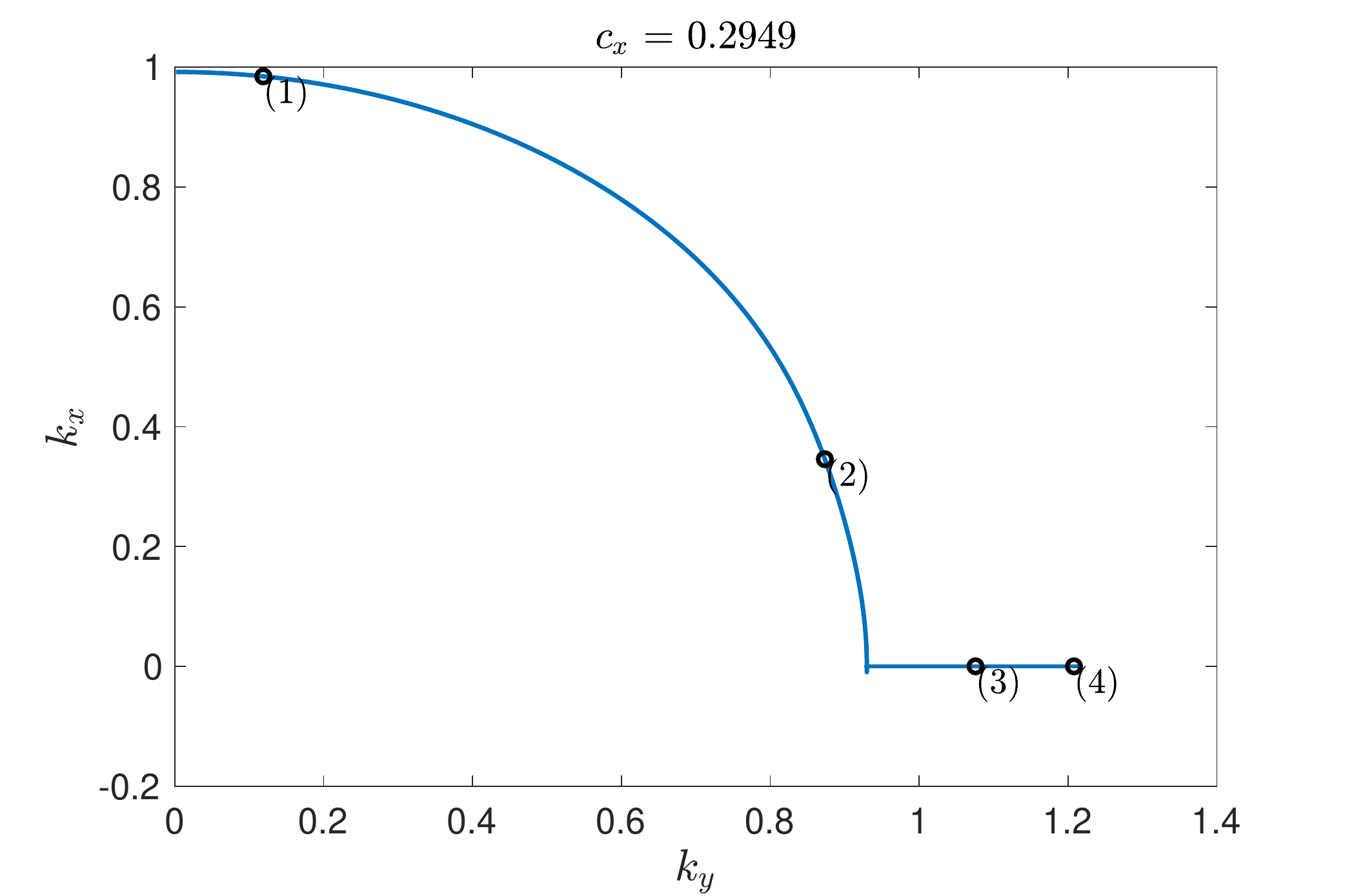}
\includegraphics[width=.48\textwidth,trim={1.5cm 0cm 1.8cm 0},clip]{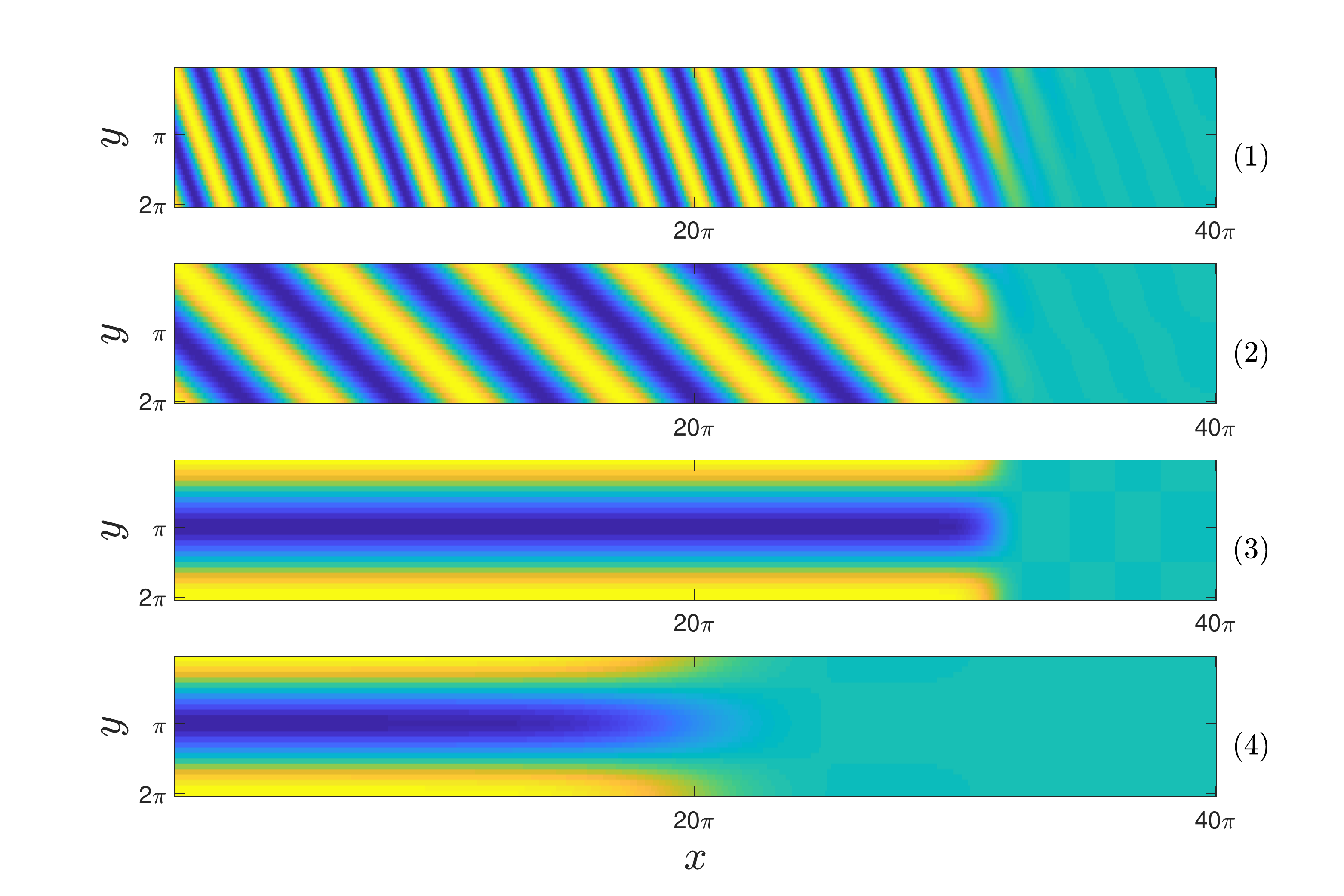}\\
\includegraphics[width=.49\textwidth,trim={0.75cm 0 1.0cm 0},clip]{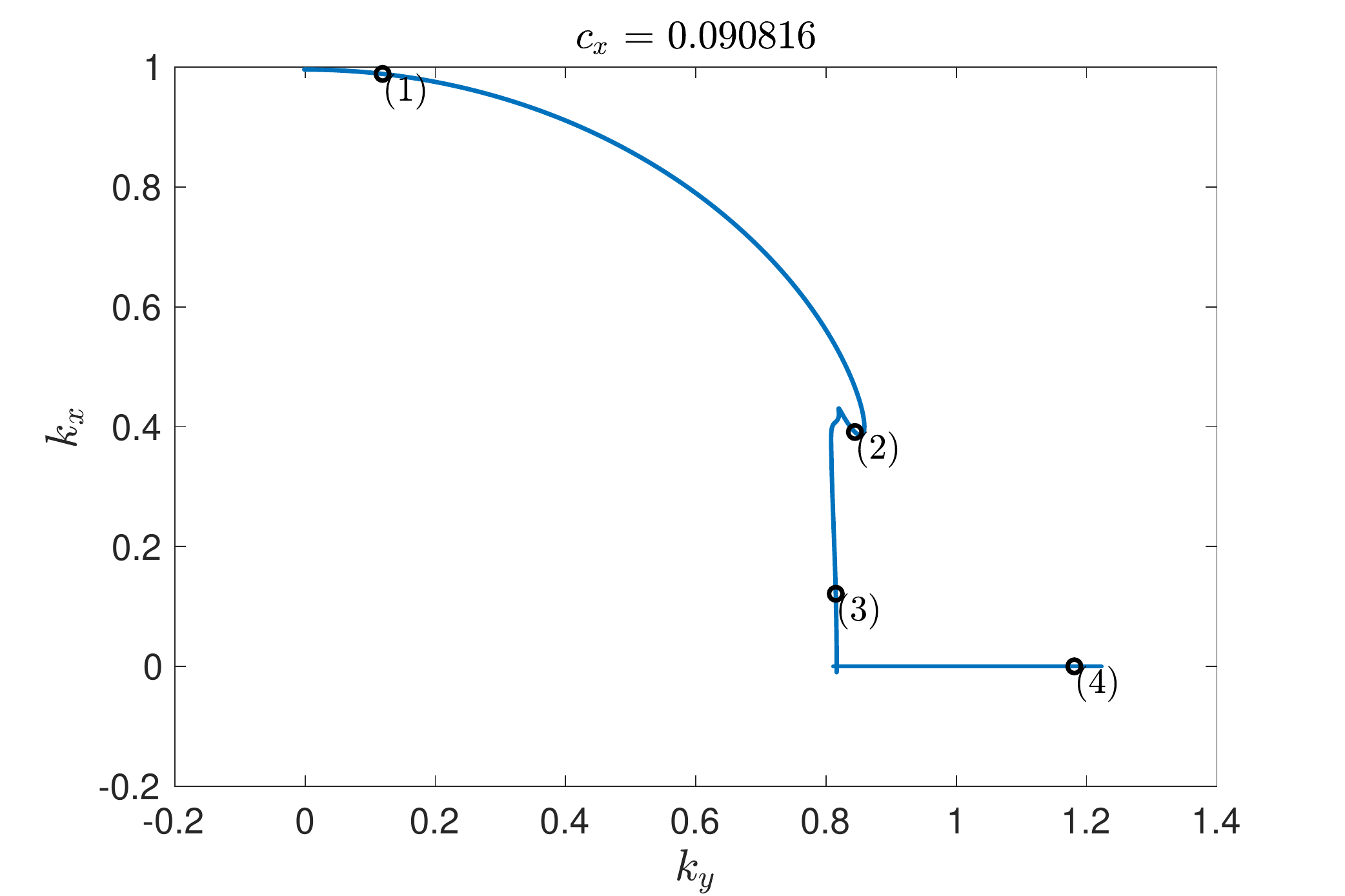}
\includegraphics[width=.48\textwidth,trim={1.5cm 0cm 1.8cm 0},clip]{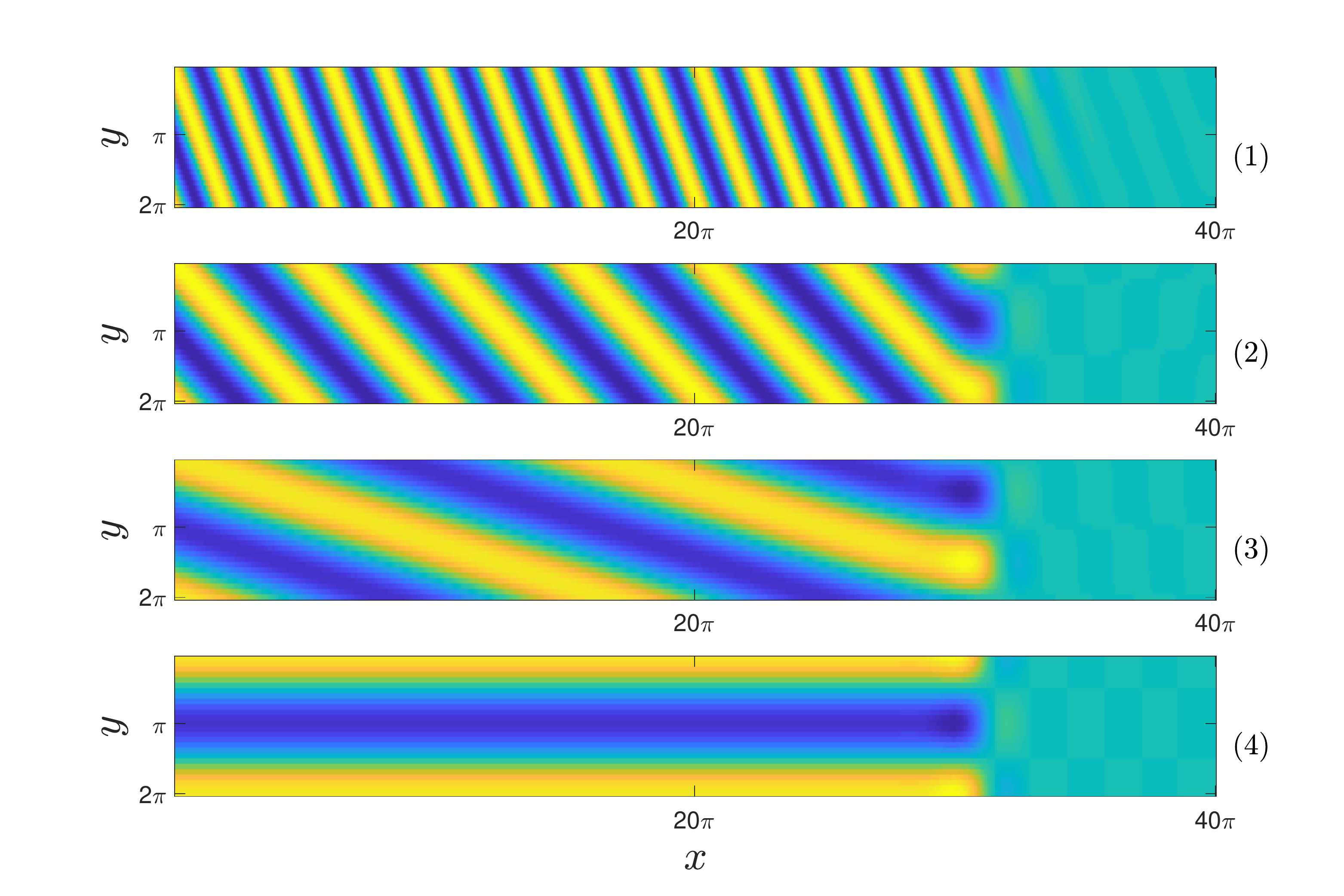}
\caption{Slices of the moduli space with $c_x$ fixed at $0.2949$ (top left) and $0.0908$ (bottom left) with solution profiles at various points along the surface (top and bottom right)}\label{fig:modsl}
\end{figure}




\subsection{Kink-dragging and the delicate arch}\label{s:6.3}
We next present continuation results of the kink-dragging bubble, visible as the ``delicate arch'' in Figure \ref{f:end}. As discussed  in \S\ref{s:2.4}, $y$-dependent patterned solutions with $c_x=0$ must select the critical zigzag curve $\{(k_x,k_y,c_x) \,|\,c_x  = \,0,\,k_\mathrm{zz}^2= k_x^2 + k_y^2\}$. Indeed starting with $k_x =0$ and continuing  in decreasing $k_y$ with $c_x=0$ fixed, oblique stripes bifurcated at the zig-zag critical wavenumber $k_y = k_\mathrm{zz}$ and the resulting curve conserves the bulk wavenumber $k = k_\mathrm{zz}$; see black curve in Figure \ref{fig:sh-zzbubble}.  These solutions were then used as initial guesses to continue solutions in $c_x$ with $k_y$ fixed (blue curves in Figure \ref{fig:sh-zzbubble}, left panel)\footnote{See \textsc{kink\textunderscore ky9.930188e-01.m4v} in supplementary materials for movie solutions along a slice of the bubble. }.

We also used the bifurcation curves obtained in \S\ref{s:4.1} for the kink-dragging bubble in the Cahn-Hilliard system \eqref{e:chtw00} to obtain a prediction for the corresponding bubble in Swift-Hohenberg. Letting $(c_{x,ch},\eta_{ch}(c_{x,ch}))$ denote the bifurcation curves for the speed and angle of stripes in the Cahn-Hilliard system, appropriate scalings yield the Swift-Hohenberg prediction,
 \begin{equation}\label{e:ch-pred}
 k_y = k_\mathrm{zz} - \zeta, \qquad  k_{x} = \sqrt{2}\,k_\mathrm{zz}\,\zeta^{1/2}\, \eta_\mathrm{ch} ,\qquad c_{x} = 8\,\zeta^{3/2}c_{x,\mathrm{ch}};
 \end{equation}
 see Figure \ref{fig:sh-zzbubble}. We find that the numerically predicted saddle-node curve (green) obtained from $(c_{x,\mathrm{ch}}^\mathrm{sn},\eta_{\mathrm{ch}}^\mathrm{sn})(\zeta)$,  asymptotically agrees well with the saddle-node curve in Swift-Hohenberg (red curve).


\begin{figure}[h]\centering
\includegraphics[width=0.48\textwidth,trim={1.5cm 0 1.5cm 0}]{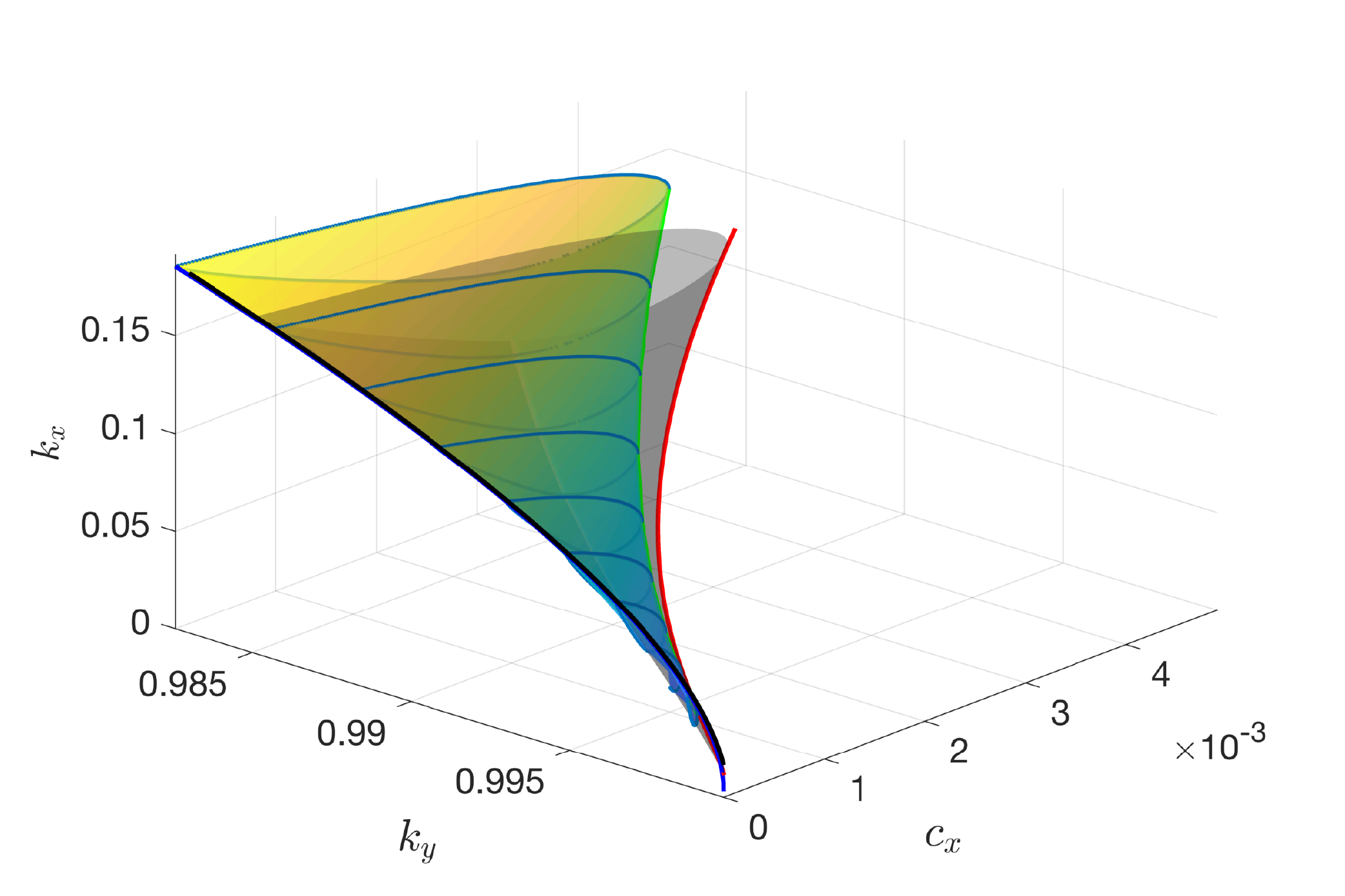}\hfill 
\raisebox{0.12in}{\includegraphics[width=0.37\textwidth,trim={1.5cm 0.5 2.5cm 0}]{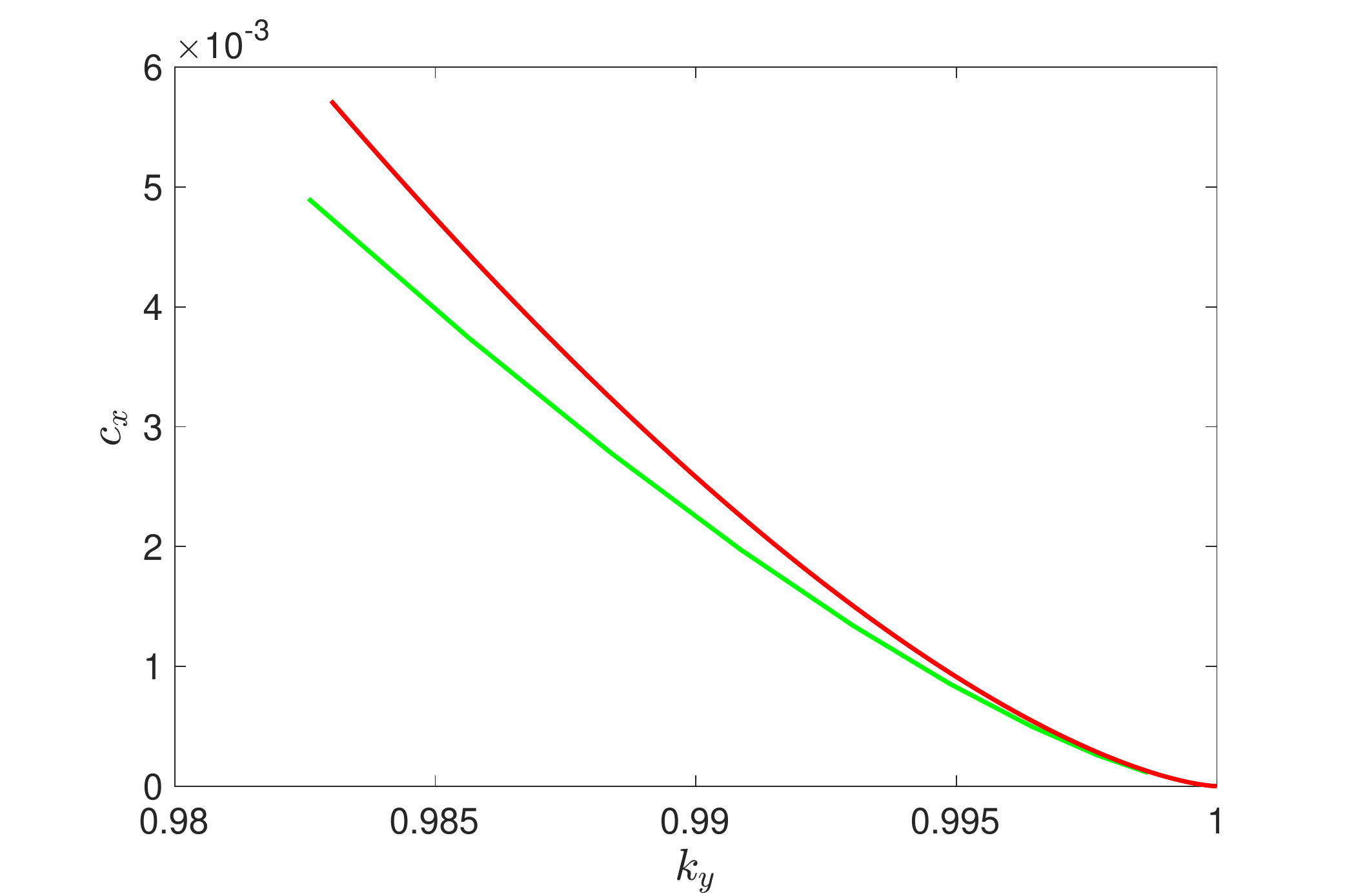}}\quad \qquad $ $\\
\includegraphics[width=0.45\textwidth,trim={1.5cm 0 0.5cm 0}]{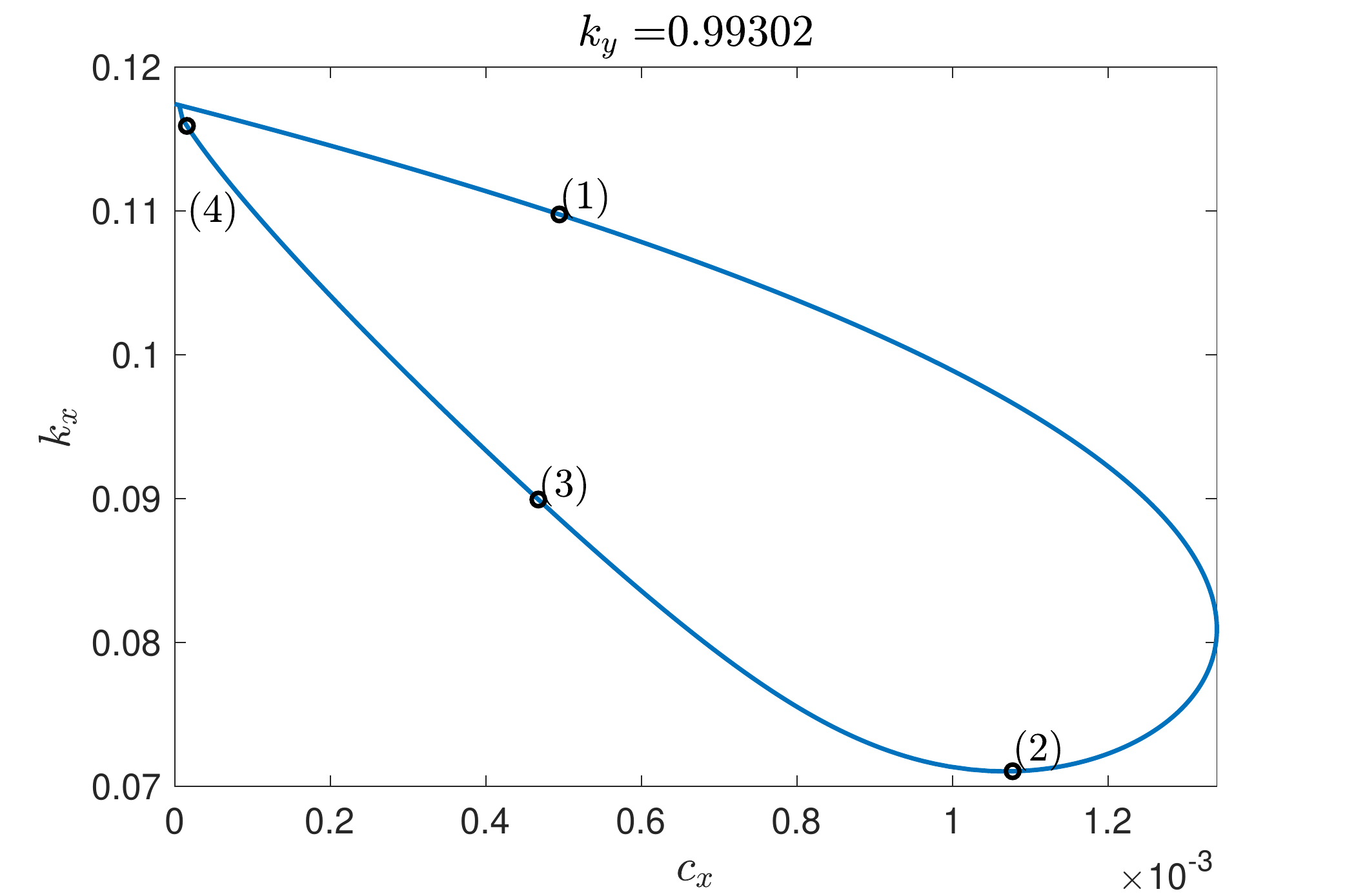}\hfill 
\includegraphics[width=0.5\textwidth,trim={1.5cm 1.5cm 2.5cm 0.5cm}]{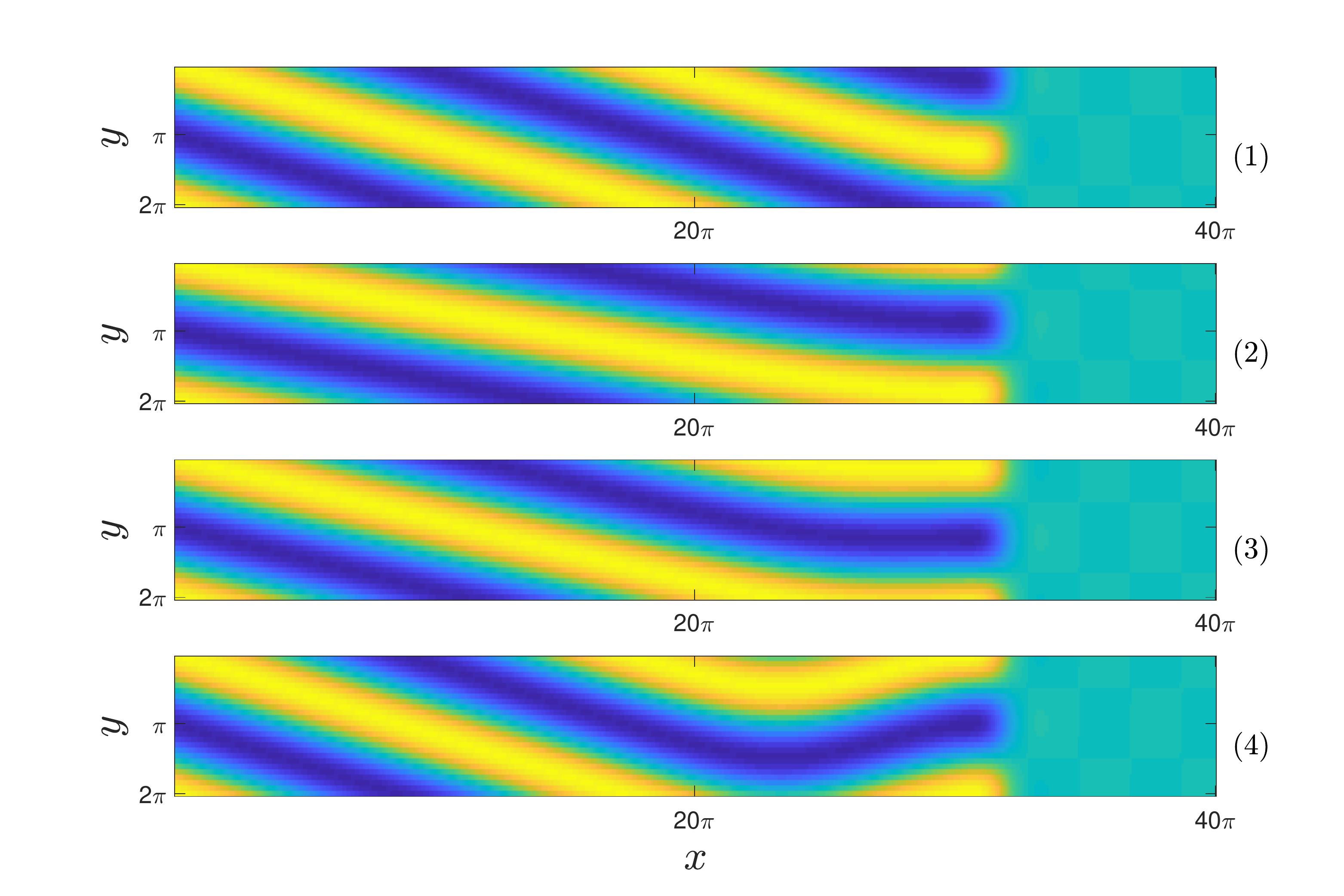}
\caption{Top row: Moduli space of the kink-dragging bubble (left) as interpolated surface (colored) from data obtained from numerical continuation for \eqref{e:shtwc} in $c_x$ (blue dots), compared against Cahn-Hilliard asymptotics \eqref{e:ch-pred} with Dirichlet boundary conditions (grey surface).  Green and red curves denote the fold curve in Swift-Hohenberg and Cahn-Hilliard respectively. Black curve gives the stationary zig-zag critical curve $\{(k_x,k_y,c_x)\,: k_{zz}^2 = k_x^2 + k_y^2,\, c_x = 0\}$ discussed in Sec. \ref{s:2.3}. Also shown, projection of saddle-node curves onto the $(k_y,c_x)$-plane (right). Bottom row: Cross section of kink-dragging bubble, continuation data with $k_y = 0.99302$ fixed (left) and solution profiles for points along kink-dragging curve (right), corresponding to labels in the left figure.}
\label{fig:sh-zzbubble}
\end{figure}


%
%
%

\subsection{Periodic detachment, oblique reattachment, and all-stripe detachment: the wing}\label{s:6.4}

We next compare the predictions given by the Newell-Whitehead-Segel reduction from \S\ref{s:5} with the Swift-Hohenberg moduli surface in the oblique-stripe reattachment regime.
%

\begin{figure}[h!]\centering
\includegraphics[width=0.5\textwidth,trim={1.5cm 0 0.5cm 0}]{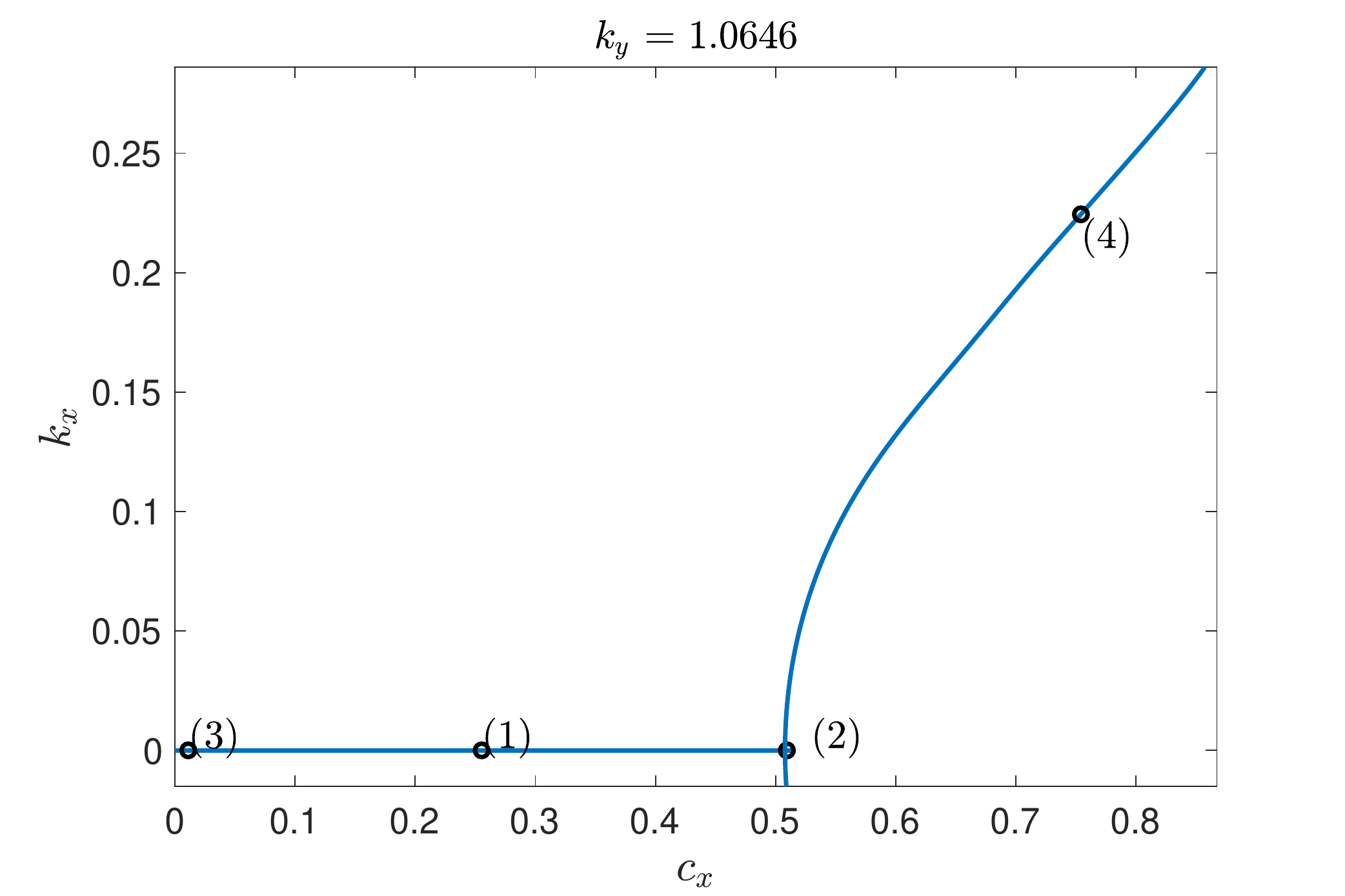}\hfill
\includegraphics[width = 0.49\textwidth,trim={1.5cm 0 2cm 0}]{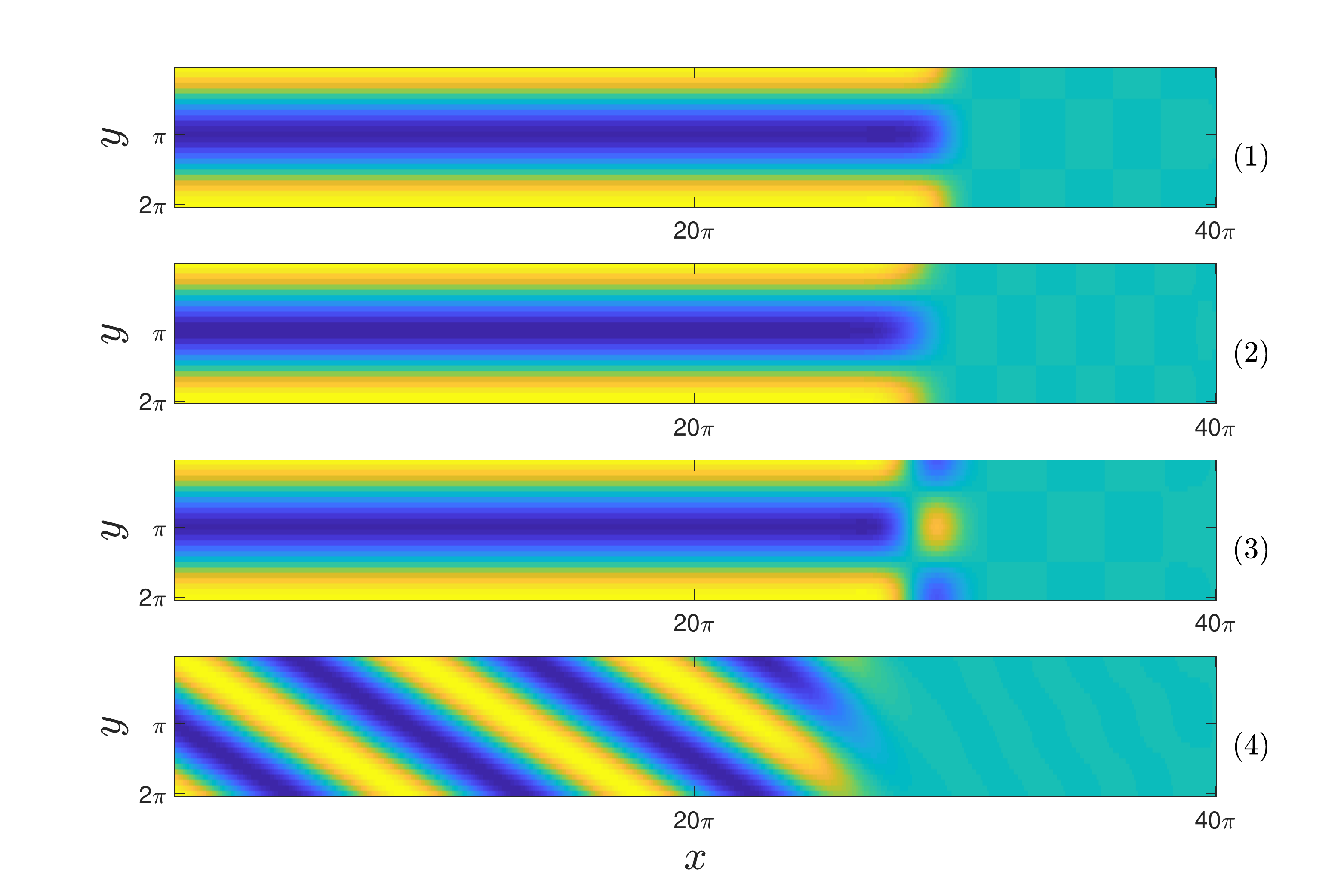}
\caption{Left: Cross-section of the Swift-Hohenberg moduli space for $k_y = 1.0646$ fixed, near the oblique stripe reattachment point. Right: Solution profiles along the perpendicular and oblique curves. Profiles $(1)$ and $(2)$ lie on the stable branch of the fold while profile $(3)$ lies on the unstable branch.  }
\label{fig:sh-obsl}
\end{figure}

\begin{figure}[h!]\centering
\includegraphics[width=0.5\textwidth,trim={0.5cm 1.9cm 0.5cm 0.5cm}]{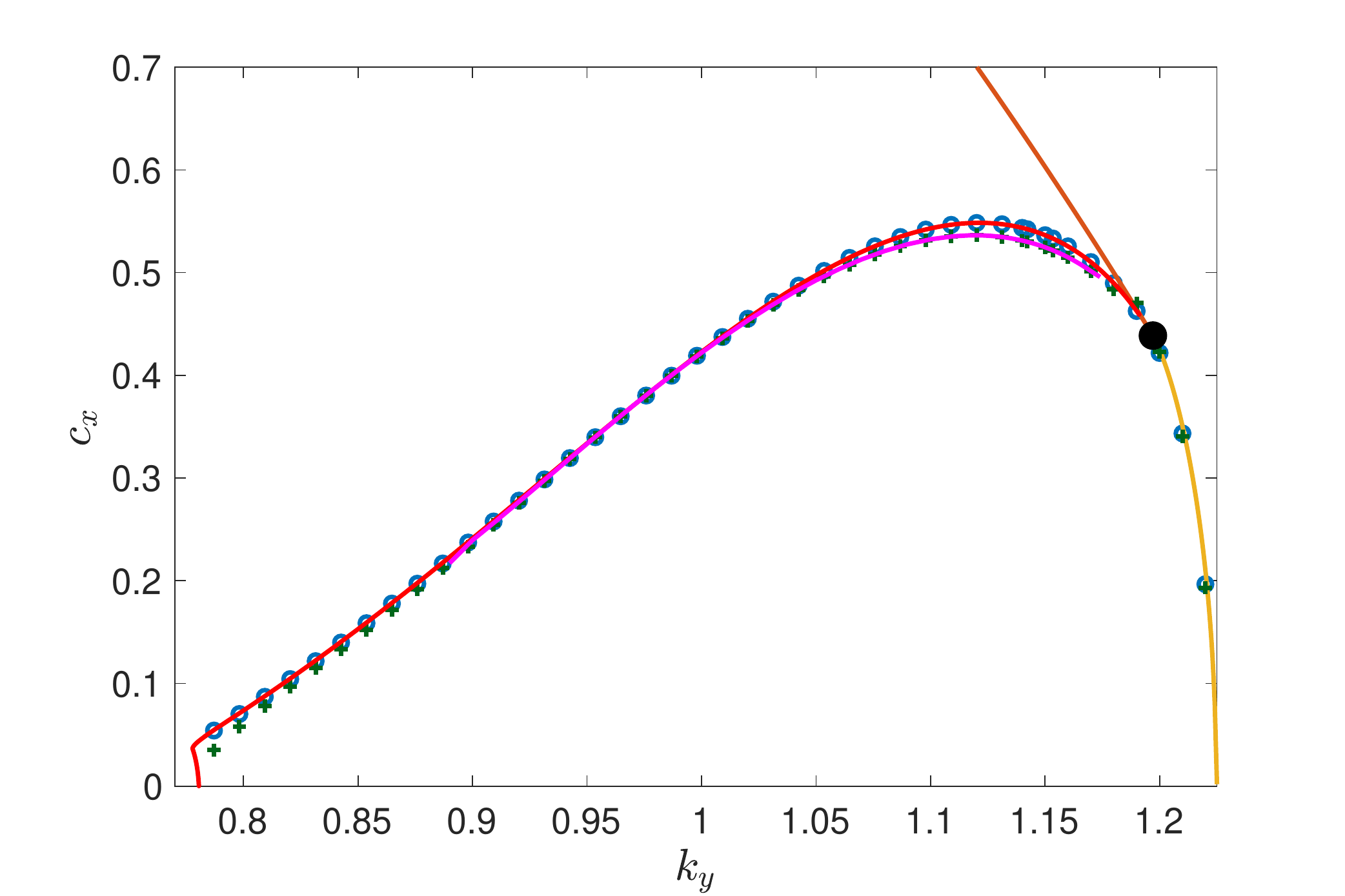}\hfill
\includegraphics[width=0.5\textwidth,trim={0.5cm 1.9cm 0.5cm 0.5cm}]{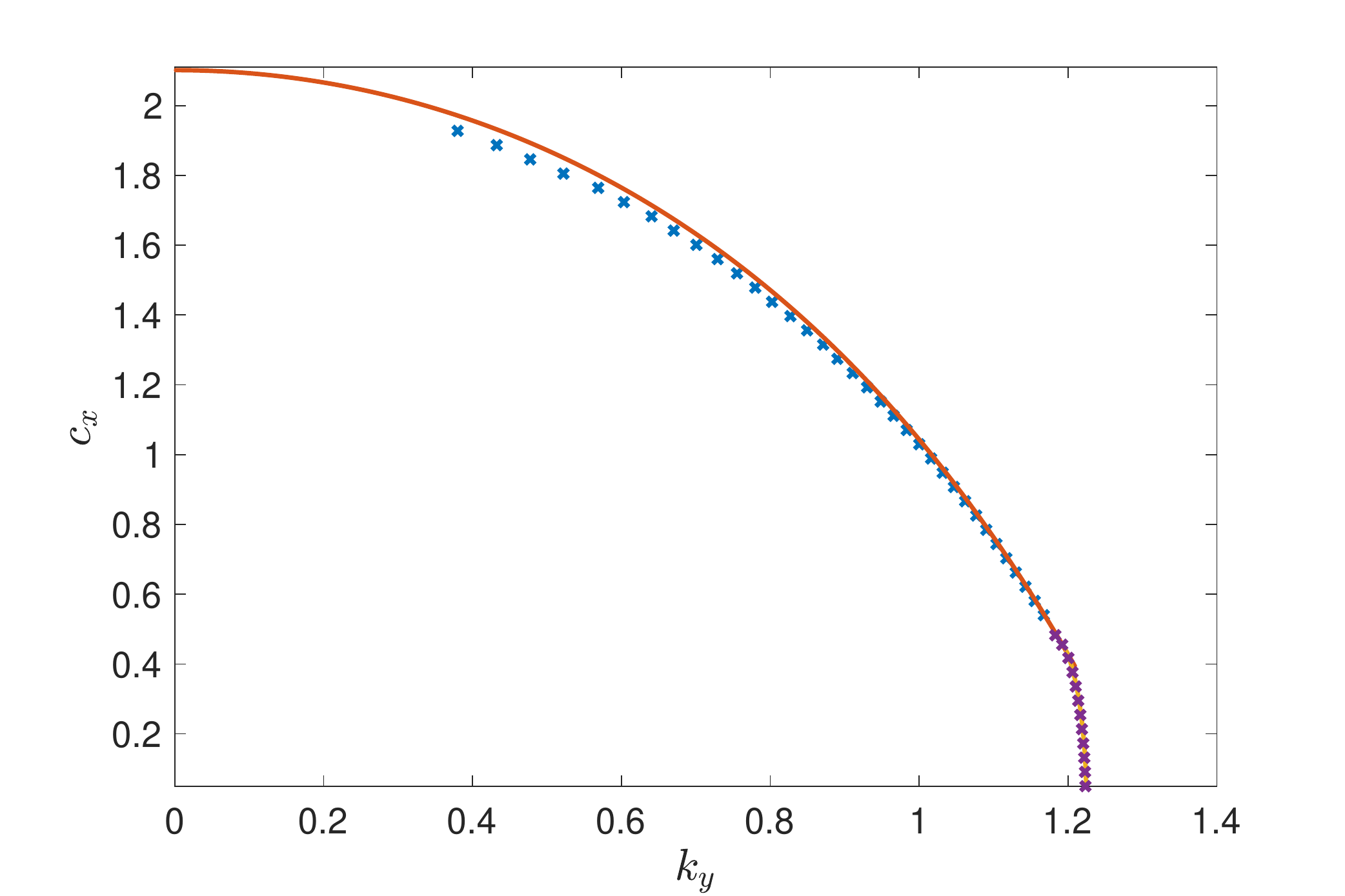}
\caption{Left: Comparison of the pitchfork and saddle-node curves found in Swift-Hohenberg (blue and green points) with predictions from Newell-Whitehead-Segel, Figure \ref{f:perp-bif} (magenta and red curves). Also plotted is the perpendicular stripe detachment curve (yellow) and the projection of the oblique stripe detachment curve.
Right: Plot of numerical detachment points (i.e. where the core solution dropped below a certain max norm near the boundary of the far-field) for both oblique (blue x's) and perpendicular (purple x's) stripes against the corresponding predictions from the linear spreading $c_\mathrm{lin}(k_y)$ for both $k_y<\sqrt{\frac{2+\sqrt{3\mu}}{2}}$ (orange curve) and $k_y>\sqrt{\frac{2+\sqrt{3\mu_0}}{2}}$ (yellow curve).
  }
\label{fig:sh-re-comp}
\end{figure}

As predicted in \S\ref{s:5.1}, continuing perpendicular stripes in $c_x$ for $k_y$ fixed less than $\sqrt{\frac{4+\sqrt{3}}{2}} \sim 1.1971,$ the solution destabilizes in a saddle-node bifurcation, after which the unstable branch undergoes a secondary pitchfork bifurcation from which the oblique stripes bifurcate\footnote{See \textsc{movie\textunderscore ky1.064602e+00.m4v} in supplementary materials for video of how solution varies this slice of
moduli space}; see Figure \ref{fig:sh-obsl}.  The bifurcating oblique stripes then continue up to the detachment curve $(k_x,k_y,c_x) = (k_\mathrm{lin}(k_y),k_y,c_\mathrm{lin}(k_y))$ predicted by the linear spreading speed calculated in \S\ref{s:5.2}.  Note that after the fold bifurcation, the perpendicular stripes develop a phase-kink in the vertical direction (see solution (3) of Figure \ref{fig:sh-obsl}), indicating how the nearby periodic solutions will evolve as shown in Figure \ref{f:sn}. Figure \ref{fig:sh-re-comp} compares the saddle-node, pitchfork, and detachment points, to the corresponding predictions from previous sections. 


\subsection{Hyperbolic and elliptic catastrophes at detachment interaction}\label{s:6.5}

Also, our solutions show that some reattachment curves near the lower $k_y$ boundary of the perpendicular attachment regime actually bend back and connect with the critical zig-zag curve at $c_x = 0,$ with the solution developing a kink at its interface\footnote{See \textsc{Catastrophe\textunderscore ky8.42e-01.m4v} and \textsc{Catastrophe\textunderscore ky8.4712e-01.m4v} in supplementary materials for movies of how solutions vary around catastrophe.}; see Figure \ref{fig:sh-zzsl}.  This happens when the kink-dragging bubble merges with the oblique-stripe reattachment surface, causing the top branch of the kink-dragging bubble to continue to the all-stripe detachment curve and the bottom branch of the bubble to connect with the perpendicular stripes. Locally, the reconnection can be described by Morse theory as a family of \emph{hyperbolas} forming a hyperboloid,  $\delta k_y\sim \kappa_1^2-\kappa_2^2$ where $\kappa_j$ are local coordinates in the $(k_x,c_x)$-plane. 

\begin{figure}[h!]\centering
\includegraphics[width=.48\textwidth,trim={0.5cm 0 1.5cm 0},clip]{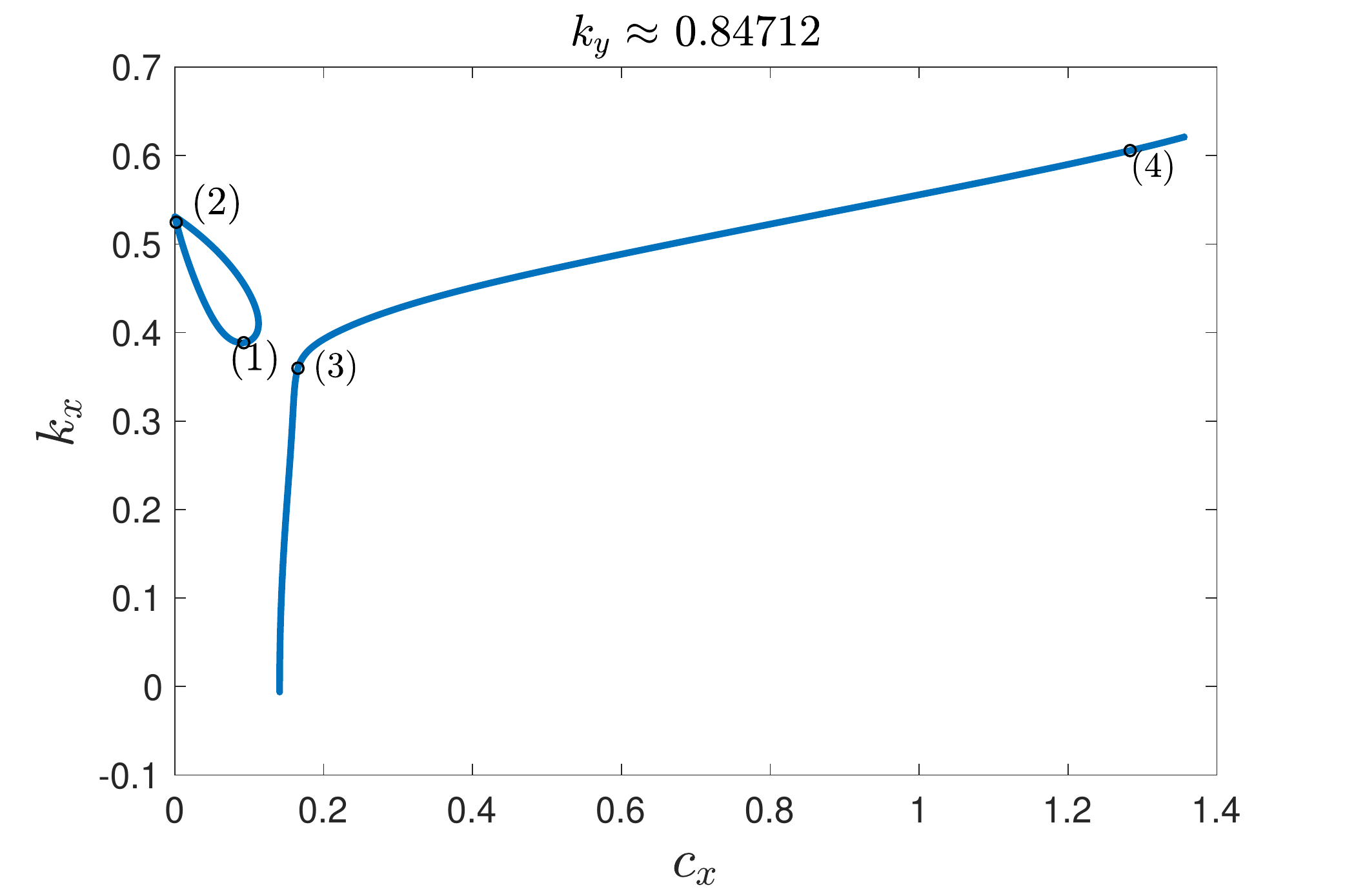}\hfill
\includegraphics[width=.49\textwidth,trim={1.5cm 0 1.8cm 0},clip]{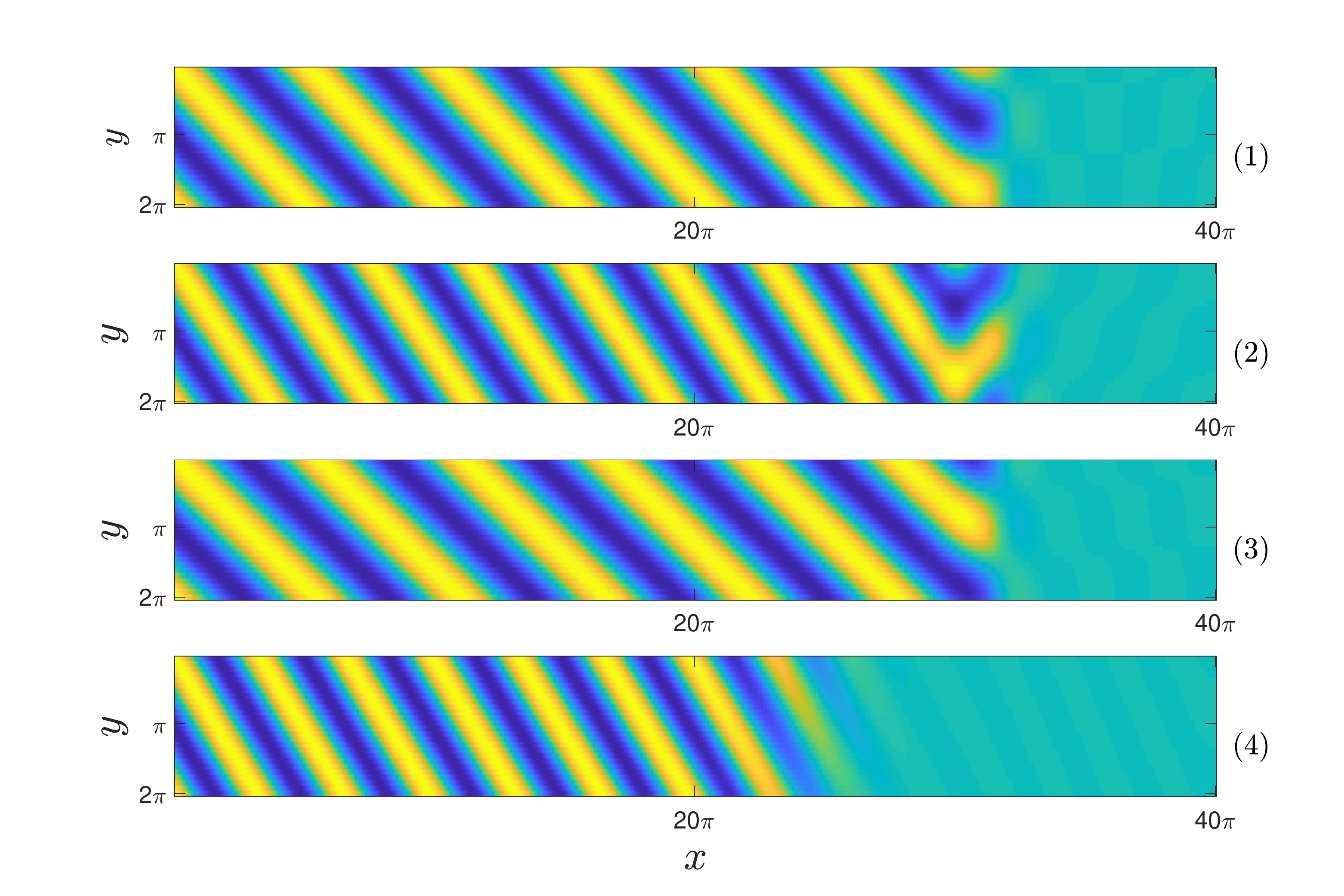}\\
\includegraphics[width=.48\textwidth,trim={0.5cm 0 1.5cm 0},clip]{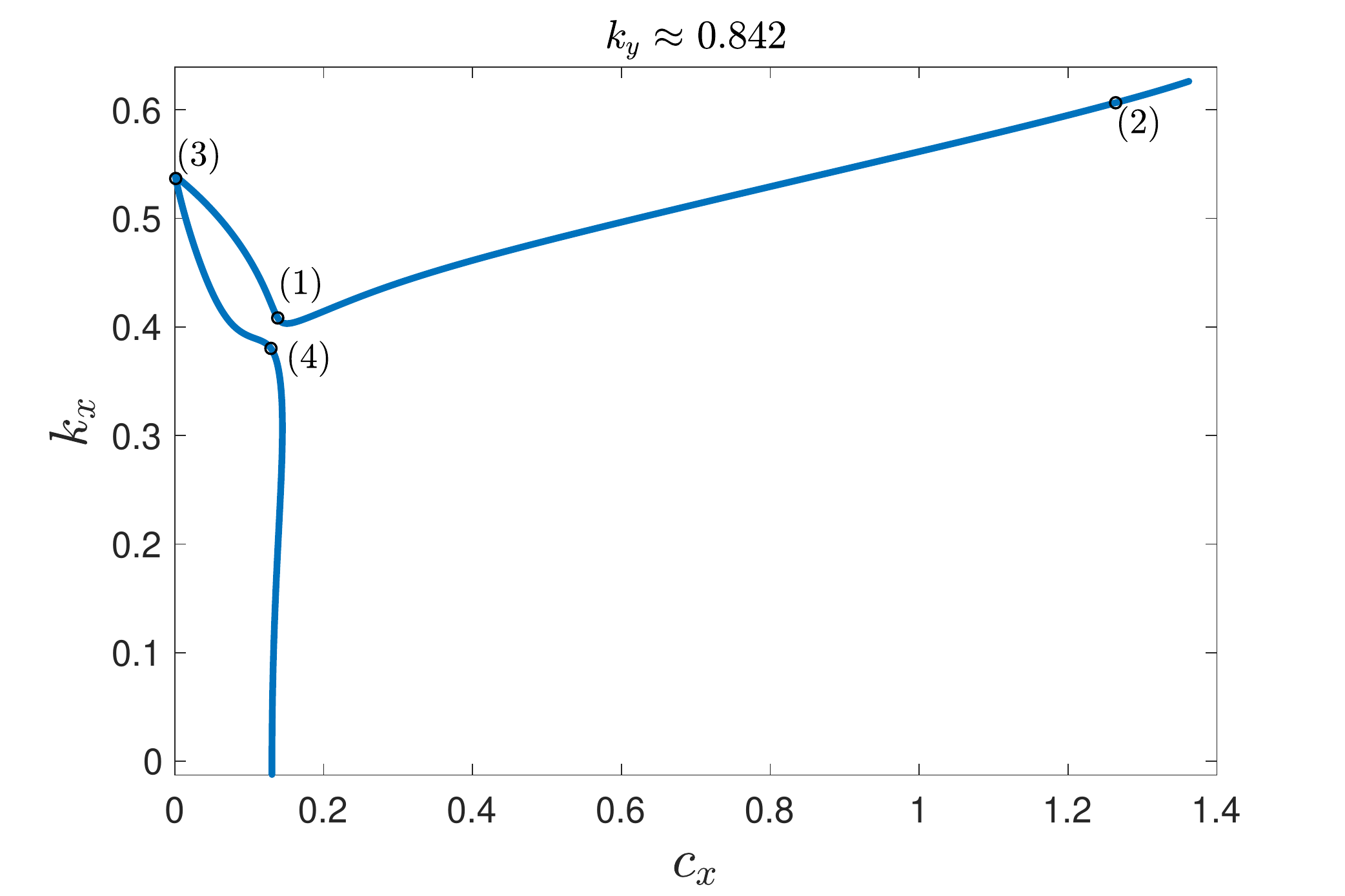}\hfill
\includegraphics[width=.49\textwidth,trim={1.5cm 0 1.8cm 0},clip]{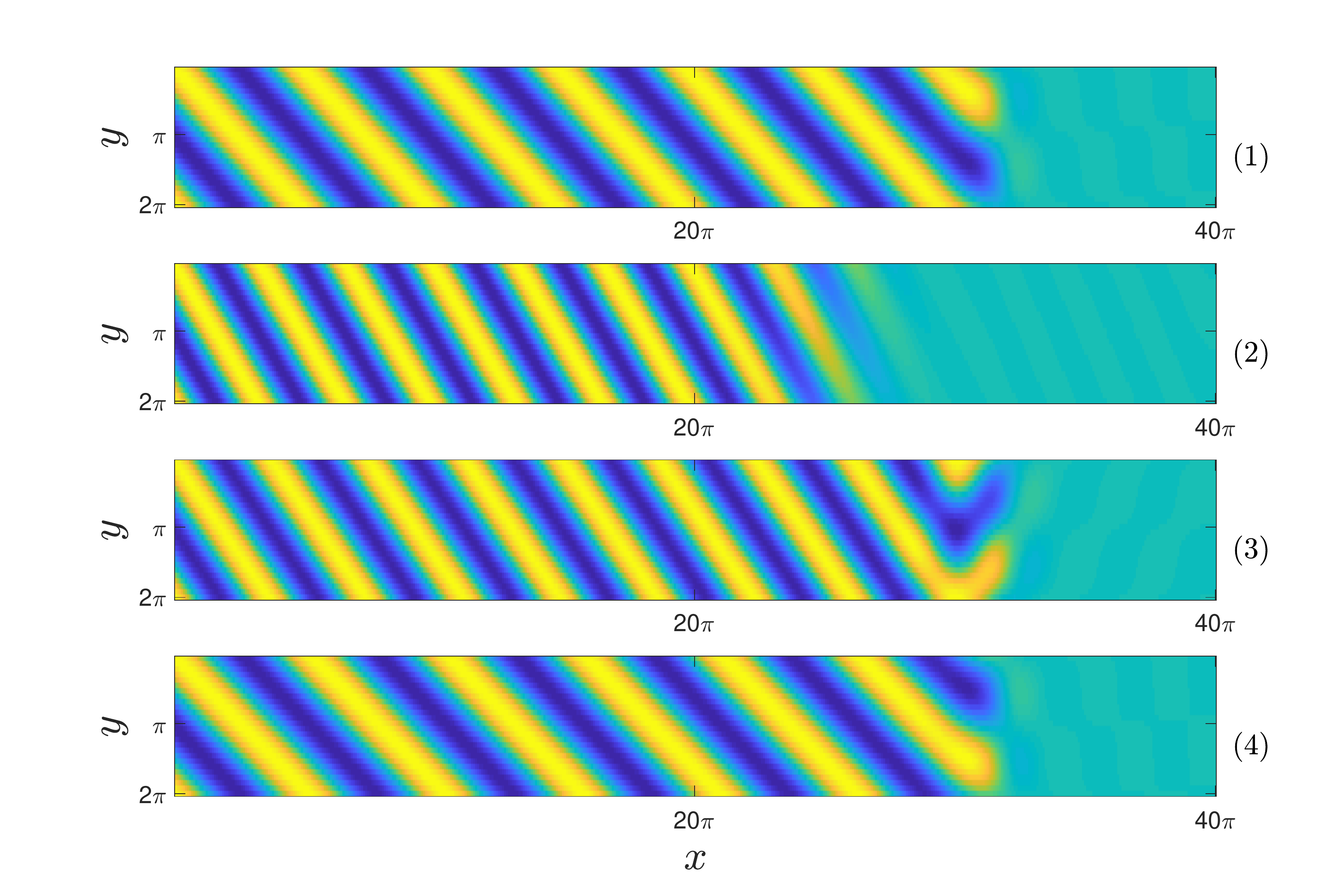}\\
\caption{Catastrophe where kink-dragging bubble merges with the oblique reattachment surface as $k_y$ is decreased. Cross-section of the moduli space on the left, corresponding solution profiles on the right. For larger $k_y$ (top row), the reattached oblique stripes eventually detach; For smaller $k_y$ (bottom row), the reattached oblique stripes turn back and connect with the zig-zag mode at $c_x = 0$.   }
\label{fig:sh-zzsl}
\end{figure}
Meanwhile, the saddle-node curve of perpendicular stripes reaches a minimum  near $k_y\sim 0.777$ before snaking around $c_x=0$; see Figure \ref{f:perp-bif}. Towards these smaller $k_y$-values, perpendicular stripes are confined to a finite interval of $c_x$-values where they form an isola between the two branches of the saddle-node, before disappearing. Near the singularity, the isolas have the shape of \emph{ellipses}, forming a paraboloid,  $\delta k_y\sim \kappa_1^2+\kappa_2^2$ for local coordinates $\kappa_j$ in the $(c_x,\|w\|)$-plane. 

Increasing $k_y$ slightly from this elliptic singularity, $k_y\sim 0.778$, we observe two new saddle-nodes emerging in a cusp singularity and the isolas form  figure-eight shaped curves;  see Figure \ref{fig:sh-isola}.  We also found an a small branch of oblique stripes with $k_x\sim 0$ which bifurcate off of and reattach to the upper branches of the figure-eight isolas\footnote{See \textsc{barba\textunderscore ky7.8e-01.m4v} in supplementary materials for movie of solutions along this isola.}; see Figure \ref{fig:sh-hat}. We suspect that the isolas continue into a more complex scenario of broken up snakes and ladders as observed for instance in \cite{mccalla,makrides}.

\begin{figure}[h!]\centering
\includegraphics[width=.99\textwidth,trim={1.5cm 0 1.5cm 0},clip]{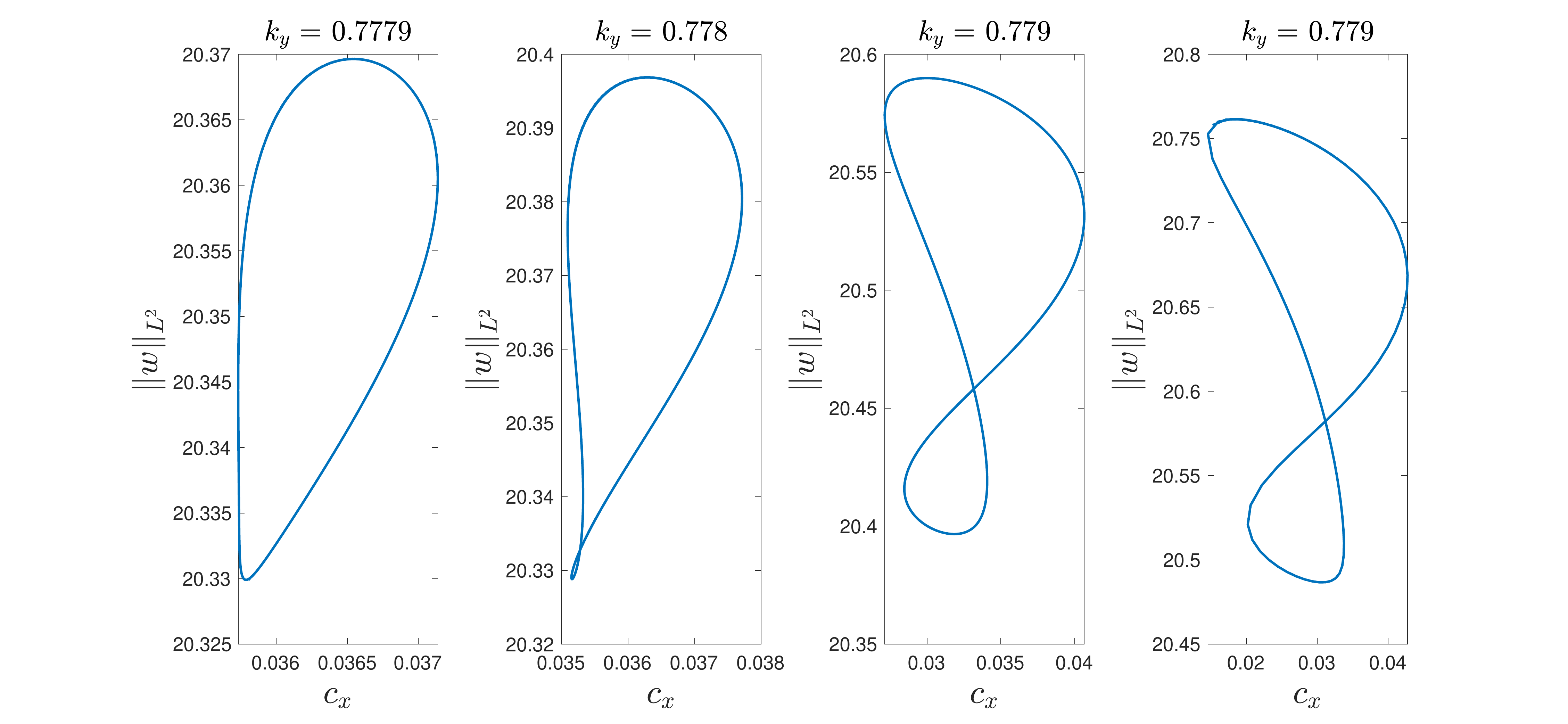}\hfill
\caption{Isola curves of perpendicular stripes near the parabolic catastrophe, plotting $L^2$-norm of the core-solution against $c_x$,  for $k_y = 0.7779, 0.778 ,0.779 ,0.78$ (from left to right). }
\label{fig:sh-isola}
\end{figure}

\begin{figure}[h!]\centering
\includegraphics[width=.49\textwidth,trim={1.5cm 0 1.5cm 0},clip]{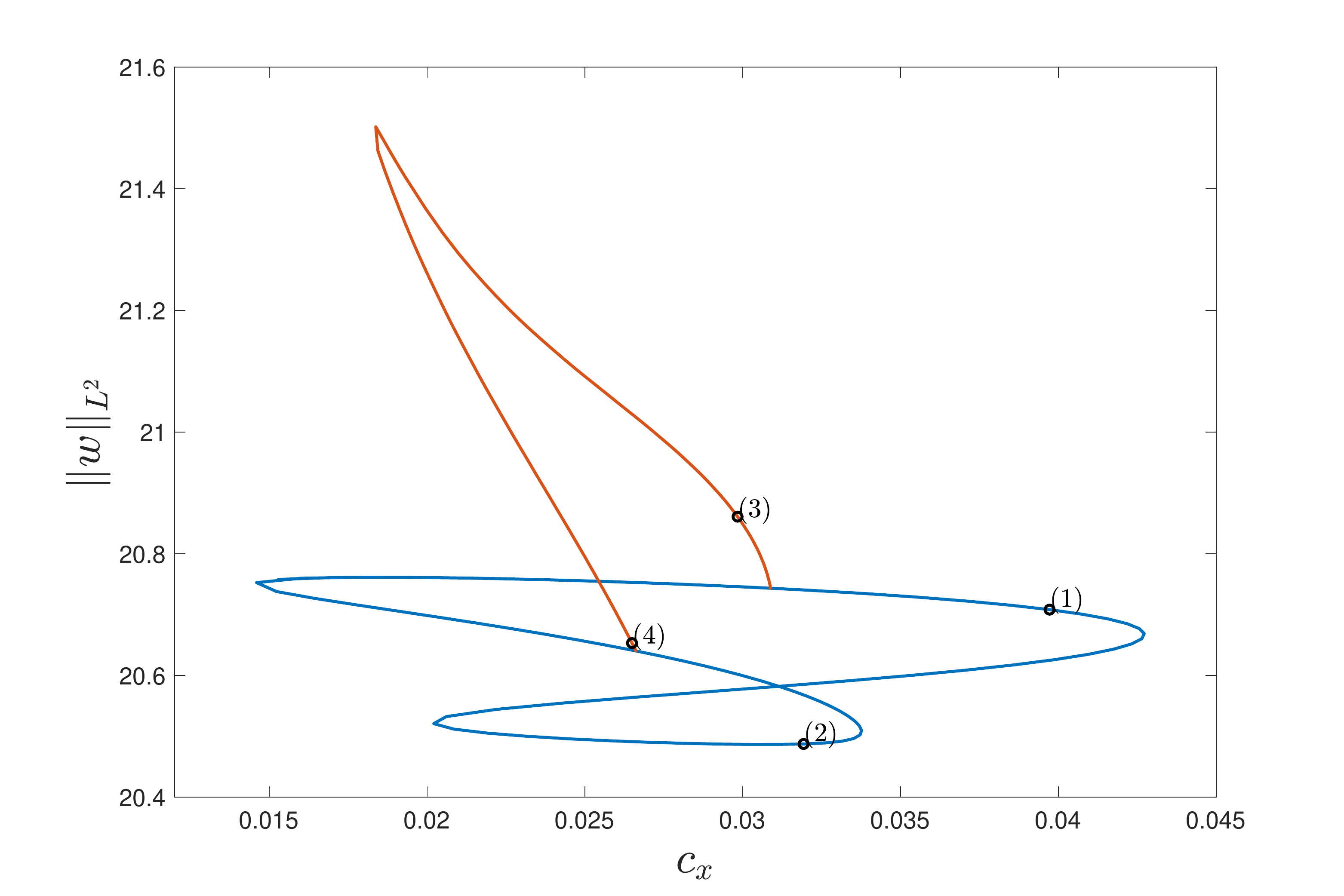}\hfill
\includegraphics[width=.49\textwidth,trim={1.5cm 0 1.8cm 0},clip]{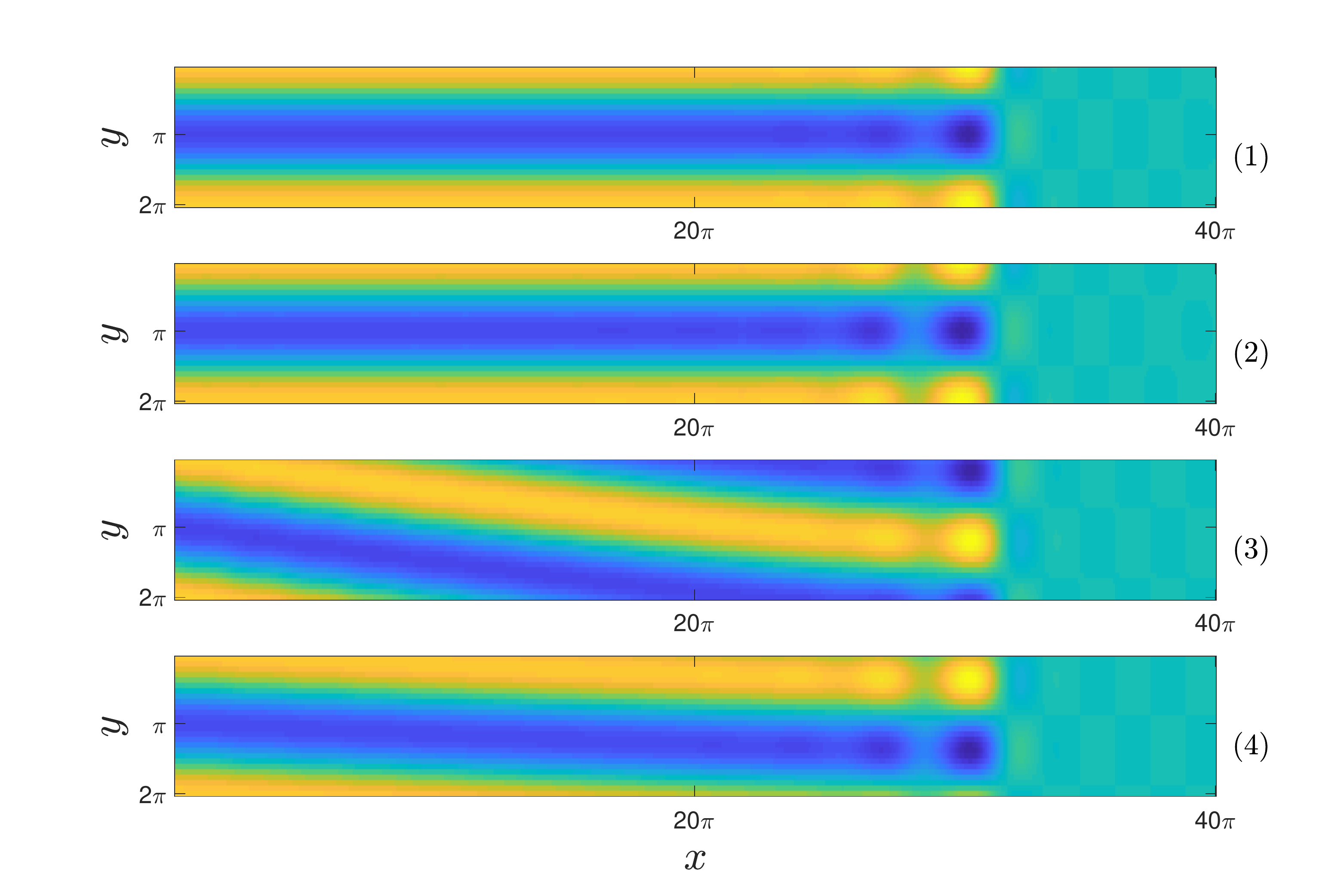}
\caption{Perpendicular figure-eight isola (blue) for $k_y = 0.78$, where a small branch of oblique stripes ($\max k_x \sim 0.12$) bifurcate as secondary pitchfork bifurcations (orange) . }
\label{fig:sh-hat}
\end{figure}


%
%

\section{Discussion}\label{s:d}

We analyzed the formation of stripes formed in the wake of a directional quenching process, relating in particular orientation and wavelength to the speed of propagation of the quenching line. Our work here focused on a region of transition between stripes formed perpendicular to the quenching line, and stripes formed at a small oblique angle. Our major findings illustrate that this transition is in fact quite subtle, organized by a variety of bifurcations of solutions, often accompanied by essential spectra. 

Among the key organizing recurring features are saddle-node bifurcations. Taking a perspective of increasing the rate of quenching $c_x$ for a fixed lateral wavenumber $k_y$, we can continue both perpendicular and oblique stripes until they undergo a saddle-node bifurcation. In both cases, the saddle-node bifurcation marks the release of a defect (or kink) from the quenching line. This kink is both visible in the continuation of the unstable branch from the saddle-node bifurcation point, as its distance from the quenching line increases with decreasing quenching speed $c_x$, and in the profiles visible for speeds past the critical saddle-node speed. Both saddle-node bifurcations are of independent theoretical interest. First, the bifurcation to periodic orbits resembles in many ways a saddle-node on a limit cycle, with many caveats reflected in the asymptotics of period and in the lack of an actual homoclinic orbit connecting the saddle-node equilibrium. Second, it would be very desirable to gain a theoretical understanding of the saddle-node that would ideally predict its location in the $(c_x,k_y)$-plane, possibly also explain the presence of the additional nearby pitchfork bifurcation for perpendicular stripes. Our analysis only gives such predictions near a detachment point and for small speeds, only. 

Beyond the transition from perpendicular to oblique stripes, we have analyzed detachment \cite{gs1,gs2}, $k_y\sim 0$ \cite{gs3}, and $k_y=0$, $c_x\sim 0$ \cite{beekie} in prior work. From this ``completist'' perspective, the major challenge appears to be a description of the moduli space in a vicinity of  $k_y,c_x=0$.

Our results are somewhat universal. Small speed predictions near the zigzag transition should hold quite universally for systems with such an instability. Moderate speed predictions should hold near onset, where amplitude equation approximations are valid. We did notice however subtle differences varying $\mu$ or, more generally, setting $\rr(x)=\mu_\pm$ for $\pm x>0$.  From this perspective, we hope that our computational approach in \S\ref{s:6} will help compare different systems systematically and quantitatively, in particular in regimes where subtle bifurcations and multi-stability make a direct mapping of parameter space through direct simulations unreliable. Interesting extensions here would include different types of parameter triggers, boundary conditions at $x=0$ rather than quenching, and non-variational effects, but also systems such as reaction-diffusion models from morphogenesis, possibly far from onset. 

Our results on perpendicular stripes in \S\ref{s:5} predicted transitions in direct simulations very well. In particular, the far-field instabilities of perpendicular stripes appeared to be the only limitations on observability other than the saddle-node and pitchfork bifurcations. We did not attempt such a stability analysis for oblique stripes, which would be both algebraically and computationally more involved, but would clearly complement and to some extent complete the analysis, here. 

Within the context of pattern formation, a very natural next question would point towards growth patterns when spots, in particular on hexagonal lattices, are the preferred states. Many of the tools here, in particular computational recipes from \S\ref{s:6} and amplitude equation approximations would still be available in this context. Changes in orientation of hexagonal lattices with respect to the quenching line are however subject to more complex pinning effects, as the lateral period for the creation of ideal energy-minimizing hexagons would be subject to a wealth of resonances as the relative angle varies; we refer to \cite{phyllo} for a study of hexagonal patterns formed in the wake of interfaces, with emphasis on periodicities and orientation of lattices in the example of phyllotaxis. 

%

\begin{thebibliography}{10}

\bibitem{zigzageutectic}
S.~Akamatsu, S.~Bottin-Rousseau, and G.~Faivre.
\newblock Experimental evidence for a zigzag bifurcation in bulk lamellar
  eutectic growth.
\newblock {\em Phys. Rev. Lett.}, 93:175701, Oct 2004.

\bibitem{eshel}
E.~Ben-Jacob, I.~Cohen, and H.~Levine.
\newblock Cooperative self-organization of microorganisms.
\newblock {\em Advances in Physics}, 49(4):395--554, 2000.

\bibitem{bodenschatz}
E.~Bodenschatz, W.~Pesch, and G.~Ahlers.
\newblock Recent developments in {R}ayleigh-{B}{\'e}nard convection.
\newblock {\em Annual Review of Fluid Mechanics}, 32(1):709--778, 2000.

\bibitem{icicle}
A.~S.-H. Chen and S.~W. Morris.
\newblock On the origin and evolution of icicle ripples.
\newblock {\em New Journal of Physics}, 15(10):103012, 2013.

\bibitem{chowlin}
S.-N. Chow and X.-B. Lin.
\newblock Bifurcation of a homoclinic orbit with a saddle-node equilibrium.
\newblock {\em Differential Integral Equations}, 3(3):435--466, 1990.

\bibitem{double}
D.~Double.
\newblock Imperfections in lamellar eutectic crystals.
\newblock {\em Materials Science and Engineering}, 11(6):325 -- 335, 1973.

\bibitem{fen}
N.~Fenichel.
\newblock Geometric singular perturbation theory for ordinary differential
  equations.
\newblock {\em J. Differential Equations}, 31(1):53--98, 1979.

\bibitem{fiedlerscheel}
B.~Fiedler and A.~Scheel.
\newblock Spatio-temporal dynamics of reaction-diffusion patterns.
\newblock In {\em Trends in nonlinear analysis}, pages 23--152. Springer,
  Berlin, 2003.

\bibitem{foard}
E.~M. Foard and A.~J. Wagner.
\newblock Survey of morphologies formed in the wake of an enslaved
  phase-separation front in two dimensions.
\newblock {\em Phys. Rev. E}, 85:011501, Jan 2012.

\bibitem{friedrich}
R.~Friedrich, G.~Radons, T.~Ditzinger, and A.~Henning.
\newblock Ripple formation through an interface instability from moving growth
  and erosion sources.
\newblock {\em Phys. Rev. Lett.}, 85:4884--4887, Dec 2000.

\bibitem{beekie}
R.~Goh, R.~Beekie, D.~Matthias, J.~Nunley, and A.~Scheel.
\newblock Universal wave-number selection laws in apical growth.
\newblock {\em Phys. Rev. E}, 94:022219, Aug 2016.

\bibitem{gs3}
R.~Goh and A.~Scheel.
\newblock Pattern-forming fronts in a swiftâhohenberg equation with
  directional quenching â parallel and oblique stripes.
\newblock {\em Journal of the London Mathematical Society}, 98(1):104--128.

\bibitem{gs1}
R.~Goh and A.~Scheel.
\newblock Triggered fronts in the complex {G}inzburg {L}andau equation.
\newblock {\em J. Nonlinear Sci.}, 24(1):117--144, 2014.

\bibitem{gs2}
R.~Goh and A.~Scheel.
\newblock Pattern formation in the wake of triggered pushed fronts.
\newblock {\em Nonlinearity}, 29(8):2196--2237, 2016.

\bibitem{hs}
M.~Haragus and A.~Scheel.
\newblock Interfaces between rolls in the {S}wift-{H}ohenberg equation.
\newblock {\em Int. J. Dyn. Syst. Differ. Equ.}, 1(2):89--97, 2007.

\bibitem{holz}
M.~Holzer and A.~Scheel.
\newblock Criteria for pointwise growth and their role in invasion processes.
\newblock {\em J. Nonlinear Sci.}, 24(4):661--709, 2014.

\bibitem{hoyle}
R.~B. Hoyle.
\newblock {\em Pattern formation}.
\newblock Cambridge University Press, Cambridge, 2006.
\newblock An introduction to methods.

\bibitem{palmer}
K.~J.~Palmer.
\newblock Exponential dichotomies and fredholm operators.
\newblock 104:149--149, 09 1988.

\bibitem{koepf1}
M.~H. K\"opf and U.~Thiele.
\newblock Emergence of the bifurcation structure of a {L}angmuir--{B}lodgett
  transfer model.
\newblock {\em Nonlinearity}, 27(11):2711, 2014.

\bibitem{gaffney}
A.~L. Krause, V.~Klika, T.~E. Woolley, and E.~A. Gaffney.
\newblock Heterogeneity induces spatiotemporal oscillations in
  reaction-diffusion systems.
\newblock {\em Phys. Rev. E}, 97:052206, May 2018.

\bibitem{krekhov}
A.~Krekhov.
\newblock Formation of regular structures in the process of phase separation.
\newblock {\em Phys. Rev. E}, 79:035302, Mar 2009.

\bibitem{lloydscheel}
D.~J.~B. Lloyd and A.~Scheel.
\newblock Continuation and bifurcation of grain boundaries in the
  {S}wift-{H}ohenberg equation.
\newblock {\em SIAM J. Appl. Dyn. Syst.}, 16(1):252--293, 2017.

\bibitem{makrides}
E.~Makrides and B.~Sandstede.
\newblock Predicting the bifurcation structure of localized snaking patterns.
\newblock {\em Phys. D}, 268:59--78, 2014.

\bibitem{mccalla}
S.~McCalla and B.~Sandstede.
\newblock Snaking of radial solutions of the multi-dimensional
  {S}wift-{H}ohenberg equation: a numerical study.
\newblock {\em Phys. D}, 239(16):1581--1592, 2010.

\bibitem{mielke}
A.~Mielke.
\newblock Instability and stability of rolls in the {S}wift--{H}ohenberg
  equation.
\newblock {\em Communications in Mathematical Physics}, 189(3):829--853, Nov
  1997.

\bibitem{cima}
D.~G. M\'{\i}guez, M.~Dolnik, A.~P. Mu\~nuzuri, and L.~Kramer.
\newblock Effect of axial growth on turing pattern formation.
\newblock {\em Phys. Rev. Lett.}, 96:048304, Feb 2006.

\bibitem{bradley}
S.~A. Mollick, D.~Ghose, P.~D. Shipman, and R.~Mark~Bradley.
\newblock Anomalous patterns and nearly defect-free ripples produced by
  bombarding silicon and germanium with a beam of gold ions.
\newblock {\em Applied Physics Letters}, 104(4):043103, 2014.

\bibitem{morrissey}
D.~Morrissey and A.~Scheel.
\newblock Characterizing the effect of boundary conditions on striped phases.
\newblock {\em SIAM Journal on Applied Dynamical Systems}, 14(3):1387--1417,
  2015.

\bibitem{phyllo}
M.~F. Pennybacker, P.~D. Shipman, and A.~C. Newell.
\newblock Phyllotaxis: Some progress, but a story far from over.
\newblock {\em Physica D: Nonlinear Phenomena}, 306:48 -- 81, 2015.

\bibitem{pismen}
L.~M. Pismen.
\newblock {\em Patterns and interfaces in dissipative dynamics}.
\newblock Springer Series in Synergetics. Springer-Verlag, Berlin, 2006.
\newblock With a foreword by Y. Pomeau.

\bibitem{rss}
J.~D.~M. Rademacher, B.~Sandstede, and A.~Scheel.
\newblock Computing absolute and essential spectra using continuation.
\newblock {\em Phys. D}, 229(2):166--183, 2007.

\bibitem{ssabs}
B.~Sandstede and A.~Scheel.
\newblock Absolute and convective instabilities of waves on unbounded and large
  bounded domains.
\newblock {\em Phys. D}, 145(3-4):233--277, 2000.

\bibitem{ssessfront}
B.~Sandstede and A.~Scheel.
\newblock Essential instabilities of fronts: bifurcation, and bifurcation
  failure.
\newblock {\em Dyn. Syst.}, 16(1):1--28, 2001.

\bibitem{ssbasin}
B.~Sandstede and A.~Scheel.
\newblock Basin boundaries and bifurcations near convective instabilities: a
  case study.
\newblock {\em J. Differential Equations}, 208(1):176--193, 2005.

\bibitem{sscl}
B.~Sandstede and A.~Scheel.
\newblock Hopf bifurcation from viscous shock waves.
\newblock {\em SIAM J. Math. Anal.}, 39(6):2033--2052, 2008.

\bibitem{chfront}
A.~Scheel.
\newblock Spinodal decomposition and coarsening fronts in the {C}ahn-{H}illiard
  equation.
\newblock {\em J. Dynam. Differential Equations}, 29(2):431--464, 2017.

\bibitem{schneidernws}
G.~Schneider.
\newblock Validity and limitation of the {N}ewell-{W}hitehead equation.
\newblock {\em Mathematische Nachrichten}, 176(1):249--263.

\bibitem{schwarz1993morse}
M.~Schwarz.
\newblock {\em Morse homology}.
\newblock Progress in mathematics. Birkh{\"a}user, 1993.

\bibitem{digit}
R.~Sheth, L.~Marcon, M.~F. Bastida, M.~Junco, L.~Quintana, R.~Dahn, M.~Kmita,
  J.~Sharpe, and M.~A. Ros.
\newblock Hox genes regulate digit patterning by controlling the wavelength of
  a {T}uring-type mechanism.
\newblock {\em Science}, 338(6113):1476--1480, 2012.

\bibitem{sand}
A.~Stegner and J.~E. Wesfreid.
\newblock Dynamical evolution of sand ripples under water.
\newblock {\em Phys. Rev. E}, 60:R3487--R3490, Oct 1999.

\bibitem{sh}
J.~Swift and P.~C. Hohenberg.
\newblock Hydrodynamic fluctuations at the convective instability.
\newblock {\em Phys. Rev. A}, 15:319--328, Jan 1977.

\bibitem{baer}
T.~Teramoto, X.~Yuan, M.~B\"ar, and Y.~Nishiura.
\newblock Onset of unidirectional pulse propagation in an excitable medium with
  asymmetric heterogeneity.
\newblock {\em Phys. Rev. E}, 79:046205, Apr 2009.

\bibitem{thomas}
S.~Thomas, I.~Lagzi, F.~Moln{\'a}r~Jr, and Z.~R{\'a}cz.
\newblock Probability of the emergence of helical precipitation patterns in the
  wake of reaction-diffusion fronts.
\newblock {\em Physical Review Letters}, 110(7):078303, 2013.

\bibitem{knobloch}
S.~M. Tobias, M.~R.~E. Proctor, and E.~Knobloch.
\newblock Convective and absolute instabilities of fluid flows in finite
  geometry.
\newblock {\em Phys. D}, 113(1):43--72, 1998.

\bibitem{dipstripe}
{Wilczek, M.}, {Tewes, W. B.H.}, {Gurevich, S. V.}, {K\"opf, M. H.}, {Chi, L.
  F.}, and {Thiele, U.}
\newblock Modelling pattern formation in dip-coating experiments.
\newblock {\em Math. Model. Nat. Phenom.}, 10(4):44--60, 2015.

\end{thebibliography}
%

 
\end{document}